\documentclass[a4paper,fullpage,10pt]{article}
\title{A hyperbolic problem with non-local
constraint describing ion-rearrangement in a model for ion-lithium batteries.}
\author{Stefano Scrobogna\footnote{Universit\'e de Bordeaux, France.},\hspace{2cm}
 Juan J.L.Vel\'azquez\footnote{IAM, Bonn, Germany.}}

\usepackage{fourier}
\usepackage[T1]{fontenc}

\usepackage[english]{babel}
\usepackage{amssymb,fullpage,amsthm}
\usepackage{mathtools}

\DeclareMathAlphabet{\mathcal}{OMS}{cmsy}{m}{n}

 \numberwithin{equation}{section}
 
\theoremstyle{theorem}
\newtheorem{theorem}{Theorem}[section]
\newtheorem{prop}[theorem]{Proposition}
\newtheorem{lemma}[theorem]{Lemma}
\newtheorem{cor}[theorem]{Corollary}

\theoremstyle{definition}
\newtheorem{definition}[theorem]{Definition}
\newtheorem{rem}[theorem]{Remark}

\usepackage{hyperref}
\def\res{\mathop{\mbox{Res}}}  
\def\d{\mathop{\mbox{d}}}
\def\xi{\mathop{\mbox{\textit{t}}}}

 \begin{document}

 \maketitle
 
 \begin{abstract}
 In this paper we study the Fokker-Plank equation arising in a model which describes the charge and discharge process of ion-lithium batteries. In particular we focus our attention on slow reaction regimes with non-negligible entropic effects, which triggers the mass-splitting transition. At first we prove that the problem is globally well-posed. After that we prove a stability result under some hypothesis of improved regularity and a uniqueness result for the stability under some additional condition of the dynamical constraint driving the system.
 \end{abstract}

 \tableofcontents
 \setcounter{page}{1}
 \section{The model.}
 The model introduced in \cite{model}, to describe the charging and discharging of lithium-ion batteries, governs the evolution of a statistical ensemble of identical particles and is given by the non-local Fokker-Planck equation
 \begin{equation}\label{FP1}
 \tag{FP1}
 \tau\partial_t\rho(x,t)=\partial_x\Bigr(\nu^2\partial_x\rho(x,t)+\
 \left(H'(x)-\sigma(t)\right)\rho(x,t)\Bigr).
 \end{equation}
 Here $H$ is the free energy of a free particle  with thermodynamic state $x\in\mathbb{R}$. The probability density $\rho(\cdot,t)$ describes the state of the whole system at the time $t$, and $\sigma$ reflects that the system is subjected to some external forcing. Moreover, $\tau >0$ is the typical relaxation time of a single particle and $\nu >0$ accounts for entropic effect (stochastic fluctuation).\\
 The model \eqref{FP1} has two crucial features that cause highly non-trivial dynamics. First, the free energy $H$ is a double-well potential, hence there exist two different stable equilibria for each particle. Second, the system is not driven directly but via a time-dependent control parameter, in our case the parameter is the first moment of $\rho$, that means we impose the following dynamical constraint
 \begin{equation}
 \label{FP2}\tag{FP2}
 \int_\mathbb{R}x\rho(x,t)\text{d}x=\ell(t),
 \end{equation}
 where $\ell$ is some given function in time, and a direct calculation shows that \eqref{FP2} is equivalent to
 \begin{equation}
 \tag{FP3}\label{FP3}
 \sigma(t)=\int_\mathbb{R}H'(x)\rho(x,t)\text{d}x+\tau\dot{\ell}(t).
 \end{equation}
 The different relation between $\nu$ and $\tau$ may cause very different dynamical regimes, which have been studied in \cite{vel}. We are going to focus to what are called \textbf{slow reaction regimes}, in which we have the coupling
 \begin{equation}\label{coupling}
 \tau=\frac{a}{\log1/\nu},
 \end{equation}
 for some parameter $a\in \left(0,a_{\text{crit}}\right)$.\\
It has been seen in \cite{vel}
that under the assumption \eqref{coupling}, the solutions of \eqref{FP1}, \eqref{FP2} can be approximated, in the limit $\tau\to 0$ by means of some simpler problems. In particular,
during a suitable range of times, the solutions of \eqref{FP1}, \eqref{FP2} can be approximated by means of the solutions of the problem \eqref{mass splitting equation} -\eqref{ell} described later. It
turns out that during most of the times the function $\rho$ can be approximated by the sum of two Dirac masses. However, during the range of times in which the
approximation of \eqref{FP1}, \eqref{FP2} is valid, the mass of $\rho$ is distributed in a region
with size $x$ of order one. During those times the mass of $\rho$ is redistributed and,
in particular, the mass which is initially localized near the point $x_0$ is transported to two the neighborhood of two different points, denoted as $x_-, x_+$. 
This redistribution of the mass is described by the model \eqref{mass splitting equation} -\eqref{ell} and this is the issue considered in this paper. More details concerning the relationship
between the problems \eqref{FP1}, \eqref{FP2} and \eqref{mass splitting equation} -\eqref{ell} are given in \cite{vel}.\\
 
 The paper is divided as follows:
\begin{itemize}

\item
In Section \ref{intro} we give some simple assumption on the potential $H$ appearing in the equation \eqref{FP1} and we will introduce the problem that we study all along this paper. Moreover we give, in Definition \ref{condition H}, some assumptions that the potential $H$ has to satisfy in order that the problem makes sense and is well-posed. After that there is some technical result (namely Lemma \ref{dimostrazione condizione H}) which determinate some class of potentials which are admissible and compatible with the assumptions in Definition \ref{condition H}. \\

\item
In Section \ref{well posedness} we prove that
there is a unique (in a suitable space) solution of \eqref{mass splitting equation} -\eqref{ell} in the interval $\left(-\infty,\xi_0\right)$ with $\xi_0 >-\infty$ as explained heuristically in the paper \cite{vel}.\\

\item
In Section \ref{globalExistence} we prove that the problem is globally well posed in all $\mathbb{R}$, extending the local result proven in Section \ref{well posedness} performed in some interval of the form $\left[-\infty,\xi_0\right]$ to the whole real line.  \\

\item
In Section \ref{stability}, the last one, we finally prove that, up to sub-sequences (not relabeled) of re-scaled times $\left(\xi_m\right)_m$ such that $\xi_m\to\infty$ the problem converges to some equilibrium. At the end of such section, namely in Subsection \ref{uniqueness}, we prove as well that such equilibrium is unique and independent by the choice of the diverging sequence $\left(\xi_m\right)_m$ as long as the dynamical constraint $\ell(t)$ satisfies some condition in a vicinity of the critical time $\tilde{t}_2$ in which the mass splitting transition occurs.
\end{itemize}

\section{Introduction to the problem.}\label{intro}
 \subsection{Assumptions on the potential.}\label{assumptions on the potential}
In the paper we are going to assume the following hypothesis on the potential $H$
\begin{itemize}
\item[(A1)] $H$ is sufficiently smooth at least $\mathcal{C}^3_\text{loc}\left(\mathbb{R}\right)$, and
\begin{equation}\label{simple hypotesis on nonlinearity}
\begin{array}{r}
H'(x)= \alpha x +g(x),\\
\|H''\|_{L^\infty(\mathbb{R})}\leqslant c <\infty,
\end{array}
\end{equation}
with $g\in L^\infty$ and $\alpha >0$.
\item[(A2)] There exist constants $x_{\star\star}<x_\star<0<x^\star<x^{\star\star}$ and $\sigma_\star<0<\sigma^\star$ such that
\begin{enumerate}
\item $H'(x_\star)=H'(x^{\star\star})=\sigma^\star$.
\item$H'(x^\star)=H'(x_{\star\star})=\sigma_\star$.
\item for each $x\in\left(x_\star,x^\star\right)$ we have that $H'(x)\in\left(\sigma_\star,\sigma^\star\right)$,  $H''(x)<0$ and $H''(x)\geqslant 0$ for $x\in\left(x_\star,x^\star\right)^c$.\\
In particular the inverse of $H'$ has three strictly monotone branches
\begin{equation*}
\begin{array}{lll}
X_-&:\left(-\infty,\sigma^\star\right]&\to\left(-\infty,x_\star\right],\\
X_0&:\left[\sigma_\star,\sigma^\star\right]&\to\left[x_\star,x^\star\right],\\
X_+&:\left[\sigma_\star,\infty\right)&\to\left[\sigma_\star,\infty\right).
\end{array}
\end{equation*}
\end{enumerate}
\end{itemize}
In what follows we refer to $\left(x_\star,x^\star\right)^c$ as the stable interval, whereas the spinodal region $\left(x_\star,x^\star\right)$ is called the unstable interval. This nomenclature is motivated by the different properties of transport term in \eqref{FP1}. In both stable intervals adjacent characteristics approach each other exponentially fast, hence there is a strong tendency to concentrate mass into narrow peaks. In the unstable interval, however, the separation of adjacent characteristics de localizes at an exponential rate in time any peak with positive width.\\

\subsection{The mass splitting problem.}
\begin{rem}
This is a small remark about notations.\\ 
Suppose are given two functions
$$
F,G:U\subset\mathbb{R}^n\to\mathbb{R}.
$$
We write
$$
F\left(x_1,\ldots,x_n\right)\lesssim G\left(x_1,\ldots,x_n\right),
$$
if
$$
F\left(x_1,\ldots,x_n\right)\leqslant C\cdot G\left(x_1,\ldots,x_n\right),
$$
for some constant $C$ which is independent from the variables $\left(x_1,\ldots,x_n\right)\in U$.
\hspace*{\fill}$\blacklozenge$\medskip
\end{rem}

As explained in \cite{vel}, at the critical time $t_2\approx\tilde{t}_2$ we expect that the system undergoes a rapid transition from the unstable-stable configuration to a new stable-stable configuration. In order to describe this transition, in particular, to predict the mass distribution between the emerging stable peaks, we are going to study the \textbf{mass-splitting model}, which describes the peak widening effect (for a full description about the peak widening model we refer to \cite{vel}) on the rescaled time scale $s=(t-\tilde{t}_2)/\tau$ in the limit $\nu\to 0$. To make the notation simpler we will denote the re-scaled time $s$ simply $t$.  So, all in all the mass splitting problem consists of the following equations
\begin{align}
\label{mass splitting equation}\tag{MS1}
 \frac{\partial\rho}{\partial t}&=\frac{\partial}{\partial x}\left(\left(H'(x)-\sigma(t)\right)\rho\right),\\
 \tag{MS2} \label{rho with entropic effect}
 \rho(x,t)&\sim\frac{m}{\sqrt{\pi}e^{|H''(x_0)|t}}\exp\left(-\frac{(x-x_0)^2}{e^{2|H''(x_0)|t}}\right)+(1-m)\delta_{x_s}.
\end{align}
The asymptotics in \eqref{rho with entropic effect} takes place as long as $t\to -\infty$. The meaning of this asymptotic formula will be explained later, in Definition \ref{condition H}.\\
\eqref{rho with entropic effect} codify the necessity
 to impose asymptotic initial conditions at $\xi=-\infty$, which have to reflect the fact that the mass-splitting process starts in an unstable-stable configuration of a two-peaks model, and that the peak on the left is a rescaled Gaussian due to entropic randomness, which means as $\xi\to -\infty,s\in\left\lbrace+,-\right\rbrace$, although in the following we are going to assume $s=+$, the case $s=-$ is simply symmetric, and
 \begin{equation}\tag{MS3}
 \label{ell}
 \int x\rho \d x=\ell^\star\in\mathbb{R},
 \end{equation}
 this because the transition happens in a timespan so short that allows us to consider the dynamical constraints $\ell(t)\approx\ell^\star$ during the whole transition process.\\
The mass-splitting problem hence consists of equations \eqref{mass splitting equation},\eqref{rho with entropic effect} and \eqref{ell}.\\ 
  Moreover, as well as before we require to have well prepared initial data, i.e.
 \begin{eqnarray}
a=H''(x_0)<0\label{a},\\
b=H''(x_+)>0\label{b},\\
H'(x_0)=H'(x_+)=\sigma_0\in (\sigma_\star,\sigma^\star)\label{c},\\
m\in (0,1],\\
mx_0+(1-m)x_+=\ell^\ast,\label{d}
\end{eqnarray}
where indeed \eqref{a}--\eqref{c} codify the unstable-stable configuration in which the process starts, and \eqref{d} is a compatibility condition of the initial data with the dynamical constraint.\\ 

In the following it will turn out to be more convenient to use the distribution of $\rho$ instead of the density itself, to this end we define:
\begin{equation}
R(x,t)=\int_{-\infty}^x \rho (y,t)\d y. \label{distribuzione}
\end{equation}
Combining \eqref{mass splitting equation} and \eqref{distribuzione} we obtain the following:
\begin{equation}
\label{mass splitting equation 2}
\frac{\partial R}{\partial t}=\left(H'(x)-\sigma\right)\frac{\partial R}{\partial x}.
\end{equation}

Moreover notice that multiplying \eqref{mass splitting equation} by $x$ and integrating in the real line, we obtain, after some integration by parts and using \eqref{ell} and \eqref{distribuzione}:
\begin{equation} \label{definizione sigma}
\sigma(t)=\int H'(x)\frac{\partial R}{\partial x}\d x.
\end{equation}

 Integrating \eqref{rho with entropic effect} in the interval $(-\infty,x)$ we obtain the formal asymptotic:
 
\begin{equation}
\label{asymptotic R}
R(x,t)\sim m Q\left(\frac{x-x_0}{e^{-at}}\right)+(1-m)\chi_{\{x\geqslant x_+\}}(x),
\end{equation}
as $t\to -\infty$. This convergence takes place in $L^1$, although it is uniform in compact sets of $\mathbb{R}\backslash \{x_+\}$. In particular, considering a (small) neibourhood $\mathcal{U}_{x_0}$ of $x_0$, $R\left(x_0 + ye^at, t\right)\to mQ\left(y\right)$ as $t\to -\infty$ in $\mathcal{C}^0_{\text{loc}}\left(\mathcal{U}_{x_0}\right)$.
 The function $Q$ is given by
$$
Q(y)=\frac{1}{\sqrt{\pi}}\int_{-\infty}^y e^{-\eta^2}\d\eta.
$$

The aim of this paper is to show, under which conditions the transition of the system gives, at the rescaled time $t=+\infty$, a new stable-stable configuration, i.e. that there exist two non-spinodal states $\hat{x}_-,\hat{x}_+$ and a numerical value $\tilde{m}\in\left[-1,1\right]$ such that
\begin{align*}
H'(\hat{x}_-)&=H'(\hat{x}_+),&(m_--\tilde{m})\hat{x}_-+(m_++\tilde{m}
)\hat{x}_+&=\ell^\star=m_-x_-+m_+x_+.
\end{align*}

The existence and the uniqueness of $\tilde{m}$ is not obvious since the mass-splitting problem involves two subtle limits.  First one has to show that the asymptotic condition \eqref{rho with entropic effect} gives rise to a well-posed initial value problem at $t=-\infty$. Second, one has to guarantee that solutions do not drift as $t\to\infty$ along the connected one-parameter family of equilibrium solutions.\\
We need some general assumption on the function $H$ in order to prove that the problem is well-posed.

\begin{definition}\label{condition H}
We say that a function $H\in \mathcal{C}^2_{\text{loc}}(\mathbb{R})$ satisfies the \textbf{condition H} if, for any function $\phi(t)=\sigma(t)-\sigma_0\in \mathcal{C}^\infty(-\infty,T),T\in\mathbb{R}$ satisfying $|\phi(\xi)|\leqslant Me^{-\left( 2a+\delta  \right) t}, \delta >0$ and any $K\in\mathbb{R}$, there exist a unique solution to the ODE problem
\begin{equation} \label{equation X}
\left\lbrace
\begin{array}{lcr}
\frac{\d}{\d t}X(t,K)=-\left( H'(X(t,K)) -\sigma(t)\right),\\[6mm]
X(t,K)=x_0+Ke^{-at}+Y\left(K,t\right),&\text{as}&t\to -\infty.
\end{array}
\right.
\end{equation}
With $Y(K,t)=o\left(e^{(-a+\delta)t}\right)$.\\
Moreover for every fixed $t\in (-\infty,T]$ the transformation
$$
K\in\mathbb{R}\mapsto X(t,K),
$$
is a one-to-one transformation of the real line in a interval $\left(X_-(t),X_+(t)\right)$ where the functions $X_\pm(t)$ solves the  problem
\begin{equation}
\label{char equtaions}
\left\lbrace
\begin{array}{lcr}
\frac{\d}{\d t}X_\pm(t)=-\left( H'(X_\pm(t)) -\sigma(t)\right),\\[6mm]
X_\pm(t)=x_\pm +Y_\pm(t) , &\text{as}&t\to -\infty,
\end{array}
\right.
\end{equation}
With $Y_\pm(t)=o\left(e^{(-a+\delta)t}\right)$.\\
Where $H'(x_0)=H'(x_\pm)=\sigma_0$. Finally for any fixed $t\in (-\infty,T]$ we have
$$
\lim_{K\to \pm\infty}X(t,K)=X_\pm (t).
$$
\end{definition}
Definition \ref{condition H} is vacuous if the set of potentials satisfying such condition is empty. For this reason the following lemma gives an example of a class of potentials that satisfy the Condition H.
\begin{lemma}\label{dimostrazione condizione H}
Every function $H\in\mathcal{C}^3_{\text{loc}}(\mathbb{R})$ such that satisfies the compatibility condition
$
H'(x_-)=H'(x_0)=H'(x_+)=\sigma_0
$
 satisfy the conditions in Definition \ref{condition H}.
\end{lemma}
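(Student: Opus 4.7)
The approach is to recast the asymptotic boundary-value problem \eqref{equation X} as a Volterra-type fixed-point equation in a weighted $L^\infty$ space based at $t=-\infty$, separately near $x_0$ and near $x_\pm$, and then to extract the monotonicity and limit properties of $K\mapsto X(t,K)$ by a variational-equation and phase-plane argument.

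Concretely, for the family $X(t,K)$ I would write $X(t,K)=x_0+u(t)$ and Taylor-expand $H'$ around $x_0$, using $H'(x_0)=\sigma_0$ and $H''(x_0)=a$, so that
\begin{equation*}
\dot u+au=\mathcal N(u,t),\qquad \mathcal N(u,t):=\phi(t)-\bigl(H'(x_0+u)-H'(x_0)-au\bigr),
\end{equation*}
where $\mathcal N$ splits as a quadratic-in-$u$ term (Lipschitz, since $H\in\mathcal{C}^3_{\text{loc}}$) plus the forcing $\phi$. The prescribed asymptotic forces $e^{at}u(t)\to K$ as $t\to-\infty$, so Duhamel from $-\infty$ gives
\begin{equation*}
u(t)=Ke^{-at}+\int_{-\infty}^{t} e^{-a(t-s)}\,\mathcal N(u(s),s)\,ds.
\end{equation*}
I would then fix $\eta$ with $0<\eta<\min\bigl(\delta,|a|,2|a|-\delta\bigr)$ and set up a contraction for $V:=u-Ke^{-at}$ on the Banach space $\mathcal X_{T_0}:=\bigl\{V\in C((-\infty,T_0]):\sup_{t\leq T_0}e^{(a-\eta)t}|V(t)|<\infty\bigr\}$. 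The weight is chosen so that the quadratic contribution (of order $e^{-2at}$) and the forcing (of order $e^{-(2a+\delta)t}$) are both strictly sharper than $e^{(-a+\eta)t}$ and have vanishing $\mathcal X_{T_0}$-norm as $T_0\to-\infty$; this produces a unique fixed point, and the stronger bound $V=o(e^{(-a+\delta)t})$ is recovered a posteriori by re-substituting the crude bound into the integral. The construction of $X_\pm$ is identical, but near $x_\pm$ the linearized eigenvalue is $-H''(x_\pm)<0$ and the homogeneous solution $e^{-H''(x_\pm)t}$ blows up at $-\infty$, hence is excluded by the $o(e^{(-a+\delta)t})$ hypothesis; the Duhamel formula carries no adjustable parameter and a completely analogous weighted contraction yields unique $X_\pm$. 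Finally, both families are extended from $(-\infty,T_0]$ to $(-\infty,T]$ by standard Cauchy--Lipschitz theory, since the vector field is $\mathcal{C}^2_{\text{loc}}$ in $x$.

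For the structural statements, I would observe that $\partial_K X$ solves the linear variational equation $\tfrac{d}{dt}\partial_K X=-H''(X(t,K))\,\partial_K X$ with $\partial_K X\sim e^{-at}>0$ at $-\infty$, so $\partial_K X>0$ everywhere and $K\mapsto X(t,K)$ is strictly increasing; uniqueness of the Cauchy problem then sandwiches $X(\cdot,K)$ strictly between $X_-$ and $X_+$. For the limit $K\to+\infty$, monotonicity produces a pointwise limit $\Lambda(t):=\lim_{K\to+\infty}X(t,K)\leq X_+(t)$ that solves the same ODE; the autonomous phase portrait of $\dot x=-(H'(x)-\sigma_0)$ on $[x_0,x_+]$ leaves only $x_0$ and $x_+$ as possible backward asymptotes for $\Lambda$, and $\Lambda\to x_0$ would realise $\Lambda$ as some $X(\cdot,K_\infty)$ with $K_\infty$ finite, contradicting maximality in $K$, so $\Lambda\to x_+$ and the uniqueness proved above forces $\Lambda=X_+$; the case $K\to-\infty$ is symmetric.

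The main obstacle is the joint tuning of the exponential weights in the fixed-point step: a single exponent $\eta$ must render the quadratic remainder $\sim e^{-2at}$ and the prescribed forcing $\sim e^{-(2a+\delta)t}$ simultaneously strictly sharper than the target weight $e^{(-a+\eta)t}$, while still leaving room for a contraction constant below one. This is possible exactly because the hypothesis $|\phi|\leq Me^{-(2a+\delta)t}$ matches the quadratic resonance at the unstable equilibrium $x_0$; any weaker decay on $\phi$ would break closure of the iteration. A secondary delicate point is the identification of the limits of the $K$-parametrization, where the endpoints $X_\pm$ are only defined asymptotically and require a phase-portrait/maximality argument rather than a direct fixed-point computation.
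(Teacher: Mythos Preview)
Your fixed-point construction of $X(\cdot,K)$ and $X_\pm$ on a half-line $(-\infty,T_0]$ coincides with the paper's Steps~1--2, down to the choice of weighted space. The paper then secures a uniform-in-$K$ existence interval (Step~3) by a direct no-blowup argument using the sublinear structure $H'(x)=\alpha x+g(x)$; your appeal to ``standard Cauchy--Lipschitz'' does not by itself preclude blowup of $X_\pm$, and the sandwich $X_-<X(\cdot,K)<X_+$ that you invoke for the $K$-family only becomes available once $X_\pm$ are already known on all of $(-\infty,T]$---so you should reorder and insert the sublinearity input explicitly. The genuine divergence is in the identification $\lim_{K\to\pm\infty}X(t,K)=X_\pm(t)$. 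The paper's Step~4 tracks a single characteristic forward, showing it reaches an $\mathcal O(\eta)$-neighbourhood of $X_+$ in finite time by controlling the sign and size of $X'$; this is hands-on and somewhat lengthy. Your route via the variational equation for strict monotonicity in $K$ followed by a phase-portrait identification of the monotone limit $\Lambda$ is cleaner and more in the spirit of dynamical systems. Two points to tighten: the equation is genuinely non-autonomous, so the backward-limit dichotomy $\{x_0,x_+\}$ for $\Lambda$ needs an asymptotic-autonomy argument (legitimate here since $\phi\to 0$ exponentially); and the step ``$\Lambda\to x_+$ hence $\Lambda=X_+$'' requires the quantitative bound $\Lambda-x_+=o\bigl(e^{(-a+\delta)t}\bigr)$ so that $\Lambda$ lands in the uniqueness class of your fixed point for $X_+$---this follows by bootstrapping the integral equation at $x_+$, but should be said.
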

\begin{proof}
We are going to subdivide the proof in several steps
\begin{itemize}
\item[\textbf{Step 1}]\label{step1} We want to show that the solutions of \eqref{char equtaions} are well defined. We are going to do this for the characteristic $X_+$, the other case is similar. Let us write $X_+$ in the following form
$$
X_+(t)=x_++Y_+(t),
$$
where $Y_+$ is considered to be a perturbation. Than the equation \eqref{char equtaions} reads as
$$
Y_+'=-bY_++\phi(t)-\mathcal{O}\left(Y_+^2\right),
$$
and hence if such a solution exists than it has to take the following form
$$
Y_+(t)=\int_{-\infty}^t e^{-b(t-s)}\left[\phi(s)-\mathcal{O}\left(Y_+^2(s)\right)\right]\d s.
$$
Define the following operator
\begin{equation}
\label{T^+}
T^+[Y_+](t)=\int_{-\infty}^t e^{-b(t-s)}\left[\phi(s)-\mathcal{O}\left(Y_+^2(s)\right)\right]\d s,
\end{equation}
we show that $T^+$ admits a fixed point in the space 
$$
J(t_0,\delta)=\left\lbrace f:|f(t)|\leqslant e^{(-a+\delta)t},t<t_0,\delta >0\right\rbrace,
$$
endowed with the norm
$$
\|f\|_{J(t_0,\delta)}=\sup_{t<t_0}\left\lbrace\left|f(t)\right|\cdot e^{(a-\delta)t}\right\rbrace,
$$
for $t_0$ sufficiently negative and $\delta$ small.\\
To do so take $Y_+\in J(t_0,\delta)$ and evaluate
$$
|T^+[Y_+](t)|=\left|\int_{-\infty}^t e^{-b(t-s)}\left[\phi(s)-\mathcal{O}\left(Y_+^2(s)\right)\right]\d s\right|\\
\leqslant c_1 e^{-(2a+\delta)t)}+c_2 e^{2(-a+\delta)t}\leqslant e^{(-a+\delta)t},
$$
for $\xi$ sufficiently negative, since $c_i=c_i(M,a,b,\delta)$.\\
At this point take $Y_{+,1},Y_{+,2}\in J(t_0,\delta)$, and consider
\begin{equation}
\label{CoNtrazione}
\left|T^+[Y_{+,1}](t)-T^+[Y_{+,2}](t)\right|\leqslant\int_{-\infty}^t e^{-b(t-s)}\left|\mathcal{O}\left(Y_{+,1}^2(s)\right)-\mathcal{O}\left(Y_{+,2}^2(s)\right)\right|\d s.
\end{equation}
In particular we point out that the functions $\mathcal{O}\left(Y_{+,i}^2(s)\right)$ have an obvious explicit expression, as explained in equation \eqref{rho_+}, in particular, in our case
$$
\mathcal{O}\left(Y_{+,i}^2(s)\right)=\rho_{+,i}(t)=H'(x_+ +Y_{+,i}(t))-\left(H'(x_+)+bY_{+,i}(t)\right).
$$
With this consideration we can rewrite equation \eqref{CoNtrazione} as
\begin{multline}\label{COntrazione}
\left|T^+[Y_{+,1}](t)-T^+[Y_{+,2}](t)\right|\leqslant \\
\int_{-\infty}^t e^{-b(t-s)}\left(\left|H'(x_+ +Y_{+,1}(s))-H'(x_+ +Y_{+,2}(s))\right|+b\left|Y_{+,1}(s)-Y_{+,2}(s)\right|\right)\d s.
\end{multline}
In particular, since $H\in\mathcal{C}^3_{\text{loc}}(\mathbb{R})$ we can say that $H'\in\mathcal{C}^{0,1}_{\text{loc}}(\mathbb{R})$, which means that
$$
\left|H'(x_+ +Y_{+,1}(s))-H'(x_+ +Y_{+,2}(s))\right|\leqslant L \left|Y_{+,1}(s)-Y_{+,2}(s)\right|,
$$
for some $L>0$ that depends only on $H$ and on a compact $\mathcal{K}\subset\mathbb{R}$ sufficiently large.\\
With this consideration we can rewrite \eqref{COntrazione} as
\begin{multline*}
\left|T^+[Y_{+,1}](\xi)-T^+[Y_{+,2}](\xi)\right|\leqslant 
C\left(\mathcal{K},H,b\right)\int_{-\infty}^{\xi} e^{-b(\xi-s)}\left|Y_{+,1}(s)-Y_{+,2}(s)\right| \d s
\\ \lesssim
\|Y_{0,1}-Y_{0,2}\|_{J(\xi_0,\delta)}\cdot e^{-b\xi}\int_{-\infty}^{\xi} e^{(b-a+\delta)s}\d s\lesssim 
e^{(-a+\delta)\xi}\cdot \|Y_{0,1}-Y_{0,2}\|_{J(\xi_0,\delta)},
`\end{multline*}
proving that it is a contraction for $\xi\leqslant\xi_0$ and concluding the proof of the first step.
\item[\textbf{Step 2}] We want to prove that for any fixed $K$ there exists a $t_\star=t_\star (K)$ such that \eqref{equation X} has a unique solution.\\
To prove this we write
$$
X(K,\xi)=x_0+Ke^{-a\xi}+Y(K,\xi),
$$
the proof is the same as in the first step only considering $Y(K,\xi)$ belonging to the following space
$$
J(t_{0},\delta,K)=\left\lbrace f\in\mathcal{C}^0|f(\xi)|\leqslant C_K e^{(-a+\delta)\xi},\xi<t_{0},\delta >0\right\rbrace,
$$
with $\rho$ defined as in \eqref{def rho}, with the norm
$$
\|f\|_{J(\xi_0,\delta,K)}=\sup_{\xi<\xi_0}\left\lbrace\left|f(\xi)\right|\cdot e^{(a-\delta)\xi}\right\rbrace.
$$
We point out the fact that, a priori, could happen that 
$
\lim_{K\to\pm\infty}t_\star (K)=-\infty.
$
\item[\textbf{Step 3}] In this step we prove that there exist a time $\bar{\xi}=\bar{\xi}(\phi,\delta)$ such that the solutions of \eqref{equation X} are defined uniformly for all $\xi\leqslant\bar{\xi}$.\\
To do so fix a $K$ that we might as well consider to be positive. Since as long as $|X'(\tilde{\xi},K)|<\infty$ we can extend the solutions of \eqref{equation X} to the right of $ \tilde{\xi}$, see \cite[Theorem 1.1, Chapter 2]{CL} for such extensibility theorem, we suppose than that there exists a $\hat{\xi}=\hat{\xi}(K)$ such that
\begin{equation}
\label{infinito!}
X'(\xi,K)\xrightarrow{\xi\to\hat{\xi}}\infty.
\end{equation}
For instance $X'(\xi,K)\xrightarrow{\xi\to\hat{\xi}}+\infty$. The case $X'(\xi,K)\xrightarrow{\xi\to\hat{\xi}}-\infty$ is similar.\\
If this happens, than, since $X'=\sigma_0+\phi(\xi)-H'(X)\to +\infty$ then it must be true that
$$
-H'\left(X'(\xi,K)\right)\xrightarrow{\xi\to\hat{\xi}}+\infty,
$$
and hence considering the structure of $H'$
$$
X'(\xi,K)\xrightarrow{\xi\to\hat{\xi}}-\infty,
$$
which contradicts the hypothesis \eqref{infinito!}.
\item[\textbf{Step 4}] What is left to prove in this last step is that
$$
\lim_{K\to \pm\infty}X(\xi,K)=X_\pm (\xi).
$$
First we claim that for every $\xi\leqslant\bar{\xi}$ there exist a $K>0$ such that $X'(K,\xi)>0$. As usual for $K<0$ the procedure is similar.\\
Indeed as we have seen in the second step for every rescaled time $\xi$ there exists a $\kappa$ small enough such that the following approximation holds
\begin{equation}
\label{approssimazione caratt}
X(\kappa,\xi)=x_0+\kappa e^{-a\xi}+o\left(e^{-a\xi}\right).
\end{equation}
But then
$$
X'(\kappa,\xi)=\sigma(\xi)-H'\left(X(\kappa,\xi)\right)=-a\kappa e^{-a\xi}+o\left(e^{-a\xi}\right)>0.
$$
Where $a$ is defined in \eqref{a}. Since the approximation in equation \eqref{approssimazione caratt} is valid for every $\kappa$ as $\xi\to -\infty$ we can say that
$
\lim_{\xi\to-\infty}\kappa(\xi)=+\infty.
$
Now, considering the flux structure of \eqref{equation X}, we can argue that $X'(K,\xi_+)>X'(K,\xi)$ in some right neighbourhood of $\xi$. I want to show that, given an $\eta>0$ that we are going to consider small, such that $X'(K,\xi)>\eta$, the system reaches in a finite time the state
\begin{equation}
\label{stato}
X'\left(K,\hat{\xi}\right)=\eta,
\end{equation}
where it is not restrictive to set
$$
\hat{\xi}=\inf\left\lbrace {\xi}_+>\xi:X'\right(K,{\xi}_+\left)=\eta\right\rbrace.
$$
We remark the fact that, for every $\eta >0$ and small there exists a state satisfying \eqref{stato}, this because if we suppose $X'\left(K,\hat{\xi}\right)\geqslant\bar{\eta}>0$ for every $\hat{\xi}\geqslant \xi$ we would obtain that
$$
X\left(K,\hat{\xi}\right)-X\left(K,\xi\right)\geqslant \bar{\eta}\left(\hat{\xi}-\xi\right),
$$
hence, since $X_+(\xi)$ is bounded (see Step 1) there would exist a finite time $\tilde{\xi}$ such that
$
X\left(K,\tilde{\xi}\right)=X_+\left(\tilde{\xi}\right),
$
but this contradicts the uniqueness of solution for \eqref{equation X}.

We want to see that under these assumptions $X(K,s)$ reaches the configuration $X\left(K,\hat{\xi}\right)$ in a finite time.\\
Since the characteristics are uniformly bounded as we have seen in the third step we can say that
$
X\left(K,\hat{\xi}\right)-X\left(K,\xi\right)\leqslant C_0,
$
independently of $\xi,\hat{\xi}$. Moreover 
$$
X\left(K,\hat{\xi}\right)-X(K,\xi)=\int_{\xi}^{\hat{\xi}}X'(K,z)\d z=\left(\hat{\xi}-\xi\right)X'(K,\bar{t}).
$$
Where in the last equality we have used the mean value theorem. Moreover
$$
X'\left(K,\bar{t}\right)\geqslant\inf_{\zeta\in\left[\xi,\hat{\xi}\right]}X'(K,\zeta)>\frac{\eta}{N}>0,
$$
since $X'(K,s)>0$ in $\left[\xi,\hat{\xi}\right]$. Whence
$
(\hat{\xi}-\xi)\leqslant N\frac{C_0}{\eta}<\infty.
$
Once we have proved that such a state can be reached, in a possibly large, but finite time, and since all this procedure has been done independently by the starting time $\xi$ we can state that, taking $\xi$ sufficiently negative there exists a $\hat{\xi}$ such that
\begin{equation}\label{ausiliaria}
H'\left(X\left(K,\hat{\xi}\right)\right)=\sigma_0+\phi\left(\hat{\xi}\right)-\eta.
\end{equation}
But $\phi\left(\hat{\xi}\right)$ is an $o\left( e^{-(2a+\delta)\hat{\xi}}\right)$ function, hence we can assert that for $\hat{\xi}$ sufficiently negative
$$
\sigma_0-\frac{3}{2}\eta\leqslant
\sigma_0+\phi\left(\hat{\xi}\right)-\eta
\leqslant \sigma_0-\frac{\eta}{2},
$$
whence
\begin{equation}\label{ausiliaria2}
\sigma_0+\phi\left(\hat{\xi}\right)-\eta=\sigma_0+\mathcal{O}\left(\eta\right).
\end{equation}
At this point, since $X\left(K,\hat{\xi}\right)>X\left(K,\xi\right)$, and since the state $X\left(K,\hat{\xi}\right)$ has to satisfy \eqref{ausiliaria}, considering moreover \eqref{ausiliaria2} we can say that
$
X\left(K,\hat{\xi}\right)=x_++\mathcal{O}\left(\eta\right),
$
and, thanks to Step 1 we can assert
$
X_+\left(\hat{\xi}\right)=x_++o(1),
$
whence
\begin{equation}
\label{ausiliaria3}
\left|X\left(K,\hat{\xi}\right)-X_+\left(\hat{\xi}\right)\right|\leqslant \mathcal{O}\left(\eta\right)+o(1),
\end{equation}
and indeed the right hand side of \eqref{ausiliaria3} can be made as small as we want, concluding the proof.
\end{itemize}
\end{proof}

\section{Well posedness for $\xi\to -\infty$.}\label{well posedness}

 We want to see that there effectively exist some $\rho$ satisfying \eqref{mass splitting equation}, \eqref{rho with entropic effect}, formalized we want to prove the following proposition.
 \begin{prop}\label{well posedness at -infty}
 Suppose $H$ satisfies the conditions in Definition \ref{condition H}, and $H'''$ absolutely continuous in compact sets of $\mathbb{R}$. Suppose also that the following stability condition holds:
 \begin{equation}\label{stability condition for well posedness}
 (1-m)a+mb>0
 \end{equation}
 with $a$ and $b$ defined respectively as in \eqref{a} and \eqref{b}. For any $m\in (0,1]$ there exist a $T\in\mathbb{R}$ and a unique solution $R$ of \eqref{mass splitting equation 2}, \eqref{definizione sigma}, monotonically increasing in $x$ for any $t\in (-\infty,T]$ such that $R(x,t)\in [0,1]$ and satisfying \eqref{asymptotic R}, i.e.
 \begin{equation}
 \label{asymptotic R in L1}
 \left\|R(\bullet,\xi)-m\chi_{\{x\geqslant x_0\}}-(1-m)\chi_{\{x\geqslant x_+\}}\right\|_{L^1(\mathbb{R})}\to 0,
 \end{equation}
 as $\xi\to -\infty$.
 
 \end{prop}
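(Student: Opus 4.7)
Method of characteristics plus a Banach fixed-point argument for the non-local coupling $\sigma$. Writing $\sigma(t)=\sigma_0+\phi(t)$, I work on an interval $(-\infty,T]$ with $T$ sufficiently negative, in the weighted space
\begin{equation*}
\mathcal{X}_{T,M}=\Bigl\{\phi\in C((-\infty,T]):\|\phi\|_\mathcal{X}:=\sup_{t\le T}|\phi(t)|\,e^{(2a+\delta)t}\le M\Bigr\},
\end{equation*}
for some small $\delta>0$ and $M>0$. For every $\phi\in\mathcal{X}_{T,M}$, Condition H applies and supplies the characteristic flow $X(t,K)$, the extremal trajectories $X_\pm(t)$, and the asymptotic expansions $X(t,K)=x_0+Ke^{-at}+Y(K,t)$, $X_\pm(t)=x_\pm+Y_\pm(t)$, with remainders $o(e^{(-a+\delta)t})$. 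The quantitative bounds on $Y_\pm$ come from the Banach fixed point for the operator $T^+$ of Step 1 in the proof of Lemma \ref{dimostrazione condizione H}; the analogous analysis for $Y$ uses the same representation together with the quadratic source $\rho_0(Ke^{-at})\sim\tfrac{1}{2}H'''(x_0)K^2e^{-2at}$.

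\textbf{Construction of $R$ and the operator $\mathcal{T}$.} Given $\phi\in\mathcal{X}_{T,M}$, define
\begin{equation*}
R_\phi(x,t)=\begin{cases}0,& x\le X_-(t),\\ mQ(K(x,t)),& X_-(t)<x<X_+(t),\\ 1,& x\ge X_+(t),\end{cases}
\end{equation*}
with $K(x,t)$ the inverse of $X(t,\cdot)$. Then $R_\phi$ is monotone and $[0,1]$-valued, and, being constant along characteristics, solves \eqref{mass splitting equation 2}; the jump of size $1-m$ at $X_+(t)$ represents the $(1-m)\delta_{x_+}$ of \eqref{rho with entropic effect}. The self-consistency condition \eqref{definizione sigma} becomes, after the change of variable $x=X(t,K)$, a fixed-point equation $\phi=\mathcal{T}\phi$ with
\begin{equation*}
\mathcal{T}\phi(t):=m\int_{\mathbb{R}}\bigl(H'(X(t,K))-\sigma_0\bigr)Q'(K)\,dK+(1-m)\bigl(H'(X_+(t))-\sigma_0\bigr).
\end{equation*}

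\textbf{Fixed-point estimates.} Using the local absolute continuity of $H'''$, I Taylor-expand $H'$ to second order about $x_0$ and $x_+$. In the left-peak integral the term linear in $Ke^{-at}$ vanishes by $\int_\mathbb{R}KQ'(K)\,dK=0$; the $K$-independent part of $Y$, driven by $\phi$ through $Y^{\mathrm{lin}}(t)=\int_{-\infty}^t e^{-a(t-s)}\phi(s)\,ds$ plus higher-order terms, contributes $ma\,Y^{\mathrm{lin}}$; the $K^2$ piece coming from both the quadratic Taylor term and the nonlinear source in the $Y$-equation combines into a contribution of order $e^{-2at}$, proportional to $mH'''(x_0)$. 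The right peak contributes $(1-m)bY_+(t)+O(Y_+^2)$ with $Y_+(t)=\int_{-\infty}^t e^{-b(t-s)}\phi(s)\,ds$ plus higher-order terms. This shows $\mathcal{T}(\mathcal{X}_{T,M})\subset\mathcal{X}_{T,M}$ for $T$ sufficiently negative and $M$ suitably chosen. For $\phi_1,\phi_2\in\mathcal{X}_{T,M}$, differencing and linearizing yields a contraction factor of size
\begin{equation*}
\left|\frac{ma}{a+\mu}+\frac{(1-m)b}{b+\mu}\right|+o_{T\to-\infty}(1),\qquad \mu=-(2a+\delta).
\end{equation*}
The stability condition $(1-m)a+mb>0$ is used here to guarantee that this leading factor lies strictly below $1$ for some admissible $\delta$, after which Banach's theorem supplies a unique fixed point.

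\textbf{$L^1$ asymptotic and uniqueness.} For \eqref{asymptotic R in L1}, I split $\mathbb{R}$ into $(-\infty,X_-(t)]$, $(X_-(t),X_+(t))$, $[X_+(t),\infty)$. On the first and third subsets, $R_\phi$ is $0$ or $1$, and the $L^1$-difference with the target is supported on a set of measure $|X_\pm(t)-x_\pm|=o(1)$. On the middle subset, the change of variable $x=X(t,K)$ has Jacobian $e^{-at}(1+o(1))$ and the integrand reduces to $m\,|Q(K)-\chi_{\{K\ge 0\}}|+o(1)$, whose $dK$-integral is finite; the total size is $O(e^{-at})\to 0$ as $t\to-\infty$. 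Uniqueness in the prescribed class follows because any competing $R$ produces, via \eqref{definizione sigma}, a $\phi\in\mathcal{X}_{T,M}$ satisfying $\mathcal{T}\phi=\phi$, and hence coincides with the constructed solution. The principal obstacle is closing the contraction in the third step: it requires the symmetry cancellation of linear-in-$K$ terms, careful bookkeeping of the Gaussian-widening source at rate $e^{-2at}$, and the strict positivity $(1-m)a+mb>0$ to preclude resonance between the unstable mode at $x_0$ and the stable mode at $x_+$.
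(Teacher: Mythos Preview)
Your overall architecture (characteristics furnished by Condition H, $R$ defined by pushing $mQ$ along the flow with a jump at $X_+$, and a Banach fixed point for $\phi=\sigma-\sigma_0$ in a weighted space) coincides with the paper's. The difference lies in \emph{how} the fixed point is closed, and this is where your argument has a genuine gap.

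You apply the contraction directly to $\mathcal{T}$, whose linearisation in $\phi$ is the operator $\phi\mapsto ma\,\mathcal{U}_0\phi+(1-m)b\,\mathcal{U}_+\phi$, with $\mathcal{U}_0\phi(t)=\int_{-\infty}^t e^{-a(t-s)}\phi(s)\,ds$ and similarly for $\mathcal{U}_+$. On the weight $\mu=-(2a+\delta)=2|a|-\delta$ the piece $a\,\mathcal{U}_0$ alone has norm $|a|/(|a|-\delta)$. For $m=1$ (which is allowed, since then the stability condition reduces to $b>0$) this already exceeds $1$ for every $\delta>0$; taking $\delta\le 0$ is not an option because then the quadratic source $\sim e^{-2at}$ falls outside $\mathcal{X}_{T,M}$. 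Hence the direct iteration cannot contract at $m=1$. Moreover, the quantity you display, $\bigl|\tfrac{ma}{a+\mu}+\tfrac{(1-m)b}{b+\mu}\bigr|$, is the eigenvalue on the single mode $e^{\mu t}$, not the operator norm in the weighted sup space; since the combined kernel $ma\,e^{-au}+(1-m)b\,e^{-bu}$ changes sign, the true norm is strictly larger. Finally, the stability condition $(1-m)a+mb>0$ does \emph{not} govern this factor: for $m<1$ your displayed quantity is $<1$ regardless of the condition, and for $m=1$ it is $\ge 1$ despite the condition.

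What the paper does instead is to move the linear part to the left and invert it explicitly: writing $\mathcal{L}\phi=\phi-ma\,\mathcal{U}_0\phi-(1-m)b\,\mathcal{U}_+\phi$ and $\mathcal{L}\phi=W(\phi)$ with $W$ purely nonlinear, one computes the Green function of $\mathcal{L}$ via Laplace transform. The symbol $C(\theta)=1-\tfrac{ma}{a+\theta}-\tfrac{(1-m)b}{b+\theta}$ has roots $\theta_1=0$ and $\theta_2=-\bigl((1-m)a+mb\bigr)$; the stability condition is exactly $\theta_2<0$, which makes the Green function bounded. The fixed point is then run on $T[\phi]=\mathcal{L}^{-1}W(\phi)$, where $W$ is $O(e^{-2a\xi})$ and Lipschitz with constant $o_{T\to-\infty}(1)$ in $\mathcal{X}$, so both self-mapping and contraction are immediate for $T$ negative enough. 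To repair your argument you should insert this inversion step; the rest of your outline (construction of $R_\phi$, the cancellation $\int KQ'(K)\,dK=0$, and the $L^1$ asymptotics) is correct.
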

 \begin{proof} In order to prove the existence of solutions we define the following class of functions:
 $$
 K_{M,T}(\delta)=\left\lbrace
 \begin{array}{lcr}
 \sigma\in\mathcal{C}^0|\sigma(\xi)-\sigma_0|\leqslant M e^{-(2a+\delta)\xi}&,&\xi\in (-\infty,T]
 \end{array}
  \right\rbrace ,
 $$
for some $\delta >0$ and sufficiently small.\\ 
 
 Our goal is to show that for $T<0,|T|,M$ sufficiently large it is possible to define a transformation from $ K_{M,T}(\delta)$ to itself, whose fixed point is equivalent to solving \eqref{mass splitting equation 2}, \eqref{definizione sigma} and \eqref{asymptotic R in L1}.\\
 Given $\sigma\in K_{M,T}(\delta)$ and $K\in\mathbb{R}$ let $X(s,K,\sigma)=X(s,K)$ and $X_\pm(s,\sigma)=X_\pm(s)$ respectively the solutions of \eqref{equation X} and \eqref{char equtaions}, which are well defined due to the fact that $H$ satisfies the conditions in Definition \ref{condition H}. We define then $R(x,s,\sigma)$ as follows:
 \begin{equation}
 \label{R per contrazione}
 \left\lbrace
 \begin{array}{lcr}
 R(X(\xi,K,\sigma),\xi)=m Q(K)+(1-m)\chi_{\left\lbrace x\geqslant X_+(\xi,\sigma) \right\rbrace}\left(X(\xi,K,\sigma)\right),\\
 R(x,\xi,\sigma)=0,&\text{if}&x<X_-(\xi,\sigma),\\
 R(x,\xi,\sigma)=1,&\text{if}&x>X_+(\xi,\sigma).
 
 \end{array}
 \right.
 \end{equation}
 
 Our goal is to obtain a function $\sigma\in K_{M,T}(\delta)$ such that satisfies equation \eqref{definizione sigma}, with $R$ defined by means of \eqref{R per contrazione}. We want to obtain a linearised version of \eqref{definizione sigma} where it will be implicitly assumed that$\frac{\partial R}{\partial x}$ can be approximated for $\xi\to -\infty$ as two dirac masses at $X_0(\xi,\sigma)=X(\xi,0,\sigma)$ and $X_+(\xi,\sigma)$ respectively. To this end rewrite \eqref{definizione sigma} as:
 
$$ 
\sigma(\xi)=H'(X_0(\xi,\sigma))\int^{X_+(\xi,\sigma)}_{X_-(\xi,\sigma)}\frac{\partial R}{\partial x}(x,\xi,\sigma)\d x+\\
 \int^{X_+(\xi,\sigma)}_{X_-(\xi,\sigma)}\left[
 H'(x)-H'(X_0(\xi,\sigma))\right]\frac{\partial R}{\partial x}(x,\xi,\sigma)\d x+
 (1-m)H'(X_+(\xi,\sigma)),
$$
 whence we obtain
 \begin{equation}\label{whatever1}
 \sigma(\xi)=m H'(X_0(\xi,\sigma))+(1-m)H'(X_+(\xi,\sigma))+
 \int^{X_+(\xi,\sigma)}_{X_-(\xi,\sigma)}\left[
 H'(x)-H'(X_0(\xi,\sigma))\right]\frac{\partial R}{\partial x}(x,\xi,\sigma)\d x.
 \end{equation}
 
 At this point, using Taylor expansion we may write
 \begin{equation}
 \label{whatever2}
 \begin{array}{l}
 H'(X_0(\xi,\sigma))=H'(x_0)+H''(x_0)\left(X_0(\xi,\sigma)-x_0\right)+\rho_0(\xi,\sigma),\\
 H'(X_+(\xi,\sigma))=H'(x_+)+H''(x_+)\left(X_+(\xi,\sigma)-x_+\right)+\rho_+(\xi,\sigma),
 \end{array}
  \end{equation}
  where indeed $\rho_0,\rho_+$ are reminder with the obvious explicit expression:
  
 \begin{align} 
 \rho_0(\xi,\sigma)&= H'(X_0(\xi,\sigma))-\left(H'(x_0)+H''(x_0)\left(X_0(\xi,\sigma)-x_0\right) \right),\label{rho_0}\\
 \rho_+(\xi,\sigma)&=H'(X_+(\xi,\sigma))-\left( H'(x_+)+H''(x_+)\left(X_+(\xi,\sigma)-x_+\right)\right),\label{rho_+}
 \end{align}

 hence, considering \eqref{whatever2} into \eqref{whatever1} we obtain the following expression for $\sigma(\xi)$:
 \begin{multline*}
 \sigma(\xi)=\left[m H'(x_0)+(1-m)H'(x_+)\right]+ma\left(X_0(\xi,\sigma)-x_0\right)+(1-m)b\left(X_+(\xi,\sigma)-x_+\right)+\\
 \int^{X_+(\xi,\sigma)}_{X_-(\xi,\sigma)}\left[
 H'(x)-H'(X_0(\xi,\sigma))\right]\frac{\partial R}{\partial x}(x,\xi,\sigma)\d x+m \rho_0(\xi,\sigma)+(1-m) \rho_+(\xi,\sigma),
 \end{multline*}
 with $a,b$ as always defined by means of \eqref{a}, \eqref{b}.\\
 Notice at this point that $ \left[m H'(x_0)+(1-m)H'(x_+)\right]=\sigma_0 $, set $\phi=\sigma-\sigma_0$ and set
 \begin{equation}
 \label{gei}
 J(\xi,\sigma)=\int^{X_+(\xi,\sigma)}_{X_-(\xi,\sigma)}\left[
 H'(x)-H'(X_0(\xi,\sigma))\right]\frac{\partial R}{\partial x}(x,\xi,\sigma)\d x,
 \end{equation}
with this consideration we can write equation \eqref{whatever1} as
\begin{equation}
\label{whatever3}
\phi(\xi)=ma\left(X_0(\xi,\sigma)-x_0\right)+(1-m)b\left(X_+(\xi,\sigma)-x_+\right)\\
+m \rho_0(\xi,\sigma)+(1-m) \rho_+(\xi,\sigma)+J(\xi,\sigma).
\end{equation}

Our next aim is to express $J(\xi,\sigma)$ in a more suitable form,  to do so  integrate by parts equation \eqref{gei}  obtaining that
$$
J(\xi,\sigma)=-\int_{X_-(\xi,\sigma)}^{X_+(\xi,\sigma)}H''(x)R(x,\xi,\sigma)\text{d}x
+R\left(X_+(\xi,\sigma),\xi,\sigma\right)\Bigr[H'(X_+(\xi,\sigma)-H'(X_0(\xi,\sigma)\Bigr].
$$
Moreover, using the following change of variable
\begin{align*}
x&=X(s,K,\sigma),\\
\text{d}x&=\frac{\partial X(s,K,\sigma)}{\partial K}\text{d}K,
\end{align*}
and considering the Condition (H), that ensures that the interval $K\in\mathbb{R}$ is transformed by $X(s,\cdot,\sigma)$ into $\left(X_-(s,\sigma),X_+(s,\sigma)\right)$ we obtain, using moreover that $R\left(X_+(\xi,\sigma),\xi,\sigma\right)=m_0$
$$
J(\xi,\sigma)=-\int_\mathbb{R}H''\left(X(\xi,K,\sigma)\right)R\left(X(\xi,K,\sigma),\xi,\sigma\right)\frac{\partial X(s,K,\sigma)}{\partial K}\text{d}K+\\
+m_0\Bigr[ H'\left(X_+(\xi,\sigma)\right)-H'\left(X_0(\xi,\sigma)\right)\Bigr].
$$
and, using also \eqref{R per contrazione}
\begin{align*}
J(\xi,\sigma)&
=-m_0\int_\mathbb{R}H''\left(X(\xi,K,\sigma)\right)Q(K)\frac{\partial X(s,K,\sigma)}{\partial K}\text{d}K+m_0\Bigr[ H'\left(X_+(\xi,\sigma)\right)-H'\left(X_0(\xi,\sigma)\right)\Bigr]\\
&=-m_0\int_\mathbb{R}\frac{\partial}{\partial K}\bigr(H'\left(X(\xi,K,\sigma)\bigr)\right)Q(K)\text{d}K+m_0\Bigr[ H'\left(X_+(\xi,\sigma)\right)-H'\left(X_0(\xi,\sigma)\right)\Bigr]\\
&=m_0\int_\mathbb{R}\left[ H'\left(X(\xi,K,\sigma)\right)-H'\left(X_+(\xi,\sigma)\right)\right]Q'(K)\text{d}K\\
&+m_0\Bigr[ H'\left(X_+(\xi,\sigma)\right)-H'\left(X_0(\xi,\sigma)\right)\Bigr].
\end{align*}
Obtaining hence, in the end that
\begin{equation}\label{whatever4}
J(\xi,\sigma)=m\int_\mathbb{R}\left(H'(X(\xi, K,\sigma))-H'(X_0(\xi,\sigma))\right)Q'(K)\d K.
\end{equation}
At this point we want to linearise equation \eqref{equation X} and \eqref{char equtaions} using Taylor expansion for $H'$ around $x_0$ and $x_+$, obtaining the following equations
\begin{align}
\frac{\d}{\d\xi}(X_0(\xi,\sigma)-x_0)&=-a (X_0(\xi,\sigma)-x_0)+\phi(\xi)-\rho_0(\xi,\sigma),\label{linearizzazione X0}\\
\frac{\d}{\d\xi}(X_+(\xi,\sigma)-x_+)&=-b (X_+(\xi,\sigma)-x_+)+\phi(\xi)-\rho_+(\xi,\sigma),\label{linearizzazione X+}\\
\frac{\d}{\d\xi}X(\xi,K,\sigma)&=-a \left( X(\xi,K,\sigma)-x_0\right)+\phi(\xi)-\rho(\xi,K,\sigma),\label{linearizzazione X}
\end{align}
with $\rho$ in \eqref{linearizzazione X} reminder of the form
\begin{equation}
\label{def rho}
\rho(\xi,K,\sigma)=\left[H'(X(\xi,K,\sigma) )-H'(x_0)-a(X(\xi,K,\sigma)-x_0)  \right],
\end{equation}

and not a probability density.\\

We want to linearise also equation \eqref{whatever4}, to do so write first
$$
\begin{array}{rcl}
\left(H'(X(\xi, K,\sigma))-H'(X_0(\xi,\sigma))\right)&=&\left(H'(X(\xi, K,\sigma))-\sigma_0\right)-\left(H'(X_0(\xi,\sigma))-\sigma_0\right),\\[3mm]
&=&\left(H'(X(\xi, K,\sigma))-H'(x_0)\right)-\left(H'(X_0(\xi,\sigma))-H'(x_0)\right),

\end{array}
$$
whence we obtain
\begin{multline}\tag{a}\label{lin1}
J(\xi,\sigma)=m a \int_\mathbb{R}(X(\xi,K,\sigma)-x_0)Q'(K)\d K-m b \int_\mathbb{R}(X_+(\xi,\sigma)-x_+)Q'(K)\d K+\\
+m \int_\mathbb{R} \rho(\xi,K,\sigma)Q'(K)\d K-m \int_\mathbb{R}\rho_+(\xi,\sigma)Q'(K)\d K,
\end{multline}
and also, considering that $\int Q'(K)\d K=1$,
\begin{equation}
\label{lin2}\tag{b}
J(\xi,\sigma)=m a \int_\mathbb{R}(X(\xi,K,\sigma)-x_0)Q'(K) \d K- m a (X_0(\xi,\sigma)-x_0)+\\
m \int_\mathbb{R} \rho(\xi,K,\sigma)Q'(K)\d K - m \rho_0(\xi,\sigma).
\end{equation}
We remark the fact that \eqref{lin1} and \eqref{lin2} are different linearisation of $J(\xi,\sigma)$.\\

On the other hand we can remove the leading order in the asymptotic of $ X(\xi,K,\sigma)-x_0 $ in order to rewrite the problem as a linearised one plus a perturbative term. We write:
\begin{equation} \label{lin problem plus perturbative term}
\begin{array}{rcl}
X(\xi,K,\sigma)-x_0&=& Ke^{-a \xi}+Y(\xi,K,\sigma),\\
X_0(\xi,\sigma)-x_0&=&Y_0(\xi,\sigma),\\
X_+(\xi,\sigma)-x_+&=&Y_+(\xi,\sigma).
\end{array}
\end{equation}
Substituting in \eqref{whatever3} the results in \eqref{lin problem plus perturbative term} we obtain
$$
\phi(\xi)=m\left[H''(x_0)Y_0(\xi,\sigma)+\rho_0(\xi,\sigma)\right]+(1-m)\left[H''(x_+)Y_+(\xi,\sigma)+\rho_+(\xi,\sigma)\right]+J(\xi,\sigma).
$$
Considering moreover the result in equation \eqref{lin2}
\begin{multline*}
\phi(\xi)=m\left[H''(x_0)Y_0(\xi,\sigma)+\rho_0(\xi,\sigma)\right]+(1-m)\left[H''(x_+)Y_+(\xi,\sigma)+\rho_+(\xi,\sigma)\right]+\\
m a \int_\mathbb{R}Y(\xi,K,\sigma)Q'(K)\d K- m a Y_0(\xi,\sigma)+
m \int_\mathbb{R} \rho(\xi,K,\sigma)Q'(K)\d K - m \rho_0(\xi,\sigma).
\end{multline*}

So, at the end, considering that $\int Q'\text{d}K=1$ we obtain
\begin{multline}\label{phi qualcosa}
\phi(\xi)=m\left( a Y_0(\xi,\sigma )+\rho_0(\xi,\sigma  )\right)+(1-m)\left( b Y_+(\xi,\sigma )+\rho_+(\xi,\sigma ) \right) +\\
+m a \int_\mathbb{R} (Y(\xi,K,\sigma)-Y_0(\xi,\sigma))Q'(K)\d K+m \int_\mathbb{R} (\rho (\xi,K,\sigma)-\rho_0(\xi,\sigma))Q'(K)\d K,
\end{multline}
and indeed considering \eqref{lin problem plus perturbative term} and linearising  using Taylor expansion we obtain:
\begin{align}
\frac{\d}{\d\xi}Y_0(\xi,\sigma)&=-a Y_0(\xi,\sigma)+\phi(\xi)-\rho_0(\xi,\sigma),\label{evolution Y0}\\
\frac{\d}{\d\xi}Y_+(\xi,\sigma)&=-b Y_+(\xi,\sigma)+\phi(\xi)-\rho_+(\xi,\sigma),\label{evolution Y+}\\
\frac{\d}{\d\xi}Y(\xi,K,\sigma)&=-a Y(\xi,K,\sigma)+\phi(\xi)-\rho(\xi,K,\sigma).\label{evolution Y}
\end{align}
 \eqref{evolution Y0}--\eqref{evolution Y} are simply \eqref{linearizzazione X0}--\eqref{linearizzazione X} with the considerations in \eqref{lin problem plus perturbative term}.\\
At this point we define the following linear operators:
$$ 
\begin{array}{lcr}
\mathcal{U}_0 f (\xi)=\displaystyle\int_{-\infty}^{\xi} e^{- a (\xi-z)}f(z)\d z&,&\mathcal{U}_+ f (\xi)=\displaystyle\int_{-\infty}^{\xi} e^{- b (\xi-z)}f(z)\d z.

\end{array} 
$$

Apply the variation of constants method to \eqref{evolution Y0}--\eqref{evolution Y}, obtaining
$$
\begin{array}{rcl}
Y_0(\xi,\sigma)&=&\mathcal{U}_0 \phi (\xi)-\mathcal{U}_0 \rho_0 (\xi,\sigma),\\
Y_+(\xi,\sigma)&=&\mathcal{U}_+ \phi (\xi)-\mathcal{U}_+ \rho_+ (\xi,\sigma),\\
Y(\xi,K,\sigma)&=&\mathcal{U}_0 \phi (\xi)-\mathcal{U}_0 \rho (\xi,K,\sigma).

\end{array}
$$
Using these equation we can transform \eqref{phi qualcosa} into:
\begin{multline}
\phi(\xi)=m \left[ a  (\mathcal{U}_0 \phi (\xi) - \mathcal{U}_0 \rho_0 (\xi,\sigma))+\rho_0 (\xi,\sigma)\right]+(1-m)\left[ b  (\mathcal{U}_+ \phi (\xi) - \mathcal{U}_+ \rho_+ (\xi,\sigma))+\rho_+ (\xi,\sigma)\right]\\
+m a \int_\mathbb{R} \left(\mathcal{U}_0 \rho_0(\xi,\sigma) -\mathcal{U}_0 \rho (\xi,K,\sigma)\right) Q'(K)\d K+m\int_\mathbb{R} \left(  \rho (\xi,K,\sigma)-\rho_0(\xi,\sigma)\right) Q'(K)\d K.
\end{multline}

Rearranging the terms we obtain the following equation:
\begin{equation}
\mathcal{L}\phi(\xi)=W(\xi),
\label{starting equation}
\end{equation} 
where 
\begin{equation}
\mathcal{L}\phi(\xi)=\phi(\xi)-ma\int_{-\infty}^{\xi} e^{-a(\xi-z)}\phi(z)\d z-(1-m)b\int_{-\infty}^{\xi} e^{-b(\xi-z)}\phi(z)\d z,
\label{definition L}
\end{equation}
and
\begin{multline}\label{W}
W(\xi)=m\left( -a\mathcal{U}_0[\rho_0](\xi,\sigma)+\rho_0(\xi,\sigma)\right)+(1-m)\left(-b\mathcal{U}_+[\rho_+](\xi,\sigma)+\rho_+(\xi,\sigma)\right)+\\
+ma\int_\mathbb{R}\left(\mathcal{U}_0[\rho_0](\xi,\sigma)-\mathcal{U}_0[\rho](\xi,K,\sigma) \right)Q'(K)\d K+m\int_\mathbb{R}\left(\rho(\xi,K,\sigma)-\rho_0(\xi,\sigma) \right)Q'(K)\d K.
\end{multline}

As it is shown in Lemma \ref{invertibilità L} the operator $\mathcal{L}$ defined in \eqref{definition L} can be inverted. Moreover, as proved in Lemma \ref{funzione di green per operatore}, we can explicitly solve the equation
$$
\mathcal{L}G(x)=\delta_a(\xi),
$$
whose solution is given in \eqref{funzione Green}.\\

Thanks to the considerations above, using the Green function $G$ we can write the solutions of \eqref{starting equation} as
\begin{equation}\label{resolutive equation}
\phi(\xi)=\int_{-\infty}^{\xi} G(\eta-\xi)W(\eta) d\eta
\end{equation}
we want to solve this equation using a fixed point argument. To do so recall the space
$$
K_{M,\xi_0}(\delta)=\left\lbrace \phi\in\mathcal{C}^0:|\sigma(\xi)-\sigma_0|=|\phi(\xi)|\leqslant M e^{-(2a+\delta)\xi},M\in\mathbb{R}, \xi\in \left(-\infty,t_0\right]\right\rbrace,
$$
Endowed with the norm
$$
\|\phi\|_{K_{M,\xi_0}(\delta)}=\sup_{\xi\in (-\infty,\xi_0]}|\phi(\xi)|\cdot e^{(2a+\delta)\xi},
$$
we want to apply Banach fixed point theorem for the operator
\begin{equation}\label{operatorT}
T[\phi](\xi)=\int_{-\infty}^{\xi} G(\eta-\xi)W(\eta) \d\eta,
\end{equation}
where the function $W(\xi)=W(\xi,\phi)$, i.e. there is a direct non-linear dependence on the function $\phi$. To do so we proceed as usual in two steps,
\begin{enumerate}
\item Check that $T$ maps $K_{M,T}(\delta)$ to itself for $\xi_0$ sufficiently negative.\label{Banach 1}
\item Check that $T$ is a contraction. \label{Banach 2}
\end{enumerate}
Indeed in order to verify \ref{Banach 1} and \ref{Banach 2} we need some estimate for $|W(\xi)|$, which are given in Proposition \ref{prop estimate W}, i.e. equation \eqref{estimate W} tells us that
\begin{equation*}
|W(\xi)|\leqslant C e^{-2a\xi},
\end{equation*}
for some positive $C=C(a,b,m,\delta)<\infty$ uniformly in $\delta$ and $\xi<\xi_0$ sufficiently negative.\\

This is the first ingredient in order to prove the fixed point for $T$.\\
Since $G\in K_{M,\xi_0}(\delta)$ for each $\delta >0$, then
$$
|T[\phi](\xi)|=\left|\int_{-\infty}^{\xi} G(\eta-\xi)W(\eta)\d\eta\right|\leqslant\\
 CMe^{(2a+\delta)\xi}\int_{-\infty}^{\xi} e^{-(4a+\delta)}\eta \d\eta=-\frac{CM}{4a+\delta}e^{-2a\xi}=-\frac{CMe^{\delta\xi}}{4a+\delta}e^{-(2a+\delta)\xi},
$$
and indeed
$$
\frac{Ce^{\delta\xi}}{4a+\delta}<M,
$$
for $\xi$ sufficiently negative. This proves that the operator $T$ effectively maps $K_{M,\xi_0}(\delta)$ onto itself if $\xi_0$ is sufficiently negative.\\

To prove that $T$ is indeed a contraction on $K_{M,\xi_0}(\delta)$ we will have to repeat some calculations which are made explicit in Proposition \ref{prop deltaW}.\\

Considering the bound given in \eqref{deltaW} we can evaluate
\begin{multline}\label{last estimate}
\left| T[\phi_1](\xi)-T[\phi_2](\xi)\right|=\left|\int_{-\infty}^{\xi} G(\eta-\xi)\left[W(\eta,\phi_1)-W(\eta,\phi_2)\right]\d\eta\right|\\
\leqslant \int_{-\infty}^{\xi} |G(\eta-\xi)||\left[W(\eta,\phi_1)-W(\eta,\phi_2)\right]|\d\eta\lesssim \left\|\phi_1-\phi_2\right\|_{K_{M,\xi_0}(\delta)}\int_{-\infty}^{\xi} |G(\eta-\xi)|e^{-(2a+\delta)\eta}\d\eta\\
\overset{G\in K_{M,\xi_0}(\delta)}{\leqslant} \left\|\phi_1-\phi_2\right\|_{K_{M,\xi_0}(\delta)} \int_{-\infty}^{\xi} e^{-(4a+2\delta)\eta}d\eta\lesssim \left\|\phi_1-\phi_2\right\|_{K_{M,\xi_0}(\delta)}\cdot e^{-2(2a+\delta)\xi}.
\end{multline}
Multipling both sides of \eqref{last estimate} for $e^{(2a+\delta)\xi}$, and taking the sup for $\xi\leqslant\xi_0$ we obtain
$$
\left\| T[\phi_1](\xi)-T[\phi_2](\xi)\right\|_{K_{M,\xi_0}(\delta)}\lesssim e^{-(2a+\delta)\xi_0}\left\|\phi_1-\phi_2\right\|_{K_{M,\xi_0}(\delta)},
$$
which guarantees that $T$ is a contraction concluding the proof of the theorem.
 \end{proof}

\section{Global well posedness.}\label{globalExistence}
From now on the re-scaled time $\xi$ is going to be called $t$.\\
 The following lemma is the starting point of our analysis.
\begin{lemma} \label{sigma as characteristic}
Consider $\sigma$ as defined in \eqref{definizione sigma}, than, $\sigma$ is completely determined by the evolution of the characteristics $X_\pm$.
\end{lemma}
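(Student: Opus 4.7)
The plan is to reduce the lemma to a direct algebraic manipulation of the characteristic ODEs \eqref{char equtaions}, rather than working with the integral definition \eqref{definizione sigma} of $\sigma$. The key observation is that the characteristic equations can be read in two directions: one can think of them as ODEs for $X_\pm$ given $\sigma$, or equivalently as a pointwise formula for $\sigma$ given $X_\pm$ and $\dot X_\pm$.

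Concretely, I would first note that $X_\pm$ are $\mathcal{C}^1$ in $t$ by virtue of being solutions of an ODE with locally $\mathcal{C}^1$ right-hand side (which holds since $H\in\mathcal{C}^3_{\text{loc}}$), so $\dot X_\pm(t)$ makes pointwise sense. Then I would rearrange \eqref{char equtaions} as
\begin{equation*}
\sigma(t) = \dot X_\pm(t) + H'\bigl(X_\pm(t)\bigr),
\end{equation*}
which manifestly expresses $\sigma(t)$ as a functional of the extreme characteristics alone. This establishes the statement, since it shows that knowledge of $t\mapsto X_+(t)$ (or equivalently of $t\mapsto X_-(t)$) is enough to reconstruct $\sigma$ without ever referring back to the distribution $R$ or computing the integral in \eqref{definizione sigma}.

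The only non-trivial point, and the place where I would spend a little care, is the consistency between the ``$+$'' and ``$-$'' formulas: both $\dot X_+ + H'(X_+)$ and $\dot X_- + H'(X_-)$ must yield the same $\sigma(t)$. This is not an extra hypothesis but follows from the fact that $X_+$ and $X_-$ are, by construction in Proposition \ref{well posedness at -infty}, characteristics of the \emph{same} transport equation \eqref{mass splitting equation 2} with the \emph{same} forcing $\sigma$; in particular, once $\sigma$ is determined uniquely by the well-posedness theory, both identities hold simultaneously. Alternatively one can verify consistency internally by using the conservation of the first moment \eqref{ell}, which implies $m\int \dot X(t,K)Q'(K)\,\d K+(1-m)\dot X_+(t)=0$, and then combining it with the change of variable $x=X(t,K,\sigma)$ in \eqref{definizione sigma}; this longer route recovers exactly the same identity $\sigma=\dot X_\pm+H'(X_\pm)$.

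I do not anticipate a genuine obstacle: once the characteristic ODE is rewritten as a formula for $\sigma$, the content of the lemma is essentially a tautology, and the only real work is the (minor) regularity/consistency check above. The point of isolating this as a separate lemma is presumably that it will be used in Section \ref{globalExistence} to close an evolution system purely in terms of $X_\pm$, bypassing the full density $R$.
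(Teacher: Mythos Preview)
Your argument is correct for the literal statement of the lemma: rearranging the characteristic ODE \eqref{char equtaions} as $\sigma(t)=\dot X_\pm(t)+H'(X_\pm(t))$ does show that $\sigma$ is recoverable from the trajectory $t\mapsto X_\pm(t)$, and your regularity and consistency remarks are fine.

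However, the paper proceeds differently, and the difference matters for how the lemma is used downstream. Rather than rearranging the ODE, the paper starts from the integral definition \eqref{definizione sigma}, integrates by parts, and records the two formulas
\[
\sigma(t)=H'(X_+(t))-\int_{X_-(t)}^{X_+(t)}H''(x)R(x,t)\,\d x,
\qquad
\sigma(t)=H'(X_-(t))+\int_{X_-(t)}^{X_+(t)}H''(x)\bigl(1-R(x,t)\bigr)\,\d x,
\]
labeled \eqref{sigma 1} and \eqref{sigma 2}. Substituting these back into $\dot X_\pm=-H'(X_\pm)+\sigma$ produces the $\sigma$-free system \eqref{system for char}, whose right-hand side involves only $X_\pm$ and the distribution $R\in[0,1]$. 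That system is what drives the global existence argument in Proposition~\ref{global existence for char} (Lipschitz, sublinear right-hand side) and the $L^\infty$ bounds in Proposition~\ref{L inf bounds for char}, and the labeled equations \eqref{sigma 1}--\eqref{sigma 2} are cited explicitly there. Your identity $\sigma=\dot X_\pm+H'(X_\pm)$, by contrast, is the characteristic ODE itself read backwards; plugging it into the other characteristic equation only yields $\dot X_- -\dot X_+=H'(X_+)-H'(X_-)$, which does not close without further input and does not furnish the integral expressions the rest of Section~\ref{globalExistence} relies on. So while your proof establishes the lemma, it does not deliver the content the paper actually extracts from it.
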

\begin{proof}
Apply integration by parts obtaining 
\begin{equation} \label{sigma 1}
\sigma(t)=H'(X_+(t))-\int_{X_-(t)}^{X_+(t)}H''(x)R(x,t)\d x.
\end{equation}
In the same way we can obtain the following equation
\begin{equation}
\label{sigma 2}
\sigma(t)=H'(X_-(t))+\int_{X_-(t)}^{X_+(t)}H''(x)\left(1-R(x,t)\right)\d x.
\end{equation}
\end{proof}
In view of the simple computations in Lemma \ref{sigma as characteristic} it is equivalent to prove global existence for $ \sigma $ or for the characteristics $X_\pm$. Recall that
$$
\left\lbrace
\begin{array}{l}
X_+'(t)=-H'(X_+(t))+\sigma(t),\\
X_-'(t)=-H'(X_-(t))+\sigma(t),
\end{array}
\right.
$$
considering \eqref{sigma 1} and \eqref{sigma 2} we obtain the following new differential equations for the characteristics
\begin{equation}\label{system for char}
\left\lbrace
\begin{array}{l}
X_+'(t)=-\displaystyle\int_{X_-(t)}^{X_+(t)}H''(x)R(x,t)\d x,\\[3mm]
X_-'(t)=\displaystyle\int_{X_-(t)}^{X_+(t)}H''(x)\left(1-R(x,t)\right)\d x.
\end{array}
\right.
\end{equation}
Which is very interesting since we see explicitly in \eqref{system for char} that the evolution of $X_\pm$ is not influenced by $\sigma$, hence we can write \eqref{system for char} as $X'=f(X,t)$ with $f:\mathbb{R}^2\times\mathbb{R}\rightarrow\mathbb{R}^2$ and has the following explicit formulation
\begin{equation}
\label{f sistema}
f(x,t)=\left(\begin{array}{lcr}
-\displaystyle\int_{x_2}^{x_1}H''(x)R(x,t)\d x&,&
\displaystyle\int_{x_2}^{x_1}H''(x)\left(1-R(x,t)\right)\d x
\end{array}\right).
\end{equation}

At this point we can start to study the system \eqref{system for char}.
\begin{prop} \label{global existence for char}
Under the assumptions \eqref{simple hypotesis on nonlinearity} the system \eqref{system for char} admits unique solution globally in $\mathbb{R}$.
\end{prop}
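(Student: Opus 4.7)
The plan is to extend the local solution obtained in Proposition \ref{well posedness at -infty} forward in time by a continuation argument, exploiting the dissipative structure of the equation for the gap $D(t):=X_+(t)-X_-(t)$ that is built into assumption \eqref{simple hypotesis on nonlinearity}.

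First I would check local existence and uniqueness for the reformulated system \eqref{system for char} starting from the time $T$ given by Proposition \ref{well posedness at -infty}: using the parametrisation $K\mapsto X(t,K,\sigma)$ supplied by Condition H, the integrals in \eqref{f sistema} can be rewritten as integrals against the fixed measure $m\,Q'(K)\,\d K+(1-m)\delta_{+\infty}$, and the Lipschitz dependence of the integrand on $(X_-,X_+)$ follows from $H''\in L^\infty(\mathbb{R})$ together with the smooth dependence of the flow on parameters. Cauchy-Lipschitz then produces a unique solution on a maximal forward interval $(T,T^\ast)$.

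The heart of the proof is to show $T^\ast=+\infty$. Adding the two equations in \eqref{system for char} gives
\begin{equation*}
D'(t)=-\int_{X_-(t)}^{X_+(t)}H''(x)\,\d x=-\bigl(H'(X_+(t))-H'(X_-(t))\bigr),
\end{equation*}
and substituting the decomposition $H'(x)=\alpha x+g(x)$ from \eqref{simple hypotesis on nonlinearity} produces the damped differential inequality
\begin{equation*}
D'(t)+\alpha D(t)\leqslant 2\|g\|_{L^\infty(\mathbb{R})}.
\end{equation*}
Grönwall's lemma then yields the uniform bound $D(t)\leqslant\max\bigl\{D(T),\,2\|g\|_{L^\infty(\mathbb{R})}/\alpha\bigr\}$ on $[T,T^\ast)$. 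Combined with $|H''|\leqslant c$ and $0\leqslant R(x,t)\leqslant 1$, this gives $|X'_\pm(t)|\leqslant c\,D(t)$, so neither characteristic can escape to infinity in finite time, and the standard extensibility criterion (\cite[Thm.~1.1, Ch.~2]{CL}) forces $T^\ast=+\infty$.

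The most delicate step I expect is the rigorous verification that $f$ in \eqref{f sistema} is locally Lipschitz in $(X_-,X_+)$, since $R$ is not prescribed a priori but is transported along characteristics: one has to track how the parametrisation $K\mapsto X(t,K)$ varies under perturbations of the endpoints, controlling $\partial_K X$ and its continuous dependence on the data via Condition H. Everything else reduces to the single coercive identity for $D(t)$ together with standard ODE continuation.
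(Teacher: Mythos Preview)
Your argument is correct, but the paper's proof is much terser and takes a slightly different route: it simply observes that the right-hand side $f$ in \eqref{f sistema} is Lipschitz and sublinear in the spatial variable (since $|H''|\leqslant c$ and $0\leqslant R\leqslant 1$ give $|f(x,t)|\leqslant c\,|x_1-x_2|$), and then invokes a standard global existence theorem \cite[Theorem~2.17]{teschl}. You instead derive the explicit damped equation for the gap $D=X_+-X_-$ and close by Gr\"onwall plus continuation. The two arguments rest on the same estimate, packaged differently; what your version buys is that you obtain $D\in L^\infty(\mathbb{R})$ as a byproduct, which the paper establishes only in the \emph{subsequent} lemma on $\Delta X$. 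One minor slip: you write ``adding the two equations'' but actually compute $X_+'-X_-'$; the resulting identity is correct regardless. Your caveat about verifying Lipschitz dependence of $f$ on $(X_-,X_+)$ through the transported $R$ is well taken --- the paper asserts this without further comment.
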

\begin{proof}
The function $f$ is Lipschitz continuous and sub-linear in the variable $x$, hence apply \cite[Theorem 2.17]{teschl} obtaining our global result.
\end{proof}

\

The global existence statement proved in Proposition \ref{global existence for char} gives, thanks to the computations performed in Lemma \ref{sigma as characteristic} a global existence result also for the function $\sigma$, which we showed that the evolution of $\sigma$ can be completely described in terms of the evolution of the characteristics $X_\pm$.\\
We would like to refine even further this result, namely we would like to be able to give some $L^\infty$ bound for the characteristics $X_\pm$ and, consequently, for the function $\sigma$ that we might need in the future.\\

\begin{lemma}
Define $\Delta X(t)=X_+(t)-X_-(t)$, and consider a potential $H$ which satisfies \eqref{simple hypotesis on nonlinearity}. In this case $\Delta X\in L^\infty (\mathbb{R})$.
\end{lemma}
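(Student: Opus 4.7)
The plan is to derive a scalar ODE for $\Delta X$ alone by subtracting the two equations in \eqref{system for char}, and then to exploit the sign of $\alpha$ in the decomposition $H'(x)=\alpha x+g(x)$, together with $g\in L^\infty$, to obtain an a priori $L^\infty$ bound.

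First, observe that subtracting the two lines of \eqref{system for char} makes the unknown $R$ disappear: the measure $R\, H''$ and $(1-R)\, H''$ add up to $H''$, so
\begin{equation*}
\Delta X'(t)=X_+'(t)-X_-'(t)=-\int_{X_-(t)}^{X_+(t)}H''(x)\,dx=-\bigl(H'(X_+(t))-H'(X_-(t))\bigr).
\end{equation*}
Using assumption (A1) to write $H'(x)=\alpha x+g(x)$ with $\alpha>0$ and $g\in L^\infty(\mathbb{R})$, this becomes a linear ODE with a bounded, nonautonomous forcing:
\begin{equation*}
\Delta X'(t)=-\alpha\,\Delta X(t)+\bigl[g(X_-(t))-g(X_+(t))\bigr],\qquad\bigl|g(X_-(t))-g(X_+(t))\bigr|\leqslant 2\|g\|_{L^\infty}.
\end{equation*}

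Second, I would integrate this linear equation by the variation of constants formula between an arbitrary $t_0<t$ and $t$:
\begin{equation*}
\Delta X(t)=e^{-\alpha(t-t_0)}\Delta X(t_0)+\int_{t_0}^{t}e^{-\alpha(t-s)}\bigl[g(X_-(s))-g(X_+(s))\bigr]\,ds,
\end{equation*}
and then let $t_0\to-\infty$. Thanks to the asymptotic conditions \eqref{char equtaions} (which guarantee $X_\pm(t_0)\to x_\pm$ as $t_0\to-\infty$), the quantity $\Delta X(t_0)$ stays bounded, so the first term on the right tends to zero; the integral term is controlled uniformly by $2\|g\|_{L^\infty}/\alpha$. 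This yields
\begin{equation*}
|\Delta X(t)|\leqslant\frac{2\|g\|_{L^\infty}}{\alpha}\qquad\text{for every }t\in\mathbb{R},
\end{equation*}
which proves the claim. (Monotonicity of $R$ in $x$ together with uniqueness of the characteristic ODE actually ensures $\Delta X(t)\geqslant 0$, so one could drop the absolute value.)

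There is no real obstacle: the point is purely the algebraic cancellation that eliminates $R$ from $\Delta X'$ and the dissipative sign $-\alpha$ coming from (A1). The only subtlety worth checking carefully is the initialisation at $t_0=-\infty$, which is where Proposition \ref{well posedness at -infty} (and the asymptotics for $X_\pm$ that come with it) is used to conclude that $\Delta X(t_0)$ stays bounded as $t_0\to-\infty$; without (A1) the forcing $g(X_-)-g(X_+)$ need not be bounded and the argument would collapse, so the decomposition $H'=\alpha\mathrm{id}+g$ with $g\in L^\infty$ is essential.
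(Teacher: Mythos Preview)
Your argument is correct. The paper actually states this lemma without proof, but your approach---subtracting the two equations in \eqref{system for char} so that the unknown $R$ cancels and one is left with the dissipative linear ODE $\Delta X'=-\alpha\,\Delta X+\bigl[g(X_-)-g(X_+)\bigr]$, then integrating by variation of constants from $t_0\to-\infty$ using $\Delta X(t_0)\to x_+-x_-$---is the natural one and is almost certainly what the authors had in mind.
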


Now we can obtain the $L^\infty$ estimates for $\sigma,X_\pm$.
\begin{prop}\label{L inf bounds for char}
Consider a potential $H$ satisfying \eqref{simple hypotesis on nonlinearity}, than we have that the function $\sigma$ defined in \eqref{sigma 1} or equivalently \eqref{sigma 2} and the functions $X_\pm$ described by the system \eqref{system for char} are not only defined in all $\mathbb{R}$ but they also belong to $L^\infty(\mathbb{R})$.
\end{prop}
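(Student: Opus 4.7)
The plan is to exploit the preceding lemma (which gives $\Delta X=X_+-X_-\in L^\infty(\mathbb{R})$) together with the first--moment constraint \eqref{ell} in order to pin down $X_\pm$ individually, and then to read off the bound on $\sigma$ from Lemma \ref{sigma as characteristic}. The key observation is that conservation of the first moment, which follows directly from \eqref{mass splitting equation} and the definition of $\sigma$ in \eqref{definizione sigma} (since $\frac{d}{dt}\int x\rho\,\d x=-\int (H'-\sigma)\rho\,\d x=0$), reduces the question to a single scalar identity.

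First I would rewrite $\int x\rho\,\d x=\ell^\star$ in terms of $R$. Using $\rho=\partial_xR$, integrating by parts, and exploiting the fact that $R(\cdot,t)$ equals $0$ to the left of $X_-(t)$ and $1$ to the right of $X_+(t)$ (so that the boundary terms collapse), one obtains
\begin{equation*}
X_+(t)=\ell^\star+\int_{X_-(t)}^{X_+(t)}R(x,t)\,\d x.
\end{equation*}
Since $0\le R\le 1$, the remainder integral is bounded by $\Delta X(t)\le\|\Delta X\|_{L^\infty}$, hence $|X_+(t)|\le|\ell^\star|+\|\Delta X\|_{L^\infty}$. The identity $X_-(t)=X_+(t)-\Delta X(t)$ then yields $X_-\in L^\infty(\mathbb{R})$ as well.

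For $\sigma$, it suffices to insert these bounds into \eqref{sigma 1}:
\begin{equation*}
\sigma(t)=H'(X_+(t))-\int_{X_-(t)}^{X_+(t)}H''(x)R(x,t)\,\d x.
\end{equation*}
The integral is controlled by $\|H''\|_{L^\infty}\|\Delta X\|_{L^\infty}\le c\|\Delta X\|_{L^\infty}$, while the splitting $H'(x)=\alpha x+g(x)$ from \eqref{simple hypotesis on nonlinearity}, combined with the uniform bound on $X_+$ and $g\in L^\infty$, forces $H'(X_+(t))$ to be uniformly bounded. Thus $\sigma\in L^\infty(\mathbb{R})$.

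I do not expect any serious obstacle: the only delicate point is to justify that the first moment is preserved along the globally defined flow and that the boundary terms in the integration by parts vanish. Both are immediate from the definition of $\sigma$ via \eqref{definizione sigma} and from the fact that $R$ is identically $0$ and $1$ outside $[X_-(t),X_+(t)]$, so the proof collapses to the chain of elementary inequalities sketched above.
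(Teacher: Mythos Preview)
Your argument is correct and uses the same ingredients as the paper --- the $\Delta X$ bound from the preceding lemma, the first--moment constraint \eqref{ell}, and the representation \eqref{sigma 1} --- but in the reverse order. The paper first bounds $\sigma$ directly by writing
\[
\sigma(t)=\int H'(x)\rho\,\d x=\alpha\int x\rho\,\d x+\int g(x)\rho\,\d x\le \alpha\ell^\star+\|g\|_{L^\infty},
\]
and only then reads off $H'(X_+)\in L^\infty$ (hence $X_+\in L^\infty$) from \eqref{sigma 1}; you instead extract $X_+$ first via the integration--by--parts identity $X_+=\ell^\star+\int_{X_-}^{X_+}R$ and deduce $\sigma\in L^\infty$ afterwards. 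Your route has the minor advantage that the bound on $X_\pm$ uses nothing about $H$ beyond $\Delta X\in L^\infty$ and $0\le R\le 1$, whereas the paper needs the splitting $H'=\alpha x+g$ already at the first step; conversely, the paper's computation of $\sigma$ is a one--liner that bypasses any integration by parts. Both arguments are equally short and valid.
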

\begin{proof}
The proof is, at this point, very short. Consider $\sigma$ as in \eqref{sigma 1}, namely 
$$
\sigma(t)=H'(X_+(t))-\int_{X_-(t)}^{X_+(t)}H''(x)R(x,t)\d x.
$$
We know thanks to the previous lemma that $|\Delta X|<L$ hence $\left|\int_{X_-(t)}^{X_+(t)}H''(x)R(x,t)\d x\right|\leqslant\|H\|_{L^\infty}\cdot L< \infty$. Which means that $\int_{X_-(\bullet)}^{X_+(\bullet)}H''(x)R(x,\bullet)\d x\in L^\infty(\mathbb{R})$. Considering \eqref{simple hypotesis on nonlinearity}, $H'(x)=\alpha x+g(x)$ where $g$ is a $L^\infty$ function. \\
With these considerations we obtain that
$$
\sigma(t)=\int H'(x) \rho dx =\alpha \int x \rho \d x+\int g (x) \rho \d x\leqslant \alpha \ell^\star +\|g\|_{L^\infty(\mathbb{R})}<\infty
$$

We've obtained hence that $\sigma,\int_{X_-(\bullet)}^{X_+(\bullet)}H''(x)R(x,\bullet)\d x\in L^\infty$, which means, considering \eqref{sigma 1}, that $H'(X_+)\in L^\infty$ which implies that $X_+\in L^\infty$.\\
To prove that $X_-\in L^\infty$ the reasoning is the same considering $\sigma$ as in the equation \eqref{sigma 2}.
\end{proof}
\section{Stability for $t\to +\infty$.}\label{stability}
Proposition \ref{well posedness at -infty} assures us that as long as $t\to -\infty$ the solution to \eqref{mass splitting equation} stabilize to a convex combination of Dirac-$\delta$ measures localized in two points. We expect to recover, after the mass splitting process, a similar configuration.\\
To do so we are going to show that the system effectively converges to such a form via a standard stability argument  in a neighbourhood of $t=+\infty$.
\begin{lemma}
Define the following function
\begin{equation}
\label{energy}
E_t=\int H(x)\rho \d x<\infty,
\end{equation}
where $\rho=\rho (t,x)$ solution of \eqref{mass splitting equation}. Then we have that $E_t $ is decreasing in time.
\end{lemma}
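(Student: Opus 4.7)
My plan is to differentiate $E_t$ in time, substitute the mass-splitting equation \eqref{mass splitting equation}, use the representation \eqref{definizione sigma} of $\sigma(t)$, and recognise the resulting expression as a manifestly non-positive dissipation of the form $-\int(H'(x)-\sigma(t))^{2}\rho(x,t)\d x$.

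Concretely, testing $H$ against the (weak) form of \eqref{mass splitting equation} and integrating by parts formally gives
\begin{equation*}
\frac{d E_t}{dt}=\int H(x)\,\partial_{x}\!\bigl((H'(x)-\sigma(t))\rho\bigr)\d x
=-\int H'(x)\bigl(H'(x)-\sigma(t)\bigr)\rho\d x.
\end{equation*}
Now \eqref{definizione sigma} reads $\sigma(t)=\int H'(x)\rho\d x$, and $\int \rho\d x = 1$ by the boundary behaviour of $R$ built into \eqref{R per contrazione}, so that
\begin{equation*}
-\int H'(H'-\sigma)\,\rho\d x = -\int(H'(x))^{2}\rho\d x + \sigma(t)^{2} = -\int\bigl(H'(x)-\sigma(t)\bigr)^{2}\rho(x,t)\d x \leqslant 0,
\end{equation*}
which is precisely the monotonicity claim.

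The only delicate step is the integration by parts, because $\rho(\cdot,t)$ may carry Dirac components (cf.\ \eqref{rho with entropic effect}). This is handled by working with the distributional formulation of \eqref{mass splitting equation}, in which $H$ is a perfectly admissible test function: by Proposition \ref{L inf bounds for char} the support $[X_-(t),X_+(t)]$ of $\rho(\cdot,t)$ is a bounded interval, and $H\in\mathcal{C}^{2}_{\text{loc}}$, so the pairings $\langle\rho,H\rangle$ and $\langle\rho,H'(H'-\sigma)\rangle$ are well defined and no boundary term appears. Beyond this, the argument is one line of algebra; the identity $\tfrac{d}{dt}E_t = -\int(H'-\sigma)^{2}\rho\d x$ is in fact a Lyapunov/gradient-flow-type estimate that presumably drives the $t\to+\infty$ convergence analysis in the rest of Section \ref{stability}.
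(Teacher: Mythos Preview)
Your proof is correct and follows essentially the same route as the paper: differentiate $E_t$, substitute \eqref{mass splitting equation}, integrate by parts, and use $\sigma(t)=\int H'\rho\,\d x$ to obtain $\tfrac{d}{dt}E_t=-\int(H')^2\rho\,\d x+\sigma(t)^2\leqslant 0$. The only cosmetic difference is that the paper phrases the final inequality as an application of Jensen's inequality to $u\mapsto u^2$, whereas you complete the square and write the dissipation as $-\int(H'-\sigma)^2\rho\,\d x$; these are of course the same estimate.
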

\begin{proof}
Derive equation \eqref{energy}, integrate the obtained equation by parts and apply Jensen inequality obtaining that
$$
\frac{d}{dt}\left(\int H(x)\rho \d x\right)\leqslant 0
$$

Which is exactly what we wanted to prove.
\end{proof}

At this point it is clear that as $t\to +\infty$ we have that the system stabilize to some value $E_{+\infty}\leqslant E_t$ for each $t\in \mathbb{R}\cup\{-\infty\}$, and that \eqref{energy} holds.\\

\begin{theorem}\label{limit rho}
There exists some sequence sequence $\left( t_m\right)_{m\in\mathbb{N}}$ such that $t_m\xrightarrow[]{m\to\infty}+\infty$. Set $\rho_m(x)=\rho(t_m,x)$, than there exist some $x_-,x_0,x_+$, such that $x_\pm\in\left\lbrace H''\geqslant 0\right\rbrace$ and $x_0\in\left\lbrace H''< 0\right\rbrace$ and
\begin{equation}
\label{conv via sottosucc}
\rho_m(x)\xrightharpoonup[]{\star}\sum_{i\in\left\lbrace-,0,+\right\rbrace}m_i\delta_{x_i}(x)
\end{equation}
with
$
m_-+m_0+m_+=1
$
and
\begin{equation}\label{samepoints}
H'(x_-)=H'(x_+)=H'(x_0)
\end{equation}
\end{theorem}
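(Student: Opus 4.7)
The plan is to exploit the Lyapunov structure carried by $E_t$ together with the $L^\infty$ bounds from Proposition \ref{L inf bounds for char}, and then extract a weak-$\star$ limit via Prokhorov's theorem.

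First I would sharpen the previous lemma by identifying the dissipation explicitly. Differentiating $E_t=\int H(x)\rho(t,x)\,dx$, integrating by parts along \eqref{mass splitting equation} (the boundary terms vanish because $\rho(t,\cdot)$ is compactly supported in $[X_-(t),X_+(t)]$ by the construction \eqref{R per contrazione}), and using $\int H'(x)\rho(t,x)\,dx=\sigma(t)$ from \eqref{definizione sigma}, one obtains
\begin{equation*}
\frac{d}{dt}E_t=-\int_\mathbb{R} H'(x)\bigl(H'(x)-\sigma(t)\bigr)\rho(t,x)\,dx=-\int_\mathbb{R}\bigl(H'(x)-\sigma(t)\bigr)^2\rho(t,x)\,dx.
\end{equation*}
By Proposition \ref{L inf bounds for char}, the supports of $\rho(t,\cdot)$ all lie in the fixed compact interval $I:=[\inf_t X_-(t),\sup_t X_+(t)]$, so $E_t$ is bounded below by $\min_I H$. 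Together with the monotone decay this forces
$$
\int_0^{+\infty}\int_\mathbb{R}\bigl(H'(x)-\sigma(t)\bigr)^2\rho(t,x)\,dx\,dt<+\infty,
$$
whence $\liminf_{t\to+\infty}\int(H'(x)-\sigma(t))^2\rho(t,x)\,dx=0$, and there exists $t_m\to+\infty$ along which
$$
D_m:=\int_\mathbb{R}\bigl(H'(x)-\sigma(t_m)\bigr)^2\rho_m(x)\,dx\xrightarrow[m\to\infty]{}0.
$$

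Next I would carry out the compactness step. The $\rho_m$ are probability measures supported on the fixed compact set $I$, so by Prokhorov's theorem a subsequence (unrelabeled) satisfies $\rho_m\xrightharpoonup[]{\star}\mu$ for some probability measure $\mu$ supported in $I$. Since $\sigma(t_m)$ is bounded by Proposition \ref{L inf bounds for char}, a further extraction yields $\sigma(t_m)\to\sigma^\star\in\mathbb{R}$. As $(x,s)\mapsto(H'(x)-s)^2$ is continuous and the family $(H'(x)-\sigma(t_m))^2$ converges uniformly on $I$ to $(H'(x)-\sigma^\star)^2$, one gets $D_m\to\int_\mathbb{R}(H'(x)-\sigma^\star)^2\,d\mu(x)$; combined with $D_m\to0$, this forces $\mu$ to be supported on the level set $\Sigma:=\{x\in\mathbb{R}:H'(x)=\sigma^\star\}$.

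Finally, assumption (A2) yields the three-point structure. Each of the three monotonicity branches $X_-,X_0,X_+$ of $H'$ meets $\Sigma$ in at most one point, and $H''<0$ holds precisely on the spinodal interval $(x_\star,x^\star)$ while $H''\geqslant 0$ on its complement. Hence the middle branch contributes at most one root $x_0\in\{H''<0\}$ and the two outer branches contribute at most two roots $x_-,x_+\in\{H''\geqslant 0\}$. Writing $\mu=m_-\delta_{x_-}+m_0\delta_{x_0}+m_+\delta_{x_+}$ with $m_i\geqslant 0$ and $\sum_i m_i=1$ (setting $m_i=0$ and choosing $x_i$ arbitrarily in the appropriate region whenever a branch fails to meet $\Sigma$, which occurs exactly when $\sigma^\star\notin[\sigma_\star,\sigma^\star]$) delivers \eqref{conv via sottosucc}, and \eqref{samepoints} is automatic from $x_i\in\Sigma$. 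The main obstacle is the joint passage to the limit in $D_m$, which couples a weak-$\star$-converging sequence of measures with a numerical sequence $\sigma(t_m)$; this is routine thanks to the uniform compactness of supports provided by Proposition \ref{L inf bounds for char}. A secondary subtlety is that nothing in the present argument pins down $\sigma^\star$ inside $(\sigma_\star,\sigma^\star)$, so the statement must be read in the natural degenerate sense when some of the $m_i$ vanish; sharpening this, and identifying $\sigma^\star$ uniquely, is exactly the content of the uniqueness discussion in Subsection \ref{uniqueness}.
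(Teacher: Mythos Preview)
Your argument is correct and follows essentially the same route as the paper: identify the dissipation $D(t)=\int (H'(x)-\sigma(t))^2\rho\,dx$ (equivalently the paper's variance formula $\int (H')^2\rho-(\int H'\rho)^2$), use the energy decay to find $t_m$ with $D(t_m)\to 0$, extract a weak-$\star$ limit, and conclude that the limit is supported on a level set of $H'$, hence on at most three points by (A2). The only cosmetic differences are that you invoke Prokhorov on the fixed compact interval $I$ from the outset (thereby sidestepping the paper's separate exclusion of mass escaping to infinity) and that you make the joint passage to the limit explicit by first extracting $\sigma(t_m)\to\sigma^\star$; both are mild streamlinings rather than genuinely different ideas.
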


\begin{proof}
To prove the theorem  proceed as follows. Define the \textbf{dissipation} of the system as
\begin{equation}
\label{dissipation}
-D(t)=\frac{d}{dt}\left(\int H(x)\rho dx\right)=-\int (H'(x))^2\rho dx + \left(\int H'(x)\rho dx\right)^2\leqslant 0
\end{equation}
From this definition we obtain that
$
\int_{-\infty}^TD(t)dt=-\int_{-\infty}^T\frac{\d}{\d t}E_t\d t=E_{-\infty}-E_T<E_{-\infty}.
$

At this point, hence, we have obtained that
\begin{equation}
\label{integrabilità D}
\int_\mathbb{R}D(t)dt<\infty,
\end{equation}
which implies that there exists a sequence $\left(t_m\right)_m$ such that $D(t_m)\to 0$ as $m\to +\infty$. In particular let us select some sequence $t_m\xrightarrow{m\to\infty}+\infty$, we have that $D(t_m)\to 0$, i.e.
\begin{equation}\label{limit measures}
\int \left( H'(x)\right)^2\rho_m dx-\left(\int H'(x)\rho_m dx\right)^2\xrightarrow{m\to\infty}0.
\end{equation}

Where $\rho_m(x)=\rho(t_m,x)$.\\
Moreover the set $\left(\rho(t,\cdot)\right)_t$ is bounded in $rca(\mathbb{R})$, the set of regular measures, with finite variation. By Banach-Alaoglu theorem is weak-$\star$ compact, hence there exists a (not relabelled) sequence of diverging times $\left( t_m\right)_m$ such that $\rho_m \overset{\star}{\rightharpoonup} \rho$.\\
\eqref{limit measures} verifies if and only if $H'=k$ for some $k\in\mathbb{R}$ $\rho$--almost everywhere, for $\rho=\star-\lim_m \rho_m$. This consideration, together with \eqref{simple hypotesis on nonlinearity}, and the fact that $H'$ is strictly monotone in its invertible branches, implies that
$$
\rho_m\overset{\star}{\rightharpoonup}\sum_{i\in\left\lbrace-,0,+\right\rbrace}m_i\delta_{x_i},
$$
where the points $x^+_i,i\in\left\lbrace -,0,+\right\rbrace$ have to satisfy the condition \eqref{samepoints} since $H'=k$ $\rho$-a.s. or, else
$
\rho_m \overset{\star}{\rightharpoonup} \delta_{\bar{x}}
$
for some $\bar{x}\in\left[x_\star,x^\star\right]^c$.\\
A priori could as well happen that $\rho_m \overset{\star}{\rightharpoonup} \delta_\infty$, we want to exclude this eventuality.\\
To do so  consider 
$
\sigma (t_m)=\int H'(x)\rho_m(x)\d x\to\infty
$
but $\sigma\in L^\infty$ as proved in Proposition \ref{L inf bounds for char}. hence this is absurd.
\end{proof}

\begin{prop}\label{convergenza debole migliorata}
Set a sequence $\left(t_m\right)_m$ such that $\rho_m\overset{\star}{\rightharpoonup}\sum_{i\in\left\lbrace-,0,+\right\rbrace}m_i\delta_{x_i}$. Than for any $f\in\mathcal{C}(D)$ where $D$ is the following compact set of $\mathbb{R}$: $D=\left[D_-,D_+\right]$ with
\begin{align*}
D_-&=\inf_{t\in\mathbb{R}} X_-(t),\\
D_+&=\sup_{t\in\mathbb{R}}X_+(t),
\end{align*}
where $X_\pm$ are solution of \eqref{system for char}, the following equality holds
$$
\lim_{m\to\infty}\int f(x)\rho(t_m,x)\d x =\sum_{i\in\{-,0,+\}}m_if(x_i).
$$
\end{prop}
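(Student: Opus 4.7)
The plan is to exploit that each $\rho(t,\cdot)$ has support contained in the fixed compact interval $D$, so that the weak-$\star$ convergence obtained in Theorem~\ref{limit rho} against $\mathcal{C}_0(\mathbb{R})$ test functions can be upgraded to testing against arbitrary $f\in\mathcal{C}(D)$ by a simple extension argument.

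First I would verify that for every $t\in\mathbb{R}$,
\[
\mathrm{supp}\,\rho(t,\cdot)\subseteq [X_-(t),X_+(t)]\subseteq D.
\]
By construction $\rho=\partial_x R$, and $R(\cdot,t)\equiv 0$ on $(-\infty,X_-(t))$, $R(\cdot,t)\equiv 1$ on $(X_+(t),+\infty)$; this is built into the solution in Proposition~\ref{well posedness at -infty} via \eqref{R per contrazione} and is propagated in time by the transport structure \eqref{mass splitting equation 2}, since $X_\pm$ are themselves characteristics of the flow and mass cannot cross characteristics. The $L^\infty$ bounds of Proposition~\ref{L inf bounds for char} then ensure that $D_-$ and $D_+$ are finite, so that $D=[D_-,D_+]$ is a genuine compact interval containing $[X_-(t),X_+(t)]$ uniformly in $t$. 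Passing to the weak-$\star$ limit of Theorem~\ref{limit rho}, the atoms $x_-,x_0,x_+$ must themselves lie in $D$, since the complement of $D$ is open and carries no mass of any $\rho(t_m,\cdot)$.

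Next, given $f\in\mathcal{C}(D)$, apply Tietze's extension theorem (or, equivalently, extend $f$ continuously to a neighbourhood of $D$ and multiply by a compactly supported cutoff which equals $1$ on $D$) to produce $\tilde f\in\mathcal{C}_c(\mathbb{R})\subseteq\mathcal{C}_0(\mathbb{R})$ with $\tilde f|_D=f$. Since all the measures involved are supported in $D$,
\[
\int_\mathbb{R} f(x)\,\rho(t_m,x)\,\mathrm{d}x=\int_\mathbb{R}\tilde f(x)\,\rho(t_m,x)\,\mathrm{d}x,\qquad \sum_{i\in\{-,0,+\}}m_if(x_i)=\sum_{i\in\{-,0,+\}}m_i\tilde f(x_i),
\]
so the weak-$\star$ convergence of Theorem~\ref{limit rho}, applied to the legitimate test function $\tilde f$, gives exactly the desired identity.

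The only substantive ingredient is the uniform support containment; once this is in place the rest of the argument is routine functional analysis. Should the characteristic propagation of $\mathrm{supp}\,\rho(t,\cdot)$ be deemed insufficient, an alternative is to establish tightness of the family $\{\rho(t,\cdot)\}_{t\in\mathbb{R}}$ directly from the bound $\sigma\in L^\infty$ of Proposition~\ref{L inf bounds for char} together with the first-moment constraint \eqref{ell}, and then conclude by Prokhorov; but the transport picture makes the direct route considerably cleaner.
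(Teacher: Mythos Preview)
Your argument is correct and follows essentially the same route as the paper: both hinge on the uniform containment $\mathrm{supp}\,\rho(t,\cdot)\subseteq[X_-(t),X_+(t)]\subseteq D$ together with the $L^\infty$ bounds on $X_\pm$, and then upgrade the weak-$\star$ convergence to $\mathcal{C}(D)$ test functions. The only cosmetic difference is that the paper invokes the duality $\left(\mathcal{C}^0(D)\right)^\star=rca(D)$ directly, whereas you make the passage explicit via a Tietze extension to $\mathcal{C}_c(\mathbb{R})$; these are equivalent formulations of the same step.
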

\begin{proof}
We point out at first that $|D_\pm|<\infty$ thanks the fact that $X_\pm\in L^\infty$ as shown in Proposition \ref{L inf bounds for char}.\\
With these considerations, for every time $t$ the mass of the entire system is concentrated in the set $\left[X_-(t),X_+(t)\right]$, this has been described in detail in Lemma \ref{dimostrazione condizione H}, considering also the global result in Proposition\ref{global existence for char}. This implies that the support of the probability density $\rho$ is compact, and independent from the time $t$, namely the set $D$. The claim hence follows since, in this setting,
$
\left(\mathcal{C}^0\left(D\right)\right)^\star=rca(D)
$
and $\rho(t,\cdot)\in  rca(D)$ for all $t\in\mathbb{R}$.
\end{proof}
\begin{cor}\label{convergenza debole per classi di funzioni}
Set a sequence $\left(t_m\right)_m$ such that $\rho_m\overset{\star}{\rightharpoonup}\sum_{i\in\left\lbrace-,0,+\right\rbrace}m_i\delta_{x_i}$.
Then
\begin{align*}
\lim_{m\to\infty}\sigma\left(t_m\right)=H'\left(x_i\right)& & i\in\left\lbrace-,0,+\right\rbrace\\
\ell=\sum_{i\in\left\lbrace-,0,+\right\rbrace}m_i x_i 
\end{align*}
\end{cor}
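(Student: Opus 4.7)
The corollary is essentially a direct specialization of Proposition \ref{convergenza debole migliorata} to two carefully chosen continuous test functions on the compact set $D$, combined with the structural information on the limit measure obtained in Theorem \ref{limit rho}. My plan is therefore to apply the improved weak convergence to $f(x) = H'(x)$ and $f(x) = x$ separately, verify that both are admissible, and then combine the output with the identification $H'(x_-) = H'(x_0) = H'(x_+)$ coming from \eqref{samepoints}.

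First, I would observe that under assumption \eqref{simple hypotesis on nonlinearity} the function $H'$ is continuous on $\mathbb{R}$, hence in particular $H'\in\mathcal{C}(D)$. Since $\sigma(t_m) = \int H'(x)\rho(t_m,x)\d x$ by \eqref{definizione sigma}, Proposition \ref{convergenza debole migliorata} yields
\begin{equation*}
\lim_{m\to\infty}\sigma(t_m) = \sum_{i\in\{-,0,+\}} m_i H'(x_i).
\end{equation*}
By Theorem \ref{limit rho}, the three points satisfy \eqref{samepoints}, i.e.\ $H'(x_-) = H'(x_0) = H'(x_+)$, and moreover $m_- + m_0 + m_+ = 1$. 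Factoring the common value out of the sum gives $\lim_{m\to\infty}\sigma(t_m) = H'(x_i)$ for any $i \in \{-,0,+\}$, which is the first assertion.

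For the second assertion, I would apply the same proposition with the test function $f(x) = x$, which is trivially continuous on the compact set $D$. The dynamical constraint \eqref{ell} states that $\int x\rho(t,x)\d x = \ell^\star$ for every $t$ (written simply $\ell$ in the statement), so in particular $\int x\rho(t_m,x)\d x = \ell$ for every $m$. Passing to the limit with Proposition \ref{convergenza debole migliorata} one obtains
\begin{equation*}
\ell = \lim_{m\to\infty}\int x\rho(t_m,x)\d x = \sum_{i\in\{-,0,+\}} m_i x_i,
\end{equation*}
which is exactly the second claim.

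There is no real obstacle here: the only non-trivial ingredient is the improved-convergence statement of Proposition \ref{convergenza debole migliorata}, which in turn rests on the $L^\infty$-bounds on the characteristics $X_\pm$ from Proposition \ref{L inf bounds for char} and guarantees that both $H'$ and the identity function lie in $\mathcal{C}(D)$. The mildly delicate point, which I would emphasize, is that the first conclusion requires the coincidence \eqref{samepoints} — without it one would only obtain a weighted average $\sum_i m_i H'(x_i)$ rather than a common value $H'(x_i)$; here that equality is guaranteed because $\sigma$ is forced, via the dissipation argument of Theorem \ref{limit rho}, to concentrate its mass on level sets of $H'$.
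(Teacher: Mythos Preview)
Your proof is correct and follows exactly the same route as the paper: apply Proposition~\ref{convergenza debole migliorata} with the test functions $H'$ and the identity, then use \eqref{samepoints} and $\sum_i m_i=1$ to collapse $\sum_i m_i H'(x_i)$ to the common value $H'(x_i)$. The only difference is that you spell out explicitly why $H'$ and $x$ lie in $\mathcal{C}(D)$ and why \eqref{samepoints} is needed, which the paper leaves implicit.
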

\begin{proof}
Apply Proposition \ref{convergenza debole migliorata} to the functions $\sigma\left(t\right)=\int H'(x)\rho(t,x)\d x$ and $\ell^\star=\int x \rho\left(t,x\right)\d x$ obtaining
\begin{equation}
\label{approx sigma sottosucc}
\sigma(t_m)=\sum_{i\in\left\lbrace-,0,+\right\rbrace}m_i H'(x_i) +o(1)=H'(x_i)+o(1).
\end{equation}
and
\begin{equation}\label{appelle}
\ell=\int x\rho_m(x)\d x=\sum_{i\in\left\lbrace-,0,+\right\rbrace}m_i x_i ,
\end{equation}
\end{proof}
\begin{cor}\label{L1 migliorata}
For any $t\in\mathbb{R}$ and any $f\in\mathcal{C}(D)$, it is true that $f\in L^1\left(\mathbb{R},\rho(t,\cdot)\right)$.
\end{cor}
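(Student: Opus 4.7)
The statement is essentially immediate once one recalls what has already been established about the support of $\rho(t,\cdot)$. The plan is to argue as follows.

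First, I would recall from the proof of Proposition \ref{convergenza debole migliorata} that for every fixed $t\in\mathbb{R}$ the probability density $\rho(t,\cdot)$ has support contained in the interval $[X_-(t),X_+(t)]$. This is a consequence of the characteristic description of \eqref{mass splitting equation}: the mass initially concentrated in $(-\infty,x_+]$ in the asymptotic condition \eqref{asymptotic R} is transported along the characteristics $X(t,K,\sigma)$, which by Condition H map $\mathbb{R}$ bijectively onto $(X_-(t),X_+(t))$. Combined with the global $L^\infty$ bounds on $X_\pm$ from Proposition \ref{L inf bounds for char}, one obtains $\operatorname{supp}\rho(t,\cdot)\subset[X_-(t),X_+(t)]\subset[D_-,D_+]=D$ for every $t\in\mathbb{R}$.

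Next, since $D$ is a compact subset of $\mathbb{R}$ and $f\in\mathcal{C}(D)$, the Weierstrass theorem gives $\|f\|_{L^\infty(D)}<\infty$. Since $\rho(t,\cdot)$ is a probability measure with total mass one, I can then estimate
\begin{equation*}
\int_{\mathbb{R}}|f(x)|\,\rho(t,x)\d x=\int_{D}|f(x)|\,\rho(t,x)\d x\leqslant \|f\|_{L^\infty(D)}\int_{D}\rho(t,x)\d x\leqslant \|f\|_{L^\infty(D)}<\infty,
\end{equation*}
which shows $f\in L^1(\mathbb{R},\rho(t,\cdot))$ as claimed.

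There is no real obstacle here: the corollary is a packaging of the compact-support property of $\rho(t,\cdot)$ already proved via the characteristic picture, together with the elementary observation that continuous functions on a compact set are bounded and therefore integrable against any finite measure supported there. The only thing worth being explicit about, for the reader, is why the uniform (in $t$) compact containment of $\operatorname{supp}\rho(t,\cdot)$ in $D$ holds, and for this I would simply refer back to the argument already given in Proposition \ref{convergenza debole migliorata} rather than repeat it.
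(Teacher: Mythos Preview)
Your argument is correct and follows exactly the same route as the paper's own proof: use the compact support of $\rho(t,\cdot)$ in $D$ (already established in Proposition~\ref{convergenza debole migliorata}) together with the boundedness of a continuous function on the compact set $D$ to bound the integral by $\max_{x\in D}|f(x)|$. Your write-up is simply more explicit than the paper's one-line version.
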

\begin{proof}
$
\int f(x)\rho(x,t)\d x\leqslant \max_{x\in D}|f(x)|<\infty.
$
\end{proof}
\begin{lemma}\label{convergenza masse}
Select a sequence $\left(t_m\right)_m$ such that $\rho_m\overset{\star}{\rightharpoonup}\sum_{i\in\left\lbrace-,0,+\right\rbrace}m_i\delta_{x_i}$. Suppose $x_i\neq x_\star,x^\star$. Define the functions 
\begin{align*}
m_0(t)&=\int_{x_\star}^{x^\star}\rho(t,x)\d x,\\
m_-(t)&=\int^{x^\star}_{-\infty}\rho(t,x)\d x,\\
m_+(t)&=\int_{x^\star}^{\infty}\rho(t,x)\d x,
\end{align*}
then $m_i(t_m)\to m_i$.
\end{lemma}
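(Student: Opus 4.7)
The plan is a portmanteau-type argument: push the weak-$\star$ convergence against characteristic functions of the three partition intervals by sandwiching them between continuous functions, to which Proposition \ref{convergenza debole migliorata} applies directly.

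First I would pin down the exact location of the three atoms of the limit measure. By Theorem \ref{limit rho}, $H''(x_0)<0$ forces $x_0\in(x_\star,x^\star)$, whereas $H''(x_\pm)\geqslant 0$ places $x_\pm\in(x_\star,x^\star)^c$; combined with the standing hypothesis $x_i\neq x_\star,x^\star$, this yields the strict ordering $x_-<x_\star<x_0<x^\star<x_+$. In particular the limit measure $\sum_i m_i\delta_{x_i}$ assigns no mass to the two-point set $\{x_\star,x^\star\}$, which is the topological boundary of each of the three intervals $I_-=(-\infty,x_\star)$, $I_0=[x_\star,x^\star]$, $I_+=(x^\star,+\infty)$ underlying the definitions of $m_-(t),m_0(t),m_+(t)$.

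For each such interval $I\in\{I_-,I_0,I_+\}$ and each $\epsilon>0$ smaller than $\tfrac{1}{2}\min_i\operatorname{dist}(x_i,\{x_\star,x^\star\})$, I would pick continuous cut-offs $\phi^{-}_\epsilon,\phi^{+}_\epsilon\in\mathcal{C}(D)$ (with $D=[D_-,D_+]$ as in Proposition \ref{convergenza debole migliorata}) satisfying $\phi^{-}_\epsilon\leqslant\mathbf{1}_I\leqslant\phi^{+}_\epsilon$ and coinciding with $\mathbf{1}_I$ outside an $\epsilon$-neighbourhood of $\partial I$. For such $\epsilon$ one has $\phi^{\pm}_\epsilon(x_i)=\mathbf{1}_I(x_i)$ for every $i\in\{-,0,+\}$, so Proposition \ref{convergenza debole migliorata} gives
$$\sum_i m_i\mathbf{1}_I(x_i)=\lim_m\int\phi^{-}_\epsilon(x)\rho_m(x)\d x\leqslant\liminf_m\int_I\rho_m(x)\d x\leqslant\limsup_m\int_I\rho_m(x)\d x\leqslant\lim_m\int\phi^{+}_\epsilon(x)\rho_m(x)\d x=\sum_i m_i\mathbf{1}_I(x_i).$$
Both $\liminf$ and $\limsup$ therefore equal $\sum_i m_i\mathbf{1}_I(x_i)$, which by the ordering above is $m_0$ when $I=I_0$, $m_-$ when $I=I_-$ and $m_+$ when $I=I_+$, completing the proof.

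One subtle point to address is that Proposition \ref{convergenza debole migliorata} only supplies weak convergence against continuous functions on the compact set $D$, whereas the intervals $I_\pm$ are unbounded. This is resolved by Propositions \ref{global existence for char} and \ref{L inf bounds for char}, which imply that $\mathrm{supp}\,\rho(t,\cdot)\subset D$ for every $t\in\mathbb{R}$, so $\int_{I_\pm}\rho_m\,\d x=\int_{I_\pm\cap D}\rho_m\,\d x$ and the cut-offs may be built inside $D$. Beyond this truncation bookkeeping there is no real obstacle: the strict separation $x_i\neq x_\star,x^\star$ granted by hypothesis is precisely the non-atomicity on the boundary of the test intervals that the portmanteau argument requires.
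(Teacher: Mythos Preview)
Your portmanteau argument is correct and is precisely the standard way one recovers convergence of the masses from weak-$\star$ convergence once the limiting measure charges neither endpoint $x_\star$ nor $x^\star$. The paper itself states this lemma without proof, so there is no original argument to compare against; your write-up supplies exactly the details that are implicitly being taken for granted, namely that Proposition~\ref{convergenza debole migliorata} upgrades to indicators of the three intervals because the hypothesis $x_i\neq x_\star,x^\star$ forces $\mu(\partial I)=0$ for each $I\in\{I_-,I_0,I_+\}$, and that the unbounded tails of $I_\pm$ are harmless since $\mathrm{supp}\,\rho(t,\cdot)\subset D$.
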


\begin{rem}
Suppose that $x_-=x_0=x_\star$, hence
$
\rho_m\overset{\star}{\rightharpoonup} (m_0+m_-)\delta_{x_\star}+m_+\delta_{x^{\star\star}}.
$
The lemma above proves that
$
m_+=\lim_{m}m_+(t_m),
$
but does not gives any information about the masses $m_{-/0}$.\\
Anyway, since $m_-+m_0+m_+=1$ and for every $t_m$ the relation $m_-(t_m)+m_0(t_m)+m_+(t_m)=1$ still holds, hence we can assert that
$$
m_-+m_0=\lim_m \left[m_-(t_m)+m_0(t_m)\right],
$$
which is going to suffice for the purposes of our analysis.
\hspace*{\fill}$\blacklozenge$\medskip
\end{rem}
After all these considerations we have in particular obtained that
$$
\sigma(t_m)\to H'(x_i),
$$
which allows us to choose a triple $\left( X_-(t_m),X_0(t_m),X_+(t_m)\right)_m$ such that
\begin{equation}
\label{nuovo sigma}
\sigma(t_m)=H'(X_i(t_m)),
\end{equation}

and $X_i(t_m)\to x_i$ by construction. Considering moreover equation \eqref{appelle} and Lemma \ref{convergenza masse} we can hence say that
$$
\ell^\star=\sum_{i\in\{-,0,+\}}m_i(t_m)X_i(t_m) +o(1),
$$
with $o(1)\to 0$ as $t_m\to\infty$.\\
From now on we are going to refer as $X_i$ to functions that satisfy \eqref{nuovo sigma}, and not any more for solutions of \eqref{equation X} with $K=0,\pm\infty$.

\begin{rem}
a priori we could have different weak-$\star$ limit equilibria for $\rho$ depending  on the diverging sequence of times.\\
In other words, set $(t_m),(t'_m)$ different sequences st $t_m,t'_m\to +\infty$, set $\rho_m,\rho'_m$ the distributions, the above theorem explains that $\rho_m\overset{\star}{\rightharpoonup}\Sigma,\rho'_m\overset{\star}{\rightharpoonup}\Sigma'$, but might happen, a priori, that $\Sigma\neq\Sigma'$. The next question is: is there uniqueness at the limit? If yes under which conditions?\\

\hspace*{\fill}$\blacklozenge$\medskip
\end{rem}
\subsection{The uniqueness problem.}\label{uniqueness}
In this section we want to identify some hypothesis under which the problem treated all along this paper has a unique weak-$\star$ limit as $t\to\infty$.\\

The first step in our analysis is going to be the following lemma
\begin{lemma}
Consider $D$ defined as in \eqref{dissipation}. Than 
$\lim_{t\to +\infty}D(t)=0.
$
\end{lemma}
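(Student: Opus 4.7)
The plan is to combine the integrability of $D$ already recorded in \eqref{integrabilità D} with a Gronwall-type differential inequality $|D'(t)|\lesssim D(t)$, and then to conclude by a Barbalat-style contradiction argument.

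First I would rewrite the dissipation in the centred form
\[
D(t)=\int_\mathbb{R}\bigl(H'(x)-\sigma(t)\bigr)^{2}\rho(x,t)\,\d x,
\]
which follows algebraically from $\int\rho\,\d x=1$ and $\sigma(t)=\int H'(x)\rho\,\d x$; this expression also makes $D(t)\geqslant 0$ manifest, as already noted in \eqref{dissipation}.

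Next I would compute $D'(t)$ by differentiating under the integral sign, using the evolution equation \eqref{mass splitting equation}. The contribution involving $\sigma'(t)$ vanishes because $\int(H'-\sigma)\rho\,\d x\equiv 0$, and one integration by parts (legitimate thanks to the compact support of $\rho(\cdot,t)$ recorded in Proposition \ref{convergenza debole migliorata}) leaves
\[
D'(t)=-2\int_\mathbb{R}H''(x)\bigl(H'(x)-\sigma(t)\bigr)^{2}\rho(x,t)\,\d x.
\]
By hypothesis (A1) we have $|H''(x)|\leqslant c=\|H''\|_{L^\infty}<\infty$ uniformly, which immediately yields the two-sided estimate $|D'(t)|\leqslant 2c\,D(t)$.

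Finally I would argue by contradiction. Suppose there exist $\varepsilon>0$ and a sequence $t_n\to+\infty$ with $D(t_n)\geqslant\varepsilon$. The one-sided inequality $D'(t)\geqslant -2c\,D(t)$ implies that $e^{2ct}D(t)$ is non-decreasing, so that $D(t)\geqslant\varepsilon e^{-2c(t-t_n)}$ for $t\geqslant t_n$; in particular $D(t)\geqslant\varepsilon/2$ on each interval $[t_n,t_n+\tau]$ with $\tau=(\log 2)/(2c)$. After extracting a subsequence with $t_{n+1}\geqslant t_n+\tau$, the integral $\int_\mathbb{R}D(t)\,\d t$ accumulates at least $\varepsilon\tau/2$ on every one of infinitely many disjoint intervals, and therefore diverges, contradicting \eqref{integrabilità D}. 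The only delicate point is the differentiation under the integral sign and the subsequent integration by parts giving the formula for $D'(t)$; both steps require enough spatial regularity and decay of $\rho(\cdot,t)$, which are provided by the well-posedness construction of Section \ref{well posedness} together with the compact-support statement inherited from Proposition \ref{global existence for char} and Proposition \ref{convergenza debole migliorata}.
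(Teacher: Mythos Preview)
Your argument is correct, and it is in fact tighter than the paper's. Both proofs start by differentiating $D$ and, after one integration by parts against \eqref{mass splitting equation}, arrive at
\[
D'(t)=-2\int_{\mathbb{R}}H''(x)\bigl(H'(x)-\sigma(t)\bigr)^{2}\rho(x,t)\,\d x.
\]
The paper, however, only extracts from this the crude consequence $D'\in L^\infty(\mathbb{R})$ (via the compact support of $\rho$ and $H''\in L^\infty$), and then runs a generic Barbalat-type contradiction: if $D(t_n)\geqslant\varepsilon$ along a sequence, the $L^1$-integrability of $D$ forces the measure of the level sets $\{D\geqslant\varepsilon/2\}\cap[n,\infty)$ to shrink, and the mean value theorem together with $D'\in L^\infty$ then yields $|D'(\tau_n)|\to\infty$, a contradiction.

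You instead exploit the full structure of the formula for $D'$ to obtain the multiplicative bound $|D'(t)|\leqslant 2c\,D(t)$, which is strictly stronger than $D'\in L^\infty$. The one-sided inequality $D'\geqslant-2cD$ then gives an explicit exponential lower barrier $D(t)\geqslant D(t_n)e^{-2c(t-t_n)}$ on each interval $[t_n,t_n+\tau]$, and the contradiction with \eqref{integrabilità D} follows immediately by summing over disjoint intervals. This route is shorter and more self-contained: it avoids the level-set/measure argument entirely and does not even require the separate verification that $D\in L^\infty$. The paper's approach, on the other hand, is the standard Barbalat template and would survive if one only knew $D'\in L^\infty$ without the sharper Gronwall-type control.
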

\begin{proof}
Thanks to Corollary \ref{L1 migliorata} we can argue that
\begin{equation}
\label{qualcosa che uso}
\int\left( H'(x)\right)^2 \rho(x,t)\d x\leqslant\max_D| H'|^2
<\infty.
\end{equation}

With this consideration in hand, considering that $D$ is defined as in \eqref{dissipation}, it is easy to check that $D,D'\in L^\infty(\mathbb{R})$.  Accordingly to \eqref{dissipation}
$$
D(t)=-\underbrace{\left(\int H'(x)\rho(x,t)\d x\right)^2}_{=\sigma^2(t)\in L^\infty}+\underbrace{\int\left(H'(x)\right)^2\rho(x,t)\d x}_{\overset{\eqref{qualcosa che uso}}{<}\infty}\in L^\infty.
$$

And 
\begin{multline*}
D'(t)=-2\sigma(t)\cdot \int H'(x)\rho_t(x,t)\d x+\int\left(H'(x)\right)^2\rho(x,t)\d x\overset{\text{IbP}+\eqref{mass splitting equation}}{=}\\
2\sigma(t)\cdot\int H''(x)\left(H'(x)-\sigma(t)\right)\rho(x,t)\d x+\int\left(H'(x)\right)^2\rho(x,t)\d x\leqslant\\
2\|\sigma\|_{L^\infty}\left(\max_D |H'\cdot H''|+\|\sigma\|_{L^\infty}\right)+\max_D |H'|^2<\infty.
\end{multline*}

Where in the last inequality we applied Corollary \ref{L1 migliorata}.\\
At this point, suppose it is not true that $\lim_{t\to\infty}D(t)=0$. Hence there exists a sequence such that $D(t_n)>\varepsilon$ for some $\varepsilon >0$. Define the set
$
A_\varepsilon=\left\lbrace t:D(t)\geqslant\varepsilon\right\rbrace.
$, 
and the set $A_{\varepsilon,n}=A_{\varepsilon}\cap[n,\infty)$. Moreover
$
A_{\varepsilon,n}=\bigcup_kA_{\varepsilon,n}^k,
$
where $A_{\varepsilon,n}^k$ are the connected components of $A_{\varepsilon,n}$, indexed by $k$. It is true that $\mathcal{L}\left(A_{\eta,n}\right)\to 0$ as $n\to\infty$, otherwise $D$ wouldn't be $L^1\left(\mathbb{R},\d t\right)$ contradicting \eqref{integrabilità D}, this implies that also $\mathcal{L}\left(A_{\eta,n}^k\right)\to 0$, for every index $k$. This consideration is valid for every $\eta >0$, in particular for $\varepsilon/2$.\\
Select a $t_n\in A_{\varepsilon,n}^k,s_n\notin A_{\varepsilon/2,n}$. These two sequences can be selected in such a way that 
$
|t_n-s_n|\xrightarrow{n\to\infty}0.
$ 
 This is indeed true since $\left(s_n\right)_n$ can be chosen such that 
$
|t_n-s_n|\leqslant 2\cdot \text{diam}\left(A_{\varepsilon/2,n}^k\right)\xrightarrow{n\to\infty}0,
$
infering via Lagrange theorem to state that there exist a sequence $\left(\tau_n\right)_n$ such that $|D(t_n)-D(s_n)|=|D'(\tau_n)||t_n-s_n|$, but since $t_n\in A_{\varepsilon,n}^k,s_n\notin A_{\varepsilon/2,n}$ is easily obtained that 
$
|D(t_n)-D(s_n)|\geqslant\frac{\varepsilon}{2},
$
 but, considering that $|t_n-s_n|\xrightarrow{n\to\infty}0$ this would imply that $|D'(\tau_n)|\to\infty$, contradicting  $D'\in L^\infty$ and concluding the proof.
\end{proof}

This lemma states a very important property, which is that, for every diverging sequence of times we can extract a subsequence such that $\rho_m\overset{\star}{\rightharpoonup}\sum_{i\in\left\lbrace-,0,+\right\rbrace}m_i\delta_{x_i}$.\\

We would like to understand better the structure of the invertible branches$X_i$ of $H$.
\begin{lemma}\label{asymptotic expansion of the invertible branches}
Assume the potential $H$ is $ \mathcal{C}^4_{loc}\left(\mathbb{R}\right) $, and consider a neighborhood $\left(\sigma^\star-\delta,\sigma^\star\right]$, where $\delta$ is considered to be small. Recall that, accordingly to \eqref{nuovo sigma} the functions $X_i,i=-,0,+$ represent the invertible branches of the potential $H$. Then the functions $A_{-/0}\left(\sigma\right)=X_{-/0}\left(\sigma\right)-x_\star$ are uniquely determined,  $A_{-/0}\left(\sigma\right)=\mathcal{O}\left(\sqrt{\sigma^\star-\sigma}\right)$ in a vicinity if $\sigma^\star$, and, moreover, setting $\sigma^\star-\sigma=\Delta\sigma$ under the regularity assumption made on $H$, and supposing that $H'''\left(x_\star\right)\neq 0$ the following expansion holds
$$
 A_{-/0}(\sigma)=\sum_{n= 1}^3a_n^{-/0}\left(\Delta\sigma\right)^{n/2}+\mathcal{O}
 \left(\left(\Delta\sigma\right)^2\right).
$$

On the other hand, always in the same vicinity of $\sigma^\star$, denoting accordingly to the notation introduced, $X_+\left(\sigma^\star\right)=\sigma^{\star\star}$ we have  $X_+\left(\sigma^\star\right)-x^{\star\star}=\mathcal{O}\left(\Delta\sigma\right)$, and 
$$
X_+\left(\sigma^\star\right)-x^{\star\star}=\sum_{n= 1}^3a^+_n\left(\Delta\sigma\right)^n
+\mathcal{O}
 \left(\left(\Delta\sigma\right)^2\right).
$$

Moreover $\left| A_{-/0}\left(\sigma\right)\right|,\left|X_+\left(\sigma^\star\right)-x^{\star\star}\right| >0$  if $\sigma\neq \sigma^\star$.
\end{lemma}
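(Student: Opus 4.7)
The strategy is to invert $H'$ locally around the two base points $x_\star$ and $x^{\star\star}$ separately, because the geometry is different there. At $x_\star$ the second derivative $H''$ vanishes (since by (A2) $H''\geqslant 0$ on one side of $x_\star$ and $H''<0$ on the other, continuity forces $H''(x_\star)=0$), which accounts for the square-root (Puiseux) character of the expansion of $A_{-/0}$. At $x^{\star\star}$, interior to the stable region, strict monotonicity of the third branch of $(H')^{-1}$ forces $H''(x^{\star\star})>0$, so the inversion is smooth and produces integer powers of $\Delta\sigma$.

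For the branches near $x_\star$, I would first observe that $x_\star$ is a local maximum of $H'$ (increasing on its left, decreasing on its right), hence $H'''(x_\star)\leqslant 0$; the hypothesis $H'''(x_\star)\neq 0$ upgrades this to $H'''(x_\star)<0$. Using $H\in\mathcal{C}^4_{\text{loc}}$ I would Taylor expand
\begin{equation*}
H'(x_\star+y)-\sigma^\star=\tfrac{1}{2}H'''(x_\star)\,y^{2}+\tfrac{1}{6}H^{(4)}(x_\star)\,y^{3}+\mathcal{O}(y^{4}),
\end{equation*}
and impose $H'(x_\star+y)=\sigma^\star-\Delta\sigma$, yielding an implicit equation whose leading balance is $y^{2}\sim -2\Delta\sigma/H'''(x_\star)$, real and positive for $\Delta\sigma\geqslant 0$. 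Substituting the Puiseux ansatz $y=a_{1}(\Delta\sigma)^{1/2}+a_{2}\Delta\sigma+a_{3}(\Delta\sigma)^{3/2}+\mathcal{O}((\Delta\sigma)^{2})$ and matching powers of $(\Delta\sigma)^{1/2}$, the order $\Delta\sigma$ balance yields $a_{1}^{\pm}=\pm\sqrt{-2/H'''(x_\star)}$, the order $(\Delta\sigma)^{3/2}$ balance fixes $a_{2}$ in terms of $H^{(4)}(x_\star)$, and the order $(\Delta\sigma)^{2}$ balance fixes $a_{3}$. The sign choice in $a_{1}^{\pm}$ is dictated by (A2): the branch $X_{-}$ lies below $x_\star$ and the branch $X_{0}$ above it for $\sigma<\sigma^\star$, so $X_{-}$ is paired with the negative root and $X_{0}$ with the positive one, giving uniqueness of the full expansion.

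For $X_{+}$ near $x^{\star\star}$, the Taylor expansion
\begin{equation*}
H'(x^{\star\star}+z)-\sigma^\star=H''(x^{\star\star})\,z+\tfrac{1}{2}H'''(x^{\star\star})\,z^{2}+\tfrac{1}{6}H^{(4)}(x^{\star\star})\,z^{3}+\mathcal{O}(z^{4})
\end{equation*}
has non-degenerate linear term, so the classical implicit function theorem (equivalently, formal inversion of a convergent power series) gives a unique expansion $z=\sum_{n=1}^{3}a_{n}^{+}(\Delta\sigma)^{n}+\mathcal{O}((\Delta\sigma)^{4})$, with $a_{1}^{+}=-1/H''(x^{\star\star})$ and the subsequent coefficients determined by the same matching procedure. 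The non-vanishing assertions $|A_{-/0}|,\,|X_{+}-x^{\star\star}|>0$ for $\sigma\neq\sigma^\star$ follow at once from the strict monotonicity of the three inverse branches together with $X_{-}(\sigma^\star)=X_{0}(\sigma^\star)=x_\star$ and $X_{+}(\sigma^\star)=x^{\star\star}$.

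The genuinely delicate step is not the coefficient matching, which is mechanical, but the correct bookkeeping of signs: one has to use the sign pattern of $H''$ across $x_\star$ imposed by (A2) to force $H'''(x_\star)<0$, and then assign the two signed square roots to the correct branches $X_{-}$ and $X_{0}$ using their prescribed monotonicity. Everything else reduces to routine power-series manipulation controlled by the $\mathcal{C}^{4}$ regularity of $H$.
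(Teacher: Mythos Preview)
Your proposal is correct and follows essentially the same route as the paper: Taylor expand $H'$ around $x_\star$ (using $H''(x_\star)=0$), deduce $H'''(x_\star)<0$ from the sign of $\sigma-\sigma^\star$ together with the hypothesis $H'''(x_\star)\neq 0$, insert a half-integer Puiseux ansatz and match coefficients for $A_{-/0}$, and treat $X_+$ by ordinary inversion since $H''(x^{\star\star})>0$. The only point the paper adds beyond your outline is an explicit justification of the Puiseux ansatz (in a separate remark) via the implicit function theorem applied to $F^0\bigl((\Delta\sigma)^{1/2},A_0\bigr)=A_0\sqrt{c+\mathcal{O}(A_0)}-(\Delta\sigma)^{1/2}$, which you implicitly take for granted; otherwise the arguments coincide.
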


\begin{proof}
The uniqueness is clear and comes from the definition in \eqref{nuovo sigma} considering that, by hypothesis, we have been considering potentials $H$ with strictly monotone invertible branches.\\
By definition of the function $X_0$ (see \eqref{nuovo sigma}) we have that $\sigma=H'\left(X_0\left(\sigma\right)\right)$, performing a Taylor expansion of the right hand side of this equation in terms of the perturbation $X_0\left(\sigma\right)-x_\star$ and considering the fact that $H''(x_\star)=0$ we obtain that
 \begin{equation}\label{eq0}
 \sigma-\sigma^\star=\frac{H'''(x_\star)}{2}\left(X_0(\sigma)-x_\star\right)^2+\mathcal{O}\left(\left(X_0(\sigma)-x_\star\right)^3\right).
 \end{equation}

Now, from equation \eqref{eq0} we can assert that $H'''(x_\star)\leqslant 0$ comparing the signs of the left hand side with the right hand side, moreover, considering that by hypothesis  $H'''\left(x_\star\right)\neq 0$ we obtain that $H'''\left(x_\star\right)< 0$.\\
We need a detailed analysis of the factors $A_{-/0}$, where  $A_{-/0}(\sigma)=X_{-/0}(\sigma)-x_\star$. Let us make the following ansatz
 \begin{equation}
 \label{espansione Azero/meno}
 A_{-/0}(\sigma)=\sum_{n= 1}^3a_n^{-/0}(\Delta\sigma)^{n/2}+\mathcal{O}
 \left(\left(\Delta\sigma\right)^2\right),
  \end{equation}

which we will justify at the end of this proof, see Remark \ref{fz implicita}, $\Delta\sigma=\sigma^\star-\sigma$. \\

 From \eqref{eq0}, that
 $$
 A_0(\sigma)= \sqrt{\frac{2}{|H'''(x_\star)|}}(\sigma^\star-\sigma)^{1/2}+o\left((\sigma^\star-\sigma)^{1/2}\right),
 $$
 The same procedure gives us that
  $$
 A_-(\sigma)= -\sqrt{\frac{2}{|H'''(x_\star)|}}(\sigma^\star-\sigma)^{1/2}+o\left((\sigma^\star-\sigma)^{1/2}\right),
 $$
 where indeed we have that $ A_-(\sigma)=X_-(\sigma)-x_\star$. What is left is to understand the asymptotic behavior of the linear term $X_+\left(\sigma^\star\right)-x^{\star\star}$, but this is easily obtained performing the same procedure above, in particular we obtain
  \begin{equation}
 \label{correzione 3}
 X_+(\sigma)= X_+(\sigma^\star)+\frac{\sigma-\sigma^\star}{H''(x^{\star\star})}+o\left(\sigma-\sigma^\star\right),
 \end{equation}
 hence there is a linear dependence from the parameter $\sigma$, since we do know that $H''(X_+(\sigma^\star))>0$. Equation \eqref{correzione 3} can be justified formally via an argument similar to the one performed in Remark \ref{fz implicita}.\\
 Putting together the results obtained we get
 \begin{equation}
  \label{stimesigma}
 \begin{array}{l}
 X_+(\sigma)= X_+(\sigma^\star)-b(\sigma-\sigma^\star)+o\left(\sigma-\sigma^\star\right),\\
 X_0(\sigma)= x_\star +c (\sigma^\star-\sigma)^{1/2}+o\left((\sigma^\star-\sigma)^{1/2}\right),\\
 X_-(\sigma)= x_\star -c (\sigma^\star-\sigma)^{1/2}+o\left((\sigma^\star-\sigma)^{1/2}\right).
 \end{array}
 \end{equation}
 Where $c=\sqrt{\frac{2}{|H'''(x_\star)|}},b=\frac{1}{H''(x^{\star\star})}$.\\
 
 At this point the first order expansion is clear. We will need though in the following the expansion of $A_{-/0}$ in \eqref{espansione Azero/meno} up to the linear term, i.e. the second order.\\
 To do so we evaluate the next Taylor element in \eqref{eq0} we obtain that
 \begin{equation}
 \label{approxdeltasigmazero}
\Delta \sigma = \frac{1}{2}H'''(x_\star)A_0(\sigma)^2+\frac{1}{6}H^{(4)}(x_\star)A_0(\sigma)^3+ \mathcal{O}\left(A_0(\sigma)^4\right) \\=c_1A_0(\sigma)^2+c_2A_0(\sigma)^3+ \mathcal{O}\left(A_0(\sigma)^4\right),
\end{equation}
indeed moreove we can express $A_0(\sigma)$ as
 \begin{equation}
 \label{approssimazioneAzero}
 A_0(\sigma)= a_1^0(\Delta\sigma)^{1/2}+a_2^0\Delta\sigma +\mathcal{O}\left((\Delta\sigma)^{3/2}\right),
 \end{equation}
 plugging \eqref{approssimazioneAzero} into \eqref{approxdeltasigmazero} and after some algebraic manipulation  we obtain
 $$
 \Delta\sigma = \left( a_1^0 \right)^2c_1\Delta\sigma+\left( \left(a_1^0\right)^3c_2+2a_1^0a_2^0c_1\right)\left(\Delta\sigma\right)^{3/2}+o\left((\Delta\sigma)^{3/2}\right),
 $$
 equating the coefficients of $\Delta\sigma$ and $\left(\Delta\sigma\right)^{3/2}$ to zero and solving the non-linear system in the unknown $a_i^0$ we obtain two solutions, namely the two couples $\left(a_1^0,a_2^0\right)$ and $\left(a_1^-,a_2^-\right)$
 
 \begin{align} \label{coeff1}
 a_1^0&=\sqrt{\frac{1}{c_1}},&a_1^-&=-\sqrt{\frac{1}{c_1}},\\
 a_2^0&=-\frac{c_2}{2c_1^2},& a_2^-&=-\frac{c_2}{2c_1^2},\label{coeff 2}
  \end{align}
  We recall that $c_1,c_2$ are defined as in \eqref{approxdeltasigmazero}.\\
\end{proof}

\begin{rem}\label{fz implicita}
We want to justify equation \eqref{espansione Azero/meno}. Taylor expantion yelds 
$
\Delta\sigma=cA_{-/0}(\sigma)^2+\mathcal{O}\left(A_{-/0}(\sigma)^3\right),
$
with $c>0$. For $\sigma$ sufficiently close to $\sigma^\star$ both right and left hand side of the equation above are positive, hence it makes sense to take the square root on both sides obtaining the following two equations
\begin{align*}
\left(\Delta\sigma\right)^{1/2}&=A_0\sqrt{c+\mathcal{O}\left(A_0\right)},\\
\left(\Delta\sigma\right)^{1/2}&=-A_-\sqrt{c+\mathcal{O}\left(A_-\right)}.
\end{align*}
Define 
$
F^0\left(\left(\Delta\sigma\right)^{1/2},A_0\right)=A_0\sqrt{c+\mathcal{O}\left(A_0\right)}-\left(\Delta\sigma\right)^{1/2},
$
a straightforward computation shows that
$
\frac{\partial F^0}{\partial A_0}\left(\left(\Delta\sigma\right)^{1/2},0\right)=\sqrt{c}\neq 0.
$
By the Implicit function theorem we argue that there exists a function $A_0=A_0\left(\left(\Delta\sigma\right)^{1/2}\right)$ such that,
$
\Delta\sigma=cA_0\left(\left(\Delta\sigma\right)^{1/2}\right)^2+
\mathcal{O}\left(A_0\left(\left(\Delta\sigma\right)^{1/2}\right)^3\right),
$
moreover $A_0$ was a $ \mathcal{C}^4_\text{loc}\left(\mathbb{R}\right) $, hence we can express it as
$
A_0\left(\left(\Delta\sigma\right)^{1/2}\right)=\displaystyle\sum_{n=
1}^3a_n^0 \left(\Delta\sigma\right)^{n/2}+\mathcal{O}
 \left(\left(\Delta\sigma\right)^2\right),
$
which is exactly \eqref{espansione Azero/meno}. A similar approach is valid also for $A_-$.\hspace*{\fill}$\blacklozenge$\medskip
\end{rem}

At this point we can study the problem of the uniqueness as $t\to\infty$ for potentials $H$ which are $\mathcal{C}^4_\text{loc}\left(\mathbb{R}\right)$.

\begin{prop}\label{effetto fisa}
Suppose that the potential $H$ satisfies the hypotesis stated at the beginning of this paper, in Subsection \ref{assumptions on the potential}, and moreover  suppose that $\sigma(t)\in\left(\sigma_\star+\eta,\sigma^\star-\eta\right)$ for some $\eta >0$ and for all times $t>t_0$. Then the limit is unique.
\end{prop}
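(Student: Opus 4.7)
My plan has three stages: first use $D(t)\to 0$ together with the hypothesis to pin down the support of every weak-$\star$ limit to the three inverse-branch points, next show that no mass survives on the unstable branch (i.e.\ $m_0(t)\to 0$), and finally use the conservation of $\ell^\star$ and the monotonicity of the energy on the equilibrium manifold to force $\sigma_\infty$ to be unique.

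For the first stage, the preceding lemma gives $D(t)\to 0$, which reads
$$
\int \bigl(H'(x)-\sigma(t)\bigr)^{2}\,\rho(t,x)\,\mathrm{d}x\ \longrightarrow\ 0.
$$
Since $\sigma(t)\in[\sigma_\star+\eta,\sigma^\star-\eta]$, the three branches $X_{-},X_{0},X_{+}$ are smooth functions of $\sigma$ on this interval, with images bounded away from $x_\star,x^\star$. This forces any weak-$\star$ subsequential limit to be supported on $\{X_{-}(\sigma_\infty),X_{0}(\sigma_\infty),X_{+}(\sigma_\infty)\}$, where $\sigma_\infty\in[\sigma_\star+\eta,\sigma^\star-\eta]$ is the corresponding subsequential limit of $\sigma(t_m)$. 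In particular the structure identified in Theorem \ref{limit rho} is rigid: only $\sigma_\infty$ and the three masses $m_{-},m_{0},m_{+}$ remain as unknowns.

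For the second stage, set $m_0(t)=\int_{x_\star}^{x^\star}\rho(t,x)\,\mathrm{d}x$. A direct flux computation using \eqref{mass splitting equation} gives
$$
\frac{\mathrm{d}}{\mathrm{d}t}m_0(t)=\bigl(\sigma_\star-\sigma(t)\bigr)\rho(x^\star,t)-\bigl(\sigma^\star-\sigma(t)\bigr)\rho(x_\star,t)\leqslant 0
$$
under the hypothesis, so no mass enters the spinodal region. To upgrade this monotonicity into $m_0(t)\to 0$, I would exploit the uniform instability $|H''(X_0(t))|\geqslant c>0$ provided by the $\sigma$-hypothesis: any characteristic $y(t)$ lying in $\bigl(x_\star+\gamma,X_0(t)-\delta\bigr)\cup\bigl(X_0(t)+\delta,x^\star-\gamma\bigr)$ satisfies $\dot y=\sigma-H'(y)$ with $y-X_0(t)$ growing exponentially at rate at least $c$, so it exits the spinodal region within a time bounded by $T=T(\gamma,\delta,c)$ independent of $t$. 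A Gronwall-type argument on a suitably weighted functional (e.g.\ $\int_{x_\star}^{x^\star}\log[1+(x-X_0(t))^{-2}]\rho(t,x)\,\mathrm{d}x$), combined with monotonicity of $m_0$, should then force $m_0(t)\to 0$. This is the technical heart of the proof and the main obstacle: turning the heuristic exponential leakage into a genuine decay statement without pointwise control on $\rho$.

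For the third stage, once $m_0^\infty=0$ is known, every weak-$\star$ limit is of the form $m_{-}^\infty\delta_{x_{-}^\infty}+m_{+}^\infty\delta_{x_{+}^\infty}$ with the three constraints
$$
m_{-}^\infty+m_{+}^\infty=1,\qquad m_{-}^\infty x_{-}^\infty+m_{+}^\infty x_{+}^\infty=\ell^\star,\qquad H'(x_{-}^\infty)=H'(x_{+}^\infty)=\sigma_\infty.
$$
For each admissible $\sigma_\infty$ this determines $(m_{\pm}^\infty,x_{\pm}^\infty)$ uniquely, via $x_{\pm}^\infty=X_{\pm}(\sigma_\infty)$ and solving the $2\times 2$ linear system for $m_{\pm}^\infty$. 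To pin down $\sigma_\infty$ itself, I would observe that the total energy $E(\sigma):=m_{-}(\sigma)H(X_{-}(\sigma))+m_{+}(\sigma)H(X_{+}(\sigma))$ is strictly monotonic in $\sigma$ on the admissible interval: differentiating and using $H'(X_{\pm})=\sigma$ together with the moment constraints reduces $\mathrm{d}E/\mathrm{d}\sigma$ to a Maxwell-type integral $\int_{X_{-}(\sigma)}^{X_{+}(\sigma)}(\sigma-H'(t))\,\mathrm{d}t$ multiplied by the non-zero factor $\mathrm{d}m_{-}/\mathrm{d}\sigma$, and this integral has a definite sign away from the Maxwell line thanks to the double-well structure of $H$. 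Since the full energy $E(t)$ is monotonically decreasing with limit $E_\infty$, the identity $E(\sigma_\infty)=E_\infty$ has a unique solution in the admissible interval, and hence $\sigma_\infty$, and therefore the whole weak-$\star$ limit, is unique.
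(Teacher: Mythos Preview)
Your second stage is, as you acknowledge, incomplete, and the weighted-functional route you sketch would be hard to close. The paper's device here is more elementary: it tracks the spinodal maximum of $\rho$. Writing $M_0(t)$ for the location of the local maximum of $\rho(\cdot,t)$ in $(x_\star,x^\star)$ and $r(t)=\rho(M_0(t),t)$, one has $\rho_x(M_0(t),t)=0$ and hence, directly from \eqref{mass splitting equation},
\[
r'(t)=\rho_t(M_0(t),t)=H''(M_0(t))\,r(t)\leqslant -\epsilon\,r(t),
\]
because the hypothesis $\sigma(t)\in(\sigma_\star+\eta,\sigma^\star-\eta)$ keeps $M_0(t)$ uniformly inside the spinodal region where $H''\leqslant-\epsilon<0$. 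Gronwall gives $r(t)\to 0$, and from this the paper concludes $m_0(t)\to 0$. No characteristic-exit-time estimate is needed.

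Your third stage contains a genuine error. The computation you outline yields
\[
\frac{\mathrm{d}E}{\mathrm{d}\sigma}=\frac{\mathrm{d}m_-}{\mathrm{d}\sigma}\int_{X_-(\sigma)}^{X_+(\sigma)}\bigl(\sigma-H'(t)\bigr)\,\mathrm{d}t,
\]
and this Maxwell integral vanishes and changes sign precisely at the Maxwell value of $\sigma$; thus $E(\sigma)$ is \emph{not} monotone on $(\sigma_\star+\eta,\sigma^\star-\eta)$ in general, and $E(\sigma_\infty)=E_\infty$ need not have a unique solution. The paper bypasses energy entirely. The key observation you miss is that, under the hypothesis, the boundary fluxes give $\dot m_\pm(t)\geqslant 0$, so $m_\pm(t)\to m_\pm^\infty$ along the \emph{full} time axis, not merely along subsequences. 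With $m_\pm^\infty$ already fixed, the conservation law becomes
\[
\ell^\star=m_-^\infty X_-(\sigma(t))+m_+^\infty X_+(\sigma(t))+o(1),
\]
and since both $X_-$ and $X_+$ are strictly increasing in $\sigma$, the right-hand side is a strictly monotone function of $\sigma(t)$. This forces $\sigma(t)$ to converge to a unique value, without any appeal to the energy.
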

\begin{proof} 
First of all we claim that, there exist a local maximum $M_0(t)$ of $\rho(x,t)$, around which all the mass of the unstable region concentrates such that, 
$
\lim_{t\to\infty}\bigl|M_0(t)-X_0(t)\bigr|=0,
$
where $X_0(t)$ is the only spinodal state such that $\sigma(t)=H'\left(X_0(t)\right)$. $M_0$ exists thanks to Theorem \ref{limit rho}.\\

Moreover we claim that $M_0\in\mathcal{C}^1(\mathbb{R})$. This in indeed true since we are considering the mass transported along characteristics, and by Proposition \ref{global existence for char} these are defined and $\mathcal{C}^1$ globally in $\mathbb{R}$.\\
Since $M_0$ is a local maximum for $\rho$ satisfies
$
\rho_x\left(M_0(t),t\right)=0,
$
and hence, expanding the equation \eqref{mass splitting equation} into $\rho_t=H''(x)\rho+\left(H'(x)-\sigma(t)\right)\rho_x$ we obtain that
\begin{equation}\label{ennesima roba inutile}
\rho_t\left(M_0(t),t\right)=H''(M_0(t))\rho\left(M_0(t),t\right).
\end{equation}
Set $\rho\left(M_0(t),t\right)=r(t)$. Thanks to the hypothesis $\sigma(t)\in\left(\sigma_\star+\eta,\sigma^\star-\eta\right)$, and since $M_0=X_0+o(1)$ we can state that $M_0(t)\in\left(x_\star+\eta',x^\star-\eta'\right)$ for some $\eta'$ small and for $t$ sufficiently large. By the structure of the potential, hence
\begin{equation}\label{roba con epsilon}
H''(M_0(t))\leqslant -\epsilon,\epsilon>0,
\end{equation}
hence, considering \eqref{roba con epsilon} we obtain
$
r'(t)\leqslant-\epsilon r(t).
$
We can, at this point, apply Gronwall inequality to the previous inequality, obtaining 
$
r(t)\leqslant r(t_0)e^{-\epsilon(t-t_0)}\xrightarrow{t\to\infty}0,
$
Proving that $m_0(t)\to 0$ as $t\to\infty$.\\
Moreover, under these assumptions, $\dot{m}_\pm >0$, hence
\begin{equation}\label{appellemzerovanishes}
\ell^\star=\left(m_-^\infty +o(1)\right)X_-(\sigma)+\left(m_+^\infty +o(1)\right)X_+(\sigma)+o(1)X_0(\sigma),
\end{equation}
where
$
m_\pm^\infty=\lim_{t\to\infty}m_\pm (t),
$
but, by construction $X_i\in L^\infty$, hence equation \eqref{appellemzerovanishes} can be restated as
\begin{equation}
\label{appellemzerovanishes2}
\ell^\star=m_-^\infty X_-(\sigma)+m_+^\infty X_+(\sigma)+o(1).
\end{equation}

We can, at this point, to prove the uniqueness. Suppose we do not have uniqueness for $\rho$, than there's no uniqueness for $\sigma$ either, which means that $\sigma$ oscillates between two values $\sigma_1<\sigma_2$, and hence at these two values the convex combination for $\ell^\star$ in \eqref{appellemzerovanishes2} would give two different results, in particular 
$$\ell^\star+o(1)=m_-^\infty X_-(\sigma_1)+m_+^\infty X_+(\sigma_1)<m_-^\infty X_-(\sigma_2)+m_+^\infty X_+(\sigma_2)=\ell^\star+o(1),$$
 which is indeed absurd being $\ell^\star$ a conserved quantity.
\end{proof}
\begin{rem}
It might be interesting to notice that Proposition \ref{effetto fisa} underlines clearly the equivalence between the vanishing of the spinodal mass and the uniqueness of the weak-$\star$ limit. \hspace*{\fill}$\blacklozenge$\medskip
\end{rem}
\begin{theorem}\label{unicità 2}
Under assumptions of Subsection \ref{assumptions on the potential} on the potential $H$, if $\ell^\star\in\left[x_\star,x^\star\right]^c$, if $H$ is $\mathcal{C}^4$ around $x_\star$ and $x^\star$, and $H^{(4)}(x_\star),H^{(4)}(x^\star)>0$, the limit is unique.
\end{theorem}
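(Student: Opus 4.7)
The plan is to upgrade Proposition \ref{effetto fisa} (which handles the case where $\sigma(t)$ eventually stays in a compact subinterval of $(\sigma_\star,\sigma^\star)$) to cover the boundary scenarios $\sigma(t_n)\to\sigma^\star$ or $\sigma(t_n)\to\sigma_\star$; once these are excluded, Proposition \ref{effetto fisa} closes the argument.

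Assume for contradiction that two divergent sequences $(t_n)$ and $(t_n')$ give rise to distinct weak-$\star$ limits $\rho^{(1)}\neq\rho^{(2)}$. By Theorem \ref{limit rho} and Corollary \ref{convergenza debole per classi di funzioni}, each limit is an atomic measure supported on the preimages under $H'$ of a single value $\sigma_i^\infty\in[\sigma_\star,\sigma^\star]$. If both $\sigma_i^\infty$ lay in the open interval, Proposition \ref{effetto fisa} would already yield uniqueness; hence at least one, say $\sigma_1^\infty$, must equal $\sigma^\star$ or $\sigma_\star$. By the symmetric form of the hypotheses on $H^{(4)}$ we may assume $\sigma_1^\infty=\sigma^\star$. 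In this case $X_-(\sigma(t_n))$ and $X_0(\sigma(t_n))$ both collapse to $x_\star$ while $X_+(\sigma(t_n))\to x^{\star\star}$, so mass conservation and the first-moment constraint pin the limit down to $(m_-^{(1)}+m_0^{(1)})\delta_{x_\star}+m_+^{(1)}\delta_{x^{\star\star}}$ with
\begin{equation*}
m_+^{(1)}=\frac{\ell^\star-x_\star}{x^{\star\star}-x_\star},\qquad m_-^{(1)}+m_0^{(1)}=\frac{x^{\star\star}-\ell^\star}{x^{\star\star}-x_\star}.
\end{equation*}
The hypothesis $\ell^\star\in[x_\star,x^\star]^c$ immediately rules out the subcase $\ell^\star<x_\star$ (which would force $m_+^{(1)}<0$), and the symmetric analysis at $\sigma_1^\infty=\sigma_\star$ rules out $\ell^\star>x^{\star\star}$.

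The residual subcase is $\ell^\star\in(x^\star,x^{\star\star}]$ with $\sigma_1^\infty=\sigma^\star$ (together with its mirror $\ell^\star\in[x_{\star\star},x_\star)$ with $\sigma_1^\infty=\sigma_\star$), and it is here that the hypothesis $H^{(4)}(x_\star)>0$ (respectively $H^{(4)}(x^\star)>0$) becomes essential. I would invoke the refined expansion of Lemma \ref{asymptotic expansion of the invertible branches}: since $c_1=H'''(x_\star)/2<0$ and $c_2=H^{(4)}(x_\star)/6>0$, the second-order coefficient $a_2=-c_2/(2c_1^2)$ is strictly negative, yielding
\begin{equation*}
X_0(\sigma)+X_-(\sigma)=2x_\star+2a_2\,\Delta\sigma+O\!\left(\Delta\sigma^{3/2}\right),\qquad\Delta\sigma=\sigma^\star-\sigma>0,
\end{equation*}
so the centre of the two merging atoms sits strictly to the left of $x_\star$. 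Coupling this refined expansion with the peak-dynamics ODE $r'(t)=H''(M_0(t))r(t)$ from the proof of Proposition \ref{effetto fisa} and the leading asymptotic $|H''(X_0(\sigma))|\asymp|H'''(x_\star)|\sqrt{\Delta\sigma}$, one then argues that $\sigma(t)$ cannot approach $\sigma^\star$ fast enough to keep $m_0(t)$ bounded away from $0$: the requirement $\ell^\star>x^\star$ on the right is incompatible with the centre of the merging atoms lying to the left of $x_\star$, and this incompatibility is quantified by the sign of $a_2$. Thus $m_0^\infty=0$, and the resulting two-atom limit is uniquely determined by $\ell^\star$ via the monotonicity of $X_\pm$ in $\sigma$ together with the constraints $m_-^\infty+m_+^\infty=1$ and $m_-^\infty X_-(\sigma)+m_+^\infty X_+(\sigma)=\ell^\star$.

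The decisive obstacle is the closure of the residual case: one has to convert the sign information $a_2<0$ into a quantitative estimate forcing the spinodal mass to vanish even as the confining curvature $H''(M_0(t))$ degenerates to zero near the spinodal endpoints. The hypothesis $H^{(4)}(x_\star),H^{(4)}(x^\star)>0$ is precisely what produces the favourable geometry of the merging atoms needed to make this estimate compatible with $\ell^\star\in[x_\star,x^\star]^c$.
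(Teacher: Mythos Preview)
Your proposal identifies the right battlefield --- the boundary scenario $\sigma\to\sigma^\star$ --- but leaves the decisive case open; you even flag this yourself (``the decisive obstacle is the closure of the residual case''). The sentence ``one then argues that $\sigma(t)$ cannot approach $\sigma^\star$ fast enough to keep $m_0(t)$ bounded away from $0$'' is not an argument, and the peak-dynamics ODE $r'=H''(M_0)r$ degenerates precisely in this regime since $H''(M_0)\to 0$, so Gronwall gives nothing. The geometric observation that $a_2<0$ pushes the centre of the merging atoms to the left of $x_\star$ is correct but by itself does not force $m_0^\infty=0$; you still need a mechanism.

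The paper's mechanism is different from what you sketch, and does \emph{not} go through $m_0^\infty=0$. Working with $\overline{\sigma}=\limsup\sigma$, $\underline{\sigma}=\liminf\sigma$ (rather than two particular subsequential limits), the hard case is $\underline{\sigma}<\sigma^\star=\overline{\sigma}$. There one inserts the expansions of Lemma~\ref{asymptotic expansion of the invertible branches} into the conserved relation $\ell^\star=\sum m_i(t)X_i(\sigma(t))+o(1)$ and differentiates with respect to $(\Delta\sigma)^{1/2}$; this yields
\[
c\bigl(m_0(t)-m_-(t)\bigr)=-2\bigl(a_2(1-m_+^\infty)-m_+^\infty b\bigr)(\Delta\sigma)^{1/2}+\mathcal{O}(\Delta\sigma),
\]
and the hypothesis $H^{(4)}(x_\star)>0$ makes the leading coefficient strictly positive. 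Hence on any interval where $\sigma$ is decreasing below $\sigma^\star$, this formula forces $m_0-m_-$ to \emph{increase}. But the transport structure gives $\dot m_0\leqslant 0$ and $\dot m_-\geqslant 0$ whenever $\sigma<\sigma^\star$, so $m_0-m_-$ must \emph{decrease}. This monotonicity clash is the contradiction; no vanishing of $m_0$ is needed.

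Two further omissions: (i) your reduction ``if both $\sigma_i^\infty$ lay in the open interval, Proposition~\ref{effetto fisa} would already yield uniqueness'' is not valid as stated, since that proposition requires $\sigma(t)$ to stay in a compact subinterval for \emph{all} large $t$, not merely along two chosen subsequences; the paper avoids this by arguing via $\overline{\sigma},\underline{\sigma}$. (ii) You implicitly assume $H'''(x_\star)\neq 0$; the paper treats the degenerate sub-case $H'''(x_\star)=0$ separately, where $H^{(4)}(x_\star)>0$ gives a cube-root expansion $A_{-/0}\sim c(\Delta\sigma)^{1/3}$ with $c<0$, immediately contradicting $X_0(\sigma)>x_\star$.
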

\begin{rem}
In the following proof $o(1)$ is going to be a general perturbation depending only on $t$ such that $o(1)\xrightarrow{t\to\infty}0$.
\end{rem}
\begin{proof}Define
$\overline{\sigma}=\limsup_{t\to\infty}\sigma(t),
\underline{\sigma}=\liminf_{t\to\infty}\sigma(t),
$and consider the case in which $\ell^\star >x^\star$. The other case is simply symmetric. Note that this in particular implies that $\underline{\sigma}>\sigma_\star$.\\

We are going to divide the problem in several simple sub-cases.
\begin{enumerate}
\item suppose $\underline{\sigma}\geqslant\sigma^\star$. Than $\ell^\star=X_+(\sigma)+o(1)$, and hence, since $\ell^\star$ has to be constant $\sigma$ has to converge to a unique limit.
\item the case in which $\overline{\sigma}<\sigma^\star$ has already been discussed in detail in Proposition \ref{effetto fisa}.
\item suppose $\underline{\sigma}<\sigma^\star$ and $\overline{\sigma}>\sigma^\star$. This implies that we can choose an arbitrary large time $t_\star$ such that $\sigma(t_\star)=\sigma^\star$ and such that there exist $t_\star<t_1<t_2$ such that $\sigma^\star<\sigma_1=\sigma(t_1)<\sigma_2=\sigma(t_2)<\overline{\sigma}$.\\
Let be
$
x_1=X_+(\sigma_1),
x_2=X_+(\sigma_2),
$indeed if $\sigma_1<\sigma_2$ by strictly monotonicity of $H'$ we can say that $x_1\neq x_2$. We obtain hence
\begin{align*}
\ell^\star &= x_1 +o(1),\\
&= x_2 +o(1),
\end{align*}
which is absurd.
\item  $\sigma_\star<\underline{\sigma}<\sigma^\star$ and $\overline{\sigma}=\sigma^\star$. In this case we are not allowed Gronwall inequality as in Proposition \ref{effetto fisa}, but under these assumptions we know that $\dot{m}_+\geqslant 0$, hence there exist
\begin{equation}
\label{limite m+}
\lim_{t\to\infty}m_+(t)=m_+^\infty.
\end{equation}
  We consider some $\sigma$ close to the extremal value $\sigma^\star$, $\sigma <\sigma^\star$,
we are going to perform our analysis on the conserved quantity
 \begin{equation}\label{conservato}
 \ell^\star=m_-(t)X_-\left(\sigma(t)\right)+m_0(t)X_0\left(\sigma(t)\right)+ m_+^\infty X_+(\sigma(t))+o(1),
 \end{equation}
  as long as $\sigma\nearrow\sigma^\star$. We will be forced to divide the proof in sub-cases.
 
\begin{enumerate}
\item Suppose $H'''(x_\star)<0$, whence in this case the asymptotic performed in Lemma \ref{asymptotic expansion of the invertible branches} holds.\\ 
 Inserting \eqref{stimesigma} into \eqref{conservato}, we obtain 
 \begin{equation}
 \label{ausiliaria4}
\ell^\star= m_+^\infty X_+(\sigma^\star)+x_\star\left(m_-(t)+m_0(t)\right)
+c\left(m_0(t)-m_-(t)\right)\left(\Delta\sigma\right)^{1/2}
+\mathcal{O}\left(\Delta\sigma\right)
+o(1).
  \end{equation}
  We remark the fact that, thanks to \eqref{espansione Azero/meno} the term $\mathcal{O}\left(\Delta\sigma\right)$ is $ \mathcal{C}^2_\text{loc}\left(\mathbb{R}\right) $.\\
Now, $\ell^\star$ a constant it has to be independent from $\left(\Delta\sigma\right)^{1/2}$, to this end evaluate
\begin{equation}
\label{uffa}
0=\frac{\partial \ell^\star}{\partial\left(\left(\Delta\sigma\right)^{1/2}\right)}=
c\left(m_0(t)-m_-(t)\right)
+\mathcal{O}\left(\left(\Delta\sigma\right)^{1/2}\right).
\end{equation}
From the equation above whence we obtain that
\begin{equation}
\label{approssimazione strana}
m_0(t)=m_-(t)+\mathcal{O}\left(\left(\Delta\sigma\right)^{1/2}\right).
\end{equation}
Moreover
\begin{equation}
\label{m0+m-}
m_0(t)+m_-(t)=1-m_+(t)=1-m_+^\infty+o(1)=B+o(1).
\end{equation}
Considering the result above with the one in equation \eqref{approssimazione strana} we can argue that
\begin{equation}
\label{correlazione m-}
m_-(t)=\frac{B+o(1)+\mathcal{O}\left(\left(\Delta\sigma\right)^{1/2}\right)}{2}=m_-^\infty+o(1)+\mathcal{O}\left(\left(\Delta\sigma\right)^{1/2}\right).
\end{equation}
Moreover a similar approximation is valid also for $m_0$ thanks to \eqref{approssimazione strana}, whence
\begin{equation}
\label{correlazione m0}
m_0(t)=m_0^\infty+o(1)+\mathcal{O}\left(\left(\Delta\sigma\right)^{1/2}\right).
\end{equation}

This approximation is valid as long as the approximation \eqref{ausiliaria4} holds.
Thanks to \eqref{approssimazione strana} we can say that $m_-^\infty=m_0^\infty$, whence, if $\sigma$ oscillates, close to the bifurcation point the masses stabilize around the same value.\\

  Consider now the term $c\left(m_0(t)-m_-(t)\right)$ appearing in \eqref{ausiliaria4}. We know, thanks to Remark \ref{ultimo rem} that
  \begin{equation}
  \label{roba de giustificar}
  c\left(m_0(t)-m_-(t)\right)=-2\left(a_2^{-/0}\left(1-m_+^\infty\right)-m_+^\infty b\right)\left(\Delta\sigma\right)^{1/2}+\mathcal{O}
  \bigl(\left(\Delta\sigma\right)\bigr),
    \end{equation}
  with $c=\left| a_1^{-/0}\right|>0$. Notice that thanks to \eqref{coeff 2} and the fact that $H^{(4)}\left( x_\star\right)>0$ results that $-2\left(a_2^{-/0}\left(1-m_+^\infty\right)-m_+^\infty b\right)>0$.\\
  At this point, consider an interval $[ t_1, t_2 ]$ such that $\sigma<\sigma^\star$ is decreasing in such interval. This interval always exists since $\sigma\in\mathcal{C}^1$ and $\underline{\sigma}<\sigma^\star$, and such that $\sigma(t_2)$ is close enough such that the approximation \eqref{roba de giustificar} still holds. Thanks to \eqref{roba de giustificar} we can argue that in $[t_1,t_2]$, $m_0-m_-$ is an increasing function. On the other hand as long as $\sigma<\sigma^\star$ we have that $\dot{m}_0\leqslant 0$ and $\dot{m}_-\geqslant 0$, hence $m_0-m_-$ has to be decreasing according to this consideration. This is indeed an absurd, proving that $\sigma$ can not oscillate.
   
\item Suppose at last that $H'''(x_\star)=0$, whence, with the same considerations which have been done before
$$
\sigma-\sigma^\star=\frac{H^{(4)}(x_\star)}{6}A_{-/0}^3(\sigma)+\mathcal{O}\left(A_{-/0}^4(\sigma)\right).
$$
Recall that we are considering $H^{(4)}(x_\star)> 0$. Set $\Delta\sigma$ as in the point a. With a procedure similar to the one performed in Remark \ref{fz implicita} we can conclude that
$
A_{-/0}(\sigma)=c\left(\Delta\sigma\right)^{1/3}+o\left(\left(\Delta\sigma\right)^{1/3}\right).
$\\

 $c$ in the equation above that takes the following explicit expression
$
c=\left(-\frac{H^{(4)}\left(x_\star\right)}{6}\right)^{-1/3}<0,
$
whence, for $\sigma<\sigma^\star$ we have that
$
X_{-/0}\left(\sigma\right)-x_\star=c\left(\Delta\sigma\right)^{1/3}+o\left(\left(\Delta\sigma\right)^{1/3}\right)<0,
$
This indeed implies that $X_{-/0}\left(\sigma\right)<x_\star$ for $\sigma$ sufficiently close to $\sigma^\star$, but this contradicts the definition of $X_0$, hence we obtained an absurd.

\end{enumerate}

\end{enumerate}
\end{proof}

\begin{rem}\label{ultimo rem}
We want to justify equation \eqref{roba de giustificar}. Define $R(\sigma)$ the $\mathcal{O}\left(\left(\Delta\sigma\right)\right)$ function appearing in \eqref{ausiliaria4}. Whence
$$
\frac{\partial R}{\partial\left(\left(\Delta\sigma\right)^{1/2}\right)}=\mathcal{O}\left(\left(\Delta\sigma\right)^
{1/2}\right)
$$ 
where $\mathcal{O}\left(\left(\Delta\sigma\right)^
{1/2}\right)$ is the function appearing in \eqref{uffa}.  Thanks to \eqref{espansione Azero/meno} and \eqref{correzione 3} we can express $R(\sigma)$ , i.e.
$$
R(\sigma)=m_0(t)a_2^0\Delta\sigma+m_-(t)a_2^-\Delta\sigma-m_+^\infty b\Delta\sigma+o(1)+\mathcal{O}\left(\left(\Delta\sigma\right)^
{3/2}\right).
$$
Recall that, thanks to \eqref{coeff 2} $a_2^0=a_2^-=a$, and that the term $\mathcal{O}\left(\left(\Delta\sigma\right)^
{3/2}\right)$ is $\mathcal{C}^1$ around 0 and that $o(1)$ depends only on $t$. Using \eqref{m0+m-} we obtain that
$$
R(\sigma)=\left( a\left(1-m_+^\infty\right)-m_+^\infty b\right)\Delta\sigma+o(1)+\mathcal{O}\left(\left(\Delta\sigma\right)^
{3/2}\right).
$$
At this point differentiate both sides of the equation above obtaining 
$$
\frac{\partial }{\partial\left(\left(\Delta\sigma\right)^{1/2}\right)}R(\sigma)=
2\left( a\left(1-m_+^\infty\right)-m_+^\infty b\right)\left(\Delta\sigma\right)^{1/2}+\mathcal{O}\left(\Delta\sigma
\right),
$$
which proves \eqref{roba de giustificar}.\hspace*{\fill}$\blacklozenge$\medskip
\end{rem}
\appendix
\section{Estimates for the fixed point theorem and other technicalities.}

\begin{lemma}\label{invertibilità L} Consider the operator $\mathcal{L}$ as defined in \eqref{definition L}, than $\mathcal{L}$ is invertible.

\end{lemma}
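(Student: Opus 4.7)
The approach is to rewrite $\mathcal{L}=I-\mathcal{K}$, where
$$\mathcal{K}\phi(\xi)\;=\;\int_{0}^{\infty}k(\tau)\phi(\xi-\tau)\,\d\tau,\qquad k(\tau)\;=\;\bigl[ma\,e^{-a\tau}+(1-m)b\,e^{-b\tau}\bigr]\mathbf{1}_{\tau\ge 0},$$
is a Volterra convolution against a non-negative causal kernel with $\int_{0}^{\infty}k(\tau)\,\d\tau=m+(1-m)=1$. I would establish invertibility on $K_{M,\xi_0}(\delta)$ by showing that $\mathcal{K}$ is a strict contraction in the weighted norm and then invoking the Neumann series.

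The core estimate goes as follows. For $\phi\in K_{M,\xi_0}(\delta)$ one pulls the $\xi$-weight out of the convolution using the pointwise exponential bound on $\phi$; separating variables inside the integral gives
$$|\mathcal{K}\phi(\xi)|\;\le\;\|\phi\|_{K_{M,\xi_0}(\delta)}\,e^{-(2a+\delta)\xi}\cdot q(\delta),\qquad q(\delta)\;=\;\frac{ma}{3a+\delta}+\frac{(1-m)b}{2a+b+\delta}.$$
A direct algebraic check shows $q(0)=\tfrac{m}{3}+\tfrac{(1-m)b}{2a+b}<1$, because the first summand is at most $m/3<m$ and the second is strictly less than $1-m$ (since $b<2a+b$); by continuity $q(\delta)<1$ for all sufficiently small $\delta>0$, so $\|\mathcal{K}\|_{\mathrm{op}}<1$. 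The Neumann expansion $\mathcal{L}^{-1}=\sum_{n\ge 0}\mathcal{K}^{n}$ then converges in operator norm, yielding a bounded two-sided inverse. Equivalently, on the Laplace-transform side the symbol is
$$\widehat{\mathcal{L}}(s)\;=\;1-\frac{ma}{s+a}-\frac{(1-m)b}{s+b}\;=\;\frac{s(s+\lambda)}{(s+a)(s+b)},\qquad \lambda:=(1-m)a+mb>0,$$
whose two zeros, at $s=0$ and $s=-\lambda$, are kept away from the relevant Laplace-inversion contour by the exponential weight in $K_{M,\xi_0}(\delta)$; note that positivity of $\lambda$ is precisely the stability hypothesis \eqref{stability condition for well posedness}.

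The main obstacle is that the raw $L^{1}$-norm of $k$ equals exactly $1$, so the contraction argument has no slack in an unweighted space and one \emph{must} exploit the extra exponential weight built into $K_{M,\xi_0}(\delta)$. The heart of the matter is therefore the sharp inequality $q(\delta)<1$: without the correct exponential rate, the kernel saturates the Neumann-series condition and $\mathcal{L}$ is only marginally invertible. Once abstract invertibility has been established, the explicit inverse will be realised in Lemma \ref{funzione di green per operatore} as convolution with the Green's function obtained from the partial-fraction decomposition of $1/\widehat{\mathcal{L}}(s)=(s+a)(s+b)/[s(s+\lambda)]$.
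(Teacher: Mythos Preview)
Your contraction argument has a sign problem that undermines the key estimate. In the paper $a=H''(x_0)<0$ (see \eqref{a}), so the causal kernel $k(\tau)=ma\,e^{-a\tau}+(1-m)b\,e^{-b\tau}$ is \emph{not} non-negative and is \emph{not} in $L^{1}(\mathbb{R}_+)$: the first summand is negative and grows like $e^{|a|\tau}$. Your claim that $\int_0^\infty k=1$ therefore fails. Moreover, in your weighted estimate you have written the wrong sign on the weight factor: from $|\phi(\xi-\tau)|\le\|\phi\|\,e^{-(2a+\delta)(\xi-\tau)}$ one picks up $e^{+(2a+\delta)\tau}$ inside the $\tau$-integral, not $e^{-(2a+\delta)\tau}$. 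With the correct sign (and $a<0$) the contraction constant is
\[
q(\delta)=\frac{m|a|}{|a|-\delta}+\frac{(1-m)b}{b+2|a|-\delta},
\]
so $q(0)=m+\dfrac{(1-m)b}{b+2|a|}$, \emph{not} $m/3+\cdots$. This is still $<1$ when $m<1$, and then your Neumann-series argument can be repaired for sufficiently small $\delta$; but at $m=1$ (which the paper explicitly allows in Proposition~\ref{well posedness at -infty}) one has $q(0)=1$ and $q(\delta)>1$ for every $\delta>0$, so the contraction route breaks down precisely at the endpoint.

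Your closing Laplace-transform remark, by contrast, is essentially the paper's actual proof: the authors change variables to the half-line, take the Laplace transform, compute the symbol $C(\theta)=1-\dfrac{ma}{a+\theta}-\dfrac{(1-m)b}{b+\theta}$, locate its zeros at $\theta_1=0$ and $\theta_2=-((1-m)a+mb)$, and observe that both lie strictly to the left of the line $\operatorname{Re}\theta=-a$ (here the stability condition \eqref{stability condition for well posedness} enters only to push $\theta_2$ into the left half-plane). This argument is insensitive to whether $m=1$ and does not rely on any contraction estimate; that is what it buys over the Neumann approach. If you want to keep a purely operator-theoretic proof, you would need either to exclude $m=1$ or to supplement the Neumann series with a separate treatment of that case.
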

\begin{proof}
Indeed equation (\ref{definition L}) can be seen as
\begin{equation}
\mathcal{L}\phi(\xi)=\phi(\xi)-mK_0\star \phi (\xi)-(1-m)K_+\star \phi (\xi),
\label{L as convolution}
\end{equation}
for the following convolution kernels
\begin{eqnarray*}
K_0(\xi)= a e^{-a\xi}\chi_{[0,+\infty)}(\xi),\\
K_+(\xi)=b e^{-b\xi}\chi_{[0,+\infty)}(\xi).
\end{eqnarray*}

We want to check the invertibility of $\mathcal{L}$ in (\ref{definition L}).\\
To begin our analysis consider first the regularity of the left hand side in (\ref{starting equation}). By definition we have that $\phi(\xi)=\sigma(\xi)-\sigma_0$ is $\mathcal{C}^3$, in fact inherits the same regularity of $H$ around $x_0$, in some interval of the form $(-\infty,T]$ for some $T\in\mathbb{R}$. Hence by the convolution structure of the equation (\ref{L as convolution}) we have that the left hand side, and hence the right hand side of (\ref{starting equation}) is $\mathcal{C}^3_\text{loc}$.\\
Since (\ref{L as convolution}) presents convolutions it seems reasonable to perform some change of variable in such a way to have our functions defined on the positive real line, hence apply the Laplace transform to obtain some information.\\
Setting $z=\xi-\kappa$ for $\kappa\geq 0$ equation (\ref{definition L}) reads as
$$
\phi(\xi)-ma\int_0^\infty e^{-a\kappa}\phi(\xi-\kappa)\d\kappa-(1-m)b\int_0^\infty e^{-b\kappa}\phi(\xi-\kappa)\d\kappa,
$$
which is again a convolution equation, and, moreover the convolution kernels didn't change structure. We are performing an asymptotic analysis for $\xi$ close to $-\infty$, hence it seems reasonable, at least at the moment, to consider $\xi$ bounded from above by some value $\xi_0$. Hence we can write $\xi=\xi_0-x$ for $x\geq 0$, setting
$
\phi(\xi)=\phi({\xi}_0-x)=\psi(x),
$
we can read the above equation as
 $
\psi(x)-ma\int_0^\infty e^{-a\kappa}\psi(x-\kappa)\d\kappa-(1-m)b\int_0^\infty e^{-b\kappa}\psi(x-\kappa)\d\kappa,
$
i.e.
$
\mathcal{L}\psi(x)=\psi(x)-mK_0\star \psi (x)-(1-m)K_+\star \psi (x),
$
which is an equation in convolution form defined on the positive real line.

Performing the substitution
$
W(\xi)=W({\xi}_0-x)=V(x),
$
equation (\ref{starting equation}) turn into
\begin{equation}
\mathcal{L}\psi(x)=V(x),
\label{transformed equation}
\end{equation}
which has the same regularity of (\ref{starting equation}) but is defined on positive numbers.\\
We can hence apply Laplace transform on both sides of (\ref{transformed equation}) obtaining
$
L\psi(\theta)(1-mLK_0(\theta)-(1-m)LK_+(\theta))=LV(\theta),
$
were $L\psi, LK_0, LK_+$ have respectively the domain
\begin{align*}
D(L\psi)&=\{\theta: \mbox{Re} \theta >a\},\\
D(LK_0)&=\{\theta: \mbox{Re} \theta >-a\},\\
D(L\psi)&=\{\theta: \mbox{Re} \theta >-b\}.
\end{align*}
 Hence we can express $L\psi (\theta)=LV(\theta)/C(\theta)$ as a meromorphic function defined on the half complex line $ D\left(LK_0\right)=\left\lbrace\theta: \mbox{Re} \theta >-a \right\rbrace
 $. Our aim is to invert the term on the right hand side of this previous equation via inverse Laplace transform.
 
In particular we want to show that we can express for this particular case the inverse Laplace transform as a residual evaluation, to do so first we have to prove that $|L\psi(\theta)<M/|\theta|^c$ for some $c>0$, hence
$$
|L\psi(\theta)|=\left|\int_0^\infty e^{-\theta x}\psi(x)dx\right|\\
=\left|\int_0^\infty\left[-\frac{1}{\theta}\frac{d}{dx}e^{-\theta x}\right]\psi(x)dx\right|\overset{\text{IbP}}{\leqslant}\left|\left. -\frac{1}{\theta}e^{-\theta x}\psi(x)\right|_0^\infty\right|+\frac{1}{\theta}\left|\int_0^\infty e^{-\theta x}\psi'(x)dx\right|\leqslant\frac{c}{\theta}.
$$
Where in the last inequality we have proceed as follows. Consider
$
\frac{\d}{\d x}\psi (x)=-\frac{\d}{\d \xi}\phi(\xi)=-\frac{\d}{\d \xi}\sigma(\xi),
$ 
and, since $\sigma(\xi)=\int H'(x)\partial_x R(x,\xi)\d x$ with $R$ defined in \eqref{R per contrazione} we obtain
$$
\frac{\d}{\d \xi}\sigma(\xi)=H''(X_+(\xi))\left(\sigma(\xi)-H'(X_+)\right)\\
=\left[H''(x_+)+\mathcal{O}\left(X_+(\xi)-x_+\right)\right]\mathcal{O}\left(X_+(\xi)-x_+\right)<c<\infty.
$$ 
 Now, we know that $\psi$ is continuous, we have to check were are localized the poles of $L\psi$ i.e. the zeroes of $C$ to understand if we can indeed invert the operator $\mathcal{L}$.\\
 Set the equation
 $$
 C(\theta)=1-\frac{ma}{a+\theta}-\frac{(1-m)b}{b+\theta}=0.
$$
After some algebra we obtain two roots
$
\begin{array}{cc}
\theta_1 =0, & \theta_2=-(1-m)a-mb<0,
\end{array}
$
which have real part strictly smaller than $-a$, in this way we obtain the following expression for $\psi$

$$
\psi(x)=2\pi\sum_{i=1,2}\res_{\theta=\theta_i}\left(e^{\theta x}\frac{LV(\theta)}{C(\theta)} \right),
$$
where this last equation is justified by \cite[Theorem 16.39]{apostol}.\\

\end{proof}

\begin{lemma}\label{funzione di green per operatore}
The equation
$
\mathcal{L}G(\xi)=\delta_0(\xi-a),
$
is solved for the function $G(\xi-a)$ where
\begin{equation}
\label{funzione Green}
G(x)=\left[\delta_0(x)+G_r(x)\right]\chi_{\mathbb{R}_+},
\end{equation}
with $G_r(x)=\left[c_1+c_2 e^{\theta_2x} \right]\chi_{\mathbb{R}_+}(x)$, $c_1\neq 0,\theta_2<0$.
\end{lemma}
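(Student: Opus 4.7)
The plan is to solve the fundamental solution problem for $\mathcal{L}$ by exploiting the translation invariance of the convolution operator and then inverting it via the Laplace transform, mimicking the strategy of the previous Lemma \ref{invertibilità L}. Since $\mathcal{L}$ commutes with translations, it suffices to construct $G_0$ solving $\mathcal{L}G_0(x)=\delta_0(x)$ on $\mathbb{R}_+$ and then set $G(\xi-a) := G_0(\xi-a)$; the claimed form then follows by shifting $x\mapsto \xi-a$.

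First I would rewrite $\mathcal{L}G_0=\delta_0$ using the convolution representation \eqref{L as convolution} and apply the (one-sided) Laplace transform, obtaining the algebraic relation $C(\theta)\,L G_0(\theta)=1$, where
\begin{equation*}
C(\theta)=1-\frac{ma}{a+\theta}-\frac{(1-m)b}{b+\theta}=\frac{\theta\bigl(\theta+(1-m)a+mb\bigr)}{(a+\theta)(b+\theta)}.
\end{equation*}
This rational form already appeared implicitly in Lemma \ref{invertibilità L} and gives the two roots $\theta_1=0$ and $\theta_2=-\bigl((1-m)a+mb\bigr)<0$ thanks to the stability condition \eqref{stability condition for well posedness}.

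Next I would compute $LG_0(\theta)=1/C(\theta)=(a+\theta)(b+\theta)\bigl/\bigl(\theta(\theta-\theta_2)\bigr)$ by partial fractions. The key observation is that numerator and denominator both have degree two, so a polynomial division is required before the decomposition; writing
\begin{equation*}
\frac{(a+\theta)(b+\theta)}{\theta(\theta-\theta_2)} = 1 + \frac{(a+b+\theta_2)\theta+ab}{\theta(\theta-\theta_2)} = 1 + \frac{c_1}{\theta}+\frac{c_2}{\theta-\theta_2},
\end{equation*}
and matching residues at the two poles yields $c_1 = ab/\bigl((1-m)a+mb\bigr)$ and $c_2 = ma+(1-m)b - c_1$. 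In particular $c_1>0$, so $c_1\neq 0$ as required. The constant $1$ in the decomposition is essential: it is precisely the obstruction that prevents $1/C(\theta)$ from decaying at infinity, and it forces a Dirac-mass contribution in the inverse transform.

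Finally I would invert term by term, using the elementary transforms $\mathcal{L}^{-1}[1]=\delta_0$, $\mathcal{L}^{-1}[1/\theta]=\chi_{\mathbb{R}_+}$ and $\mathcal{L}^{-1}[1/(\theta-\theta_2)]=e^{\theta_2 x}\chi_{\mathbb{R}_+}$, to obtain
\begin{equation*}
G_0(x)=\bigl[\delta_0(x)+c_1+c_2 e^{\theta_2 x}\bigr]\chi_{\mathbb{R}_+}(x),
\end{equation*}
which is exactly \eqref{funzione Green} after translating by $a$. The only delicate points I expect to deal with carefully are: (i) justifying the use of inverse Laplace transform on a symbol which does not vanish at infinity, which is handled by isolating the $\delta_0$-component via the polynomial division above; and (ii) verifying that both poles lie in the admissible half-plane $\{\operatorname{Re}\theta>-a\}\cap\{\operatorname{Re}\theta>-b\}$ where $G_0$ can be reconstructed via the Bromwich contour, which follows directly from the sign of $\theta_2$ and the stability assumption \eqref{stability condition for well posedness}. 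A direct substitution of the explicit $G$ into $\mathcal{L}G$ provides an independent check and closes the proof.
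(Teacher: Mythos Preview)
Your proof is correct and follows essentially the same Laplace--transform strategy as the paper. The only methodological difference is how the Dirac mass is isolated: the paper makes the ansatz $\varphi=\delta_0+G_r$ up front and then transforms the equation for the regular part $G_r$, obtaining $LG_r(\theta)=B(\theta)/\bigl(\theta(\theta-\theta_2)\bigr)$ with $B(\theta)=\theta(ma+(1-m)b)+ab$, whereas you transform the full equation $C(\theta)LG_0(\theta)=1$ and recover the $\delta_0$--contribution a posteriori via polynomial division of $1/C(\theta)$. The two routes give identical coefficients $c_1,c_2$ and are equally rigorous; yours is arguably a bit more systematic since no guess is needed.

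Two minor remarks. First, the paper prepends a uniqueness step: it shows that any solution of $\mathcal{L}\phi=0$ vanishing at $-\infty$ must be identically zero, which guarantees that the Green function is unique and in particular supported in $\{\xi\geqslant a\}$. Your direct--substitution check at the end confirms existence but not uniqueness, so a short sentence to this effect would round out the argument. Second, your sign claim ``$c_1>0$'' is off: with $a=H''(x_0)<0$, $b=H''(x_+)>0$ and the stability condition \eqref{stability condition for well posedness} one has $ab<0$ and $(1-m)a+mb>0$, hence $c_1=ab/\bigl((1-m)a+mb\bigr)<0$. This does not affect the conclusion $c_1\neq 0$, which is all the lemma asserts.
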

\begin{proof}
At first I want to show that every solution of
\begin{equation}
\left\lbrace
\begin{array}{lcr}
\mathcal{L}\phi (\xi)=0,&\mbox{for}&\xi\leq\xi_0,\\
\displaystyle\lim_{\xi\to -\infty}\phi(\xi)=0,
\end{array}
\right.
\label{equazione quasi omogenea}
\end{equation}
satisfies $\phi(\xi)=0$ for $\xi\leq\xi_0$.\\
Performing the usual substitution $\xi=\xi_0-x$ we obtain that (\ref{equazione quasi omogenea}) is equivalent to
$$
\left\lbrace
\begin{array}{lcr}
\mathcal{L}\psi (x)=0,&\mbox{for}&x\geq 0,\\
\displaystyle\lim_{x\to \infty}\psi(x)=0.
\end{array}
\right.
$$
Applying Laplace transform we obtain that $L\psi(\theta)C(\theta)=0$ if $D=\{\mbox{Re}\theta>-a\}$, but $C$ is different form zero in $D$, which implies that $L\psi$ have to be identically zero in $D$. This means that
$
L\psi(\theta)=\int_0^\infty e^{-\theta x}\psi(x)dx=0,
$
in $D$, now we can always see Laplace transform of a function $f$ as Fourier transform of an associated function, up to a constant i.e.
$$
Lf(\theta)=\int_0^\infty e^{-\theta x}f(x)dx=\int_{-\infty}^{+\infty}e^{-i\mbox{Im}\theta x}\left[\chi_{\mathbb{R}_+}(x)e^{-\mbox{Re}\theta x}f(x)\right]dx=\mathcal{F}(f_\theta)(\mbox{Im}\theta),
$$
this is a more flexible way to face the problem. Is easy to check that $\chi_{\mathbf{R}_+}(\bullet) e^{-\mbox{Re}\theta   \bullet}\psi(\bullet)=\psi_\theta\in L^2(\mathbb{R})$, hence as long as $\theta\in D$ we obtain that $\mathcal{F}(\psi_\theta)=0$ in $L^2$. Fourier transform is an invertible operator in $L^2$, hence $\psi_\theta=0$ in $L^2$, but since $\psi$ is continuous we obtain that $\psi_\theta$ is identically zero and hence $\psi(x)=0$ for $x\geq0$.\\
Consider now the equation
\begin{equation}
\label{equazione delta}
\mathcal{L}\phi(\xi)=\delta_a(\xi).
\end{equation}

Since $\psi=0$ for $\xi<a$, with the substitutions 
\begin{eqnarray*}
\xi-a=x,\\
\zeta +a = z,\\
\phi(\xi)=\varphi(\xi-a)=\varphi(x).
\end{eqnarray*}
Equation (\ref{equazione delta}) turns into the following
\begin{equation}
\varphi(x)-ma\int_0^xe^{-a(x-\zeta)}\varphi(\zeta)d\zeta-(1-m)b\int_0^xe^{-b(x-\zeta)}\varphi(\zeta)d\zeta=\delta_0(x).
\label{equazione delta formulazione esplicita}
\end{equation}

A straightforward application of transform methods may not be efficient, given the strong discontinuity presented in the problem given by the function $\delta_0$, in this spirit we try to substitute $\varphi$ with a suitable decomposition that may lead to a problem sufficiently regular to apply the Laplace transform.\\
To do so, consider $\varphi$ as $\varphi=\delta_0+G_r$. Inserting this function in such a form equation (\ref{equazione delta formulazione esplicita}) reads as:
$$
\mathcal{L}G_r(x)=mae^{-ax}+(1-m)be^{-bx},
$$
where the member on the right hand side is suitable for application of transform methods. Applying Laplace transform to both sides we obtain the following equation defined in the half complex plane $\{\mbox{Re}\theta>-a\}$
$$
\left(1-\frac{ma}{\theta+a}-\frac{(1-m)b}{\theta+b} \right)LG_r(\theta)=\frac{ma}{\theta+a}+\frac{(1-m)b}{\theta+b},
$$
which is equivalent, after some algebraic manipulation to the following
$$
LG_r(\theta)=\frac{\theta (ma+(1-m)b)+ab}{\theta^2+((1-m)a+mb)\theta}=\frac{B(\theta)}{C(\theta)},
$$
We have hence obtained that $LG_r$ can be expressed as a meromorphic function which has 2 simple poles located at $\theta_1=0,\theta_2<0$. Heaviside inversion theorem (see, for instance, \cite{heaviside} for a proof of such), can be applied, expressing $G_r$ as
$$
G_r(x)=\frac{B(\theta_1)}{C'(\theta_1)}e^{\theta_1 x}+\frac{B(\theta_2)}{C'(\theta_2)}e^{\theta_2 x}=c_1+c_2 e^{\theta_2x} .
$$

We've obtained that if $G(x)$ is solution to (\ref{equazione delta}) we have that $G(x)=\left[\delta_0(x)+G_r(x)\right]\chi_{\mathbb{R}_+}$ with $G_r(x)=0$ as long as $x<0$ and $G_r\in L^\infty(\mathbb{R})$ since $\theta_2<0$.\\
\end{proof}

 In the proof of Proposition \ref{well posedness at -infty} we use a fixed point argument, i.e. we want to show that the operator $T$ defined by mean of formula \eqref{operatorT} is indeed a contraction between Banach spaces. To do so we require the following estimates.
 \begin{prop}\label{prop estimate W}
 Let be $W$ be defined by means of \eqref{W}, than we have that there exists some positive $C=C(a,b,m,\delta)<\infty$ uniformly in $\delta$, and $\xi<\xi_0$ sufficiently negative such that
 \begin{equation}\label{estimate W}
|W(\xi)|\leqslant C e^{-2a\xi}
\end{equation}
 \end{prop}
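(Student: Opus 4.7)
The plan is to bound each of the four summands in \eqref{W} separately, exploiting the fact that $\rho_0, \rho_+, \rho$ defined in \eqref{rho_0}, \eqref{rho_+}, \eqref{def rho} are all \emph{second-order} Taylor remainders of $H'$ around $x_0$, $x_+$, $x_0$ respectively. Since $H\in\mathcal{C}^3_\text{loc}$ with $H'''$ locally absolutely continuous by hypothesis of Proposition~\ref{well posedness at -infty}, and since the arguments $X_0(\xi,\sigma)$, $X_+(\xi,\sigma)$, $X(\xi,K,\sigma)$ remain in a fixed compact $\mathcal{K}$ for $\xi\leqslant\xi_0$ (by Steps 1 and 3 of Lemma~\ref{dimostrazione condizione H}), one has uniform pointwise estimates
$$|\rho_0(\xi,\sigma)|\lesssim|X_0(\xi,\sigma)-x_0|^2,\qquad |\rho_+(\xi,\sigma)|\lesssim|X_+(\xi,\sigma)-x_+|^2,\qquad |\rho(\xi,K,\sigma)|\lesssim|X(\xi,K,\sigma)-x_0|^2.$$

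For the first two summands I would use the variation-of-constants identities $Y_0=\mathcal{U}_0\phi-\mathcal{U}_0\rho_0$ and $Y_+=\mathcal{U}_+\phi-\mathcal{U}_+\rho_+$ together with the standing bound $|\phi(\xi)|\leqslant Me^{-(2a+\delta)\xi}$ from the space $K_{M,T}(\delta)$. A direct convolution computation on the exponential kernels in $\mathcal{U}_0$, $\mathcal{U}_+$ yields $|\mathcal{U}_0\phi(\xi)|,|\mathcal{U}_+\phi(\xi)|\lesssim e^{-(2a+\delta)\xi}$, and a bootstrap argument (the quadratic feedback of $\rho_0,\rho_+$ is of strictly higher order) closes $|Y_0(\xi)|,|Y_+(\xi)|\lesssim e^{-(2a+\delta)\xi}$. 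Squaring via the Taylor bound of the first paragraph gives $|\rho_0|,|\rho_+|\lesssim e^{-2(2a+\delta)\xi}$, and since $\mathcal{U}_0,\mathcal{U}_+$ preserve this order of decay, the first two summands of $W$ are bounded by $Ce^{-2(2a+\delta)\xi}$. For $\xi\leqslant\xi_0$ and $\delta<|a|$ this is strictly dominated by $e^{-2a\xi}$, with constants uniform in $\delta$ as $\delta\to 0^+$.

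For the third and fourth summands, which involve integration in $K$ against the Gaussian weight $Q'(K)=\pi^{-1/2}e^{-K^2}$, the crucial ingredient is the decomposition \eqref{lin problem plus perturbative term}, $X(\xi,K,\sigma)-x_0=Ke^{-a\xi}+Y(\xi,K,\sigma)$, combined with the estimate $|Y(\xi,K,\sigma)|\leqslant C_K e^{-(2a+\delta)\xi}$ from Step 2 of Lemma~\ref{dimostrazione condizione H}. Squaring this decomposition gives
$$|X(\xi,K,\sigma)-x_0|^2\lesssim K^2 e^{-2a\xi}+|K|C_K e^{-(3a+\delta)\xi}+C_K^2 e^{-2(2a+\delta)\xi},$$
so that $|\rho(\xi,K,\sigma)-\rho_0(\xi,\sigma)|\lesssim (1+K^2)e^{-2a\xi}$ after absorbing the exponentially smaller corrections, and the same order survives the application of $\mathcal{U}_0$. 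Integration against $Q'(K)$ then leaves the $e^{-2a\xi}$ factor intact because $\int_\mathbb{R}(1+K^2)Q'(K)\,\d K<\infty$, producing exactly the exponent in \eqref{estimate W}; this also explains why the optimal rate is $e^{-2a\xi}$ and not the stronger $e^{-(2a+\delta)\xi}$ that appears in the first two summands.

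The main obstacle I anticipate is controlling the $K$-dependence of the constant $C_K$: for $|K|$ large the characteristic $X(\xi,K,\sigma)$ drifts away from $x_0$ toward $X_\pm(\xi)$, and the Taylor expansion around $x_0$ loses its quantitative accuracy. The remedy is to split the $K$-integral into a compact region $|K|\leqslant K_\ast$, where $X(\xi,K,\sigma)$ is close to $x_0$ and the preceding quadratic estimate applies uniformly, and a tail $|K|>K_\ast$, where one relies on the a priori inclusion $X(\xi,K,\sigma)\in[X_-(\xi),X_+(\xi)]\subset\mathcal{K}$ to obtain the crude bound $|\rho(\xi,K,\sigma)|\lesssim 1$; the super-polynomial decay of $Q'(K)$ then makes the tail contribution negligible, and yields an overall constant $C=C(a,b,m,\delta)$ that remains bounded as $\delta\to 0^+$.
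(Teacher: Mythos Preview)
Your strategy coincides with the paper's: both bound the four summands of $W$ via the quadratic Taylor estimate $|\rho_\bullet|\lesssim|X_\bullet-x_\bullet|^2$ together with the a~priori confinement $X(\xi,K,\sigma)\in(X_-(\xi),X_+(\xi))\subset\mathcal{K}$. The paper is terser: it uses the cruder $J$--space bound $|Y_{0/+}|\leqslant e^{(-a+\delta)\xi}$ from Lemma~\ref{dimostrazione condizione H} (rather than bootstrapping to $e^{-(2a+\delta)\xi}$), and for the $K$--integral it merely records the uniform bound $|X(K,\xi)|\leqslant x_+-x_-+1$ and asserts $|\int\rho\,Q'\,\d K|\leqslant\bar C_H e^{2(-a+\delta)\xi}$ without writing out the integration in $K$. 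Your treatment of that step is more explicit and correctly isolates the difficulty.

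There is, however, a genuine gap in your tail argument. With a \emph{fixed} cut--off $K_\ast$, the crude bound $|\rho(\xi,K,\sigma)|\lesssim 1$ on $\{|K|>K_\ast\}$ yields only a $\xi$--independent contribution $\int_{|K|>K_\ast}Q'(K)\,\d K\asymp e^{-K_\ast^2}$; since $-2a>0$ and hence $e^{-2a\xi}\to 0$ as $\xi\to-\infty$, a fixed positive constant cannot be absorbed into $Ce^{-2a\xi}$. The remedy is to take a $\xi$--dependent threshold, for instance $K_\ast(\xi)=\sqrt{2a\xi}=\sqrt{2|a|\,|\xi|}$, so that the Gaussian tail satisfies $e^{-K_\ast(\xi)^2}=e^{-2a\xi}$, while on the inner region $|K|e^{-a\xi}\leqslant K_\ast(\xi)e^{-a\xi}=\sqrt{2|a|\,|\xi|}\,e^{-|a|\,|\xi|}\to 0$ uniformly, so that the characteristics stay close to $x_0$ and the quadratic Taylor estimate applies with a $K$--independent constant. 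A small related point: Step~2 of Lemma~\ref{dimostrazione condizione H} gives $|Y(\xi,K,\sigma)|\leqslant C_K e^{(-a+\delta)\xi}$, not $e^{-(2a+\delta)\xi}$; your improved exponent would require the same bootstrap you performed for $Y_0,Y_+$, and one must then check that the resulting constant $C_K$ grows at most polynomially in $|K|$ so that $\int C_K^2\,Q'(K)\,\d K<\infty$.
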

 \begin{proof}
 Recall that $W$ is defined as follows
\begin{multline}\tag{\ref{W}}
W(\xi)=m\left( -a\mathcal{U}_0[\rho_0](\xi,\sigma)+\rho_0(\xi,\sigma)\right)+(1-m)\left(-b\mathcal{U}_+[\rho_+](\xi,\sigma)+\rho_+(\xi,\sigma)\right)+\\
+ma\int_\mathbb{R}\left(\mathcal{U}_0[\rho_0](\xi,\sigma)-\mathcal{U}_0[\rho](\xi,K,\sigma) \right)Q'(K)dK+m\int_\mathbb{R}\left(\rho(\xi,K,\sigma)-\rho_0(\xi,\sigma) \right)Q'(K)dK
\end{multline}
Recall as well that
\begin{equation}
\label{rho's}
\begin{array}{c}
\rho_0 =\mathcal{O}\left( Y_0^2\right)\\
\rho_+=\mathcal{O}\left( Y_+^2\right)\\
\rho=\mathcal{O}\left( \left(Ke^{-a\xi}+Y\right)^2\right)
\end{array}
\end{equation}
and $Y_0,Y_+,Y=o(1) $ as $\xi\to -\infty$. \\
Recall as well, as shown in Lemma \ref{dimostrazione condizione H}  that
$
\left|Y_{0/+}(\xi)\right|\leqslant e^{(-a+\delta)\xi},
$
hence
$
\left|\rho_{0/+}(\xi)\right|\leqslant C_H e^{2(-a+\delta)\xi}.
$
A straightforward computation shows moreover that
$
\left|\mathcal{U}_0[\rho_{0/+}](\xi,\sigma)\right|\leqslant \tilde{C}_H e^{2(-a+\delta)\xi}.
$
By the definition of $\rho(K,\xi)$ we know that
$
\rho(K,\xi)=\mathcal{O}\left(\left(X(K,\xi)-x_0\right)^2\right),
$
and, since 
$
\lim_{K\to \pm \infty}\left|X(K,\xi)\right|=\left|X_\pm(\xi)\right|<\infty,
$
we can say that the function $X(\cdot,\xi)$ is bounded as long as the extremal characteristics $X_\pm$ exist. In particular since $X_\pm =x_\pm +Y_\pm$ with $Y_\pm(\xi)\leqslant e^{(-a+\delta)\xi}$ we can say that
\begin{align*}
X_+(\xi)&\leqslant x_++\frac{1}{2},\\
X_-(\xi)&\geqslant x_--\frac{1}{2},
\end{align*}
hence we can bound uniformly $|X(K,\xi)|\leqslant x_+-x_-+1$, which implies that
$
\left|\frac{1}{\sqrt{\pi}}\int_\mathbb{R}\rho(K,\xi)e^{-K^2}\d K\right|\leqslant \bar{C}_H e^{2(-a+\delta)\xi}.
$
We point out the fact that $Q'(K)=\frac{1}{\sqrt{\pi}}e^{-K^2}$.\\
A straightforward computation as before shows hence that
$
\left|\frac{1}{\sqrt{\pi}}\int_\mathbb{R}\mathcal{U}_0\left[\rho\right](K,\xi)e^{-K^2}\d K\right|\leqslant \bar{\bar{C}}_H e^{2(-a+\delta)\xi}.
$
At this point, considering $W$ as in equation \eqref{W} we can hence assert that
\begin{equation}\label{estimate W}
|W(\xi)|\leqslant C e^{2(-a+\delta)\xi}
\end{equation}
for some positive $C=C\left(a,b,m,\delta,H\right)<\infty$ uniformly in $\delta$ and $\xi<\xi_0$ sufficiently negative.\\
 \end{proof}
 
 \begin{prop}\label{prop deltaW}
 Let $W$ be defined via equation \eqref{W}, and let $\phi_1,\phi_2\in K_{M,\xi_0}(\delta)$. Consider $H$ with the same properties as in Lemma \ref{dimostrazione condizione H}. We recall again that all the functions $\rho_i,Y_i,i=0,+$ have a direct dependence on the parameter $\sigma$, which is equivalent as having a dependence for a parameter $\phi\in K_{M,\xi_0}(\delta)$, than
 \begin{equation}\label{deltaW}
\left|W(\xi,\phi_1)-W(\xi,\phi_2)\right|\lesssim e^{-(2a+\delta)\xi}\cdot \left\|\phi_1-\phi_2\right\|_{K_{M,\xi_0}(\delta)}.
\end{equation}
\end{prop}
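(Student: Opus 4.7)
The plan is to reduce the Lipschitz bound on $W$ to an analogous Lipschitz control on the auxiliary functions $Y_0(\xi,\sigma)$, $Y_+(\xi,\sigma)$ and $Y(\xi,K,\sigma)$ in terms of $\phi$, and then to propagate these estimates through the four summands that define $W$ in \eqref{W}. The kernel estimates on $\mathcal{U}_0$ and $\mathcal{U}_+$ that will appear in the end are literally the ones already carried out in the proof of Proposition~\ref{prop estimate W}, so the heart of the matter is to show
\[
|Y_i(\xi,\phi_1)-Y_i(\xi,\phi_2)|\lesssim e^{-(2a+\delta)\xi}\,\|\phi_1-\phi_2\|_{K_{M,\xi_0}(\delta)},\qquad i\in\{0,+\},
\]
together with an analogous bound for $Y(\xi,K,\cdot)$ carrying at most a polynomial factor in $K$.

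First I would establish the Lipschitz estimate for $Y_0$ starting from the integral representation $Y_0(\xi,\sigma)=\mathcal{U}_0\phi(\xi)-\mathcal{U}_0\rho_0(\xi,\sigma)$, which gives
\[
Y_0(\phi_1)-Y_0(\phi_2)=\mathcal{U}_0[\phi_1-\phi_2]-\mathcal{U}_0\bigl[\rho_0(\phi_1)-\rho_0(\phi_2)\bigr].
\]
The first term is controlled by a direct computation against $|\phi_i(\eta)|\leq\|\phi_i\|_{K_{M,\xi_0}(\delta)}e^{-(2a+\delta)\eta}$, which, together with the kernel $e^{-a(\xi-\eta)}$, produces precisely the target weight $e^{-(2a+\delta)\xi}$. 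For the second term I use that $\rho_0=H'(x_0+Y_0)-H'(x_0)-H''(x_0)Y_0$ and apply a mean-value bound to $H''$, which is locally Lipschitz since $H\in\mathcal{C}^3_{\mathrm{loc}}$; this yields
\[
|\rho_0(\phi_1)-\rho_0(\phi_2)|\lesssim \bigl(|Y_0(\phi_1)|+|Y_0(\phi_2)|\bigr)\,|Y_0(\phi_1)-Y_0(\phi_2)|\lesssim e^{(-a+\delta)\xi}\,|Y_0(\phi_1)-Y_0(\phi_2)|.
\]
Setting $A=\sup_{\xi\leq\xi_0}e^{(2a+\delta)\xi}|Y_0(\xi,\phi_1)-Y_0(\xi,\phi_2)|$, this produces a self-referential inequality of the form $A\lesssim\|\phi_1-\phi_2\|_{K_{M,\xi_0}(\delta)}+e^{(-a+\delta)\xi_0}A$, and an absorption argument valid for $\xi_0$ sufficiently negative closes the estimate. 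The same reasoning, with $\mathcal{U}_+$ in place of $\mathcal{U}_0$, gives the bound for $Y_+$; for $Y(\xi,K,\cdot)$ the extra summand $Ke^{-a\xi}$ in $X(\xi,K,\sigma)-x_0$ forces one to carry a factor $(1+|K|)$ in the Lipschitz constant.

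With these three Lipschitz bounds in hand, the differences $\rho_0(\phi_1)-\rho_0(\phi_2)$, $\rho_+(\phi_1)-\rho_+(\phi_2)$ and $\rho(\xi,K,\phi_1)-\rho(\xi,K,\phi_2)$ inherit an additional exponentially small factor of order $e^{(-a+\delta)\xi}$ coming from the smallness of $Y_i$, so they are in fact strictly smaller than the target weight $e^{-(2a+\delta)\xi}$ on the right-hand side of \eqref{deltaW}. Substituting into \eqref{W} and applying the same kernel computations used to prove \eqref{estimate W}, each of the four summands of $W(\xi,\phi_1)-W(\xi,\phi_2)$ is bounded by $C\,e^{-(2a+\delta)\xi}\|\phi_1-\phi_2\|_{K_{M,\xi_0}(\delta)}$, and summing yields \eqref{deltaW}. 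The delicate point will be the two $K$-integrals in \eqref{W}: one needs the Gaussian $Q'(K)=\pi^{-1/2}e^{-K^2}$ to absorb the polynomial growth in $K$ carried both by the Lipschitz constant of $Y(\xi,K,\cdot)$ and by $\rho(\xi,K,\sigma)$ itself. This is where the uniform boundedness of $X(\xi,K,\sigma_i)$ over compact $K$-sets, guaranteed by Step~1 of Lemma~\ref{dimostrazione condizione H}, is essential, since it is what justifies applying the local Lipschitz bound on $H''$ uniformly in $K$.
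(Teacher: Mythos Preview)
Your argument is correct and follows the same overall architecture as the paper: reduce the Lipschitz control on $W$ to Lipschitz control on the $Y$-functions via the integral representations $Y_i=\mathcal{U}_i\phi-\mathcal{U}_i\rho_i$, exploit that the Taylor remainder $\rho_i$ depends on $Y_i$ with a Lipschitz constant that vanishes as $\xi\to-\infty$, and then close a self-referential inequality before plugging back into the four summands of $W$.

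The one genuine methodological difference is how the self-referential inequality is closed. You take the weighted supremum $A=\sup_{\xi\le\xi_0}e^{(2a+\delta)\xi}|Y_0(\xi,\phi_1)-Y_0(\xi,\phi_2)|$ and absorb directly, obtaining $A\lesssim\|\phi_1-\phi_2\|+e^{(-a+\delta)\xi_0}A$. The paper instead iterates the identity $Y_0=\mathcal{U}_0\phi-\mathcal{U}_0\rho_0$ to produce a Neumann-type series
\[
|\mathcal{U}_0(\rho_0(\phi_1)-\rho_0(\phi_2))|\le\sum_{n\ge 1}q^n\,|\mathcal{U}_0^{\,n+1}(\phi_1-\phi_2)|,
\]
with $q=\sup_{\xi\le\xi_0}L(\xi)$ small, and then sums the geometric series. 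These are two standard and equivalent ways of resolving the same fixed-point smallness; your absorption version is the more compact of the two. A minor point: for the $K$-dependent term you carry a polynomial factor $(1+|K|)$ and let $Q'(K)$ absorb it, whereas the paper sidesteps this by invoking the uniform bound $|X(\xi,K,\sigma)|\le x_+-x_-+1$ coming from $X(\xi,K)\to X_\pm(\xi)$ as $K\to\pm\infty$, so that the Lipschitz constant of $\rho(\cdot,K,\cdot)$ is in fact uniform in $K$. Either route works.
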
 
 \begin{proof}
The functions $\rho_i,Y_i,i=0,+$ have a direct dependence on $\phi\in K_{M,\xi_0}(\delta)$, in particular the dependence is given by
\begin{align}
Y_{0/+}&=\mathcal{U}_{0/+}\left[\phi
\right]-\mathcal{U}_{0/+}\left[\rho_{0/+}\right],\label{Y_0}\\
Y&=\mathcal{U}_{0}\left[\phi
\right]-\mathcal{U}_{0}\left[\rho\right],
\end{align}
and the functions $\rho_i$ are defined as in \eqref{rho's}.\\
We want to derive first some estimate for the element $\left|\mathcal{U}_0\left(\phi_1-\phi_2\right)(\xi)\right|$, with $\mathcal{U}_0\left[f\right](\xi)=\int_{-\infty}^{\xi} e^{-a(\xi-s)}f(s)\d s,$ hence
\begin{multline}\label{stimaphi}
\left|\mathcal{U}_0\left(\phi_1-\phi_2\right)(\xi)\right|=\left|\int_{-\infty}^{\xi} e^{-a(\xi-z)}\left(\phi_1(z)-\phi_2(z)\right)\d z\right|\\
\leqslant\left\|\phi_1-\phi_2\right\|_{K_{M,\xi_0}(\delta)}\cdot e^{-a\xi}\int_{-\infty}^{\xi} e^{-(a+\delta)z}\d z\leqslant\frac{1}{-(a+\delta)} e^{-(2a+\delta)\xi}\cdot\left\|\phi_1-\phi_2\right\|_{K_{M,\xi_0}(\delta)},
\end{multline}
this estimate will be useful in the following.\\
Since $H$ satisfies the same properties as the $H$ considered in Lemma \ref{dimostrazione condizione H} we can say that, for $\xi_0$ sufficiently negative,
$
\rho_0\left(\xi,\phi_i\right)=R\left(Y_0(\xi,\phi_i)\right) ,
$
where we decide to express $R\left(Y_0(\xi,\phi_i)\right)$ in its integral form, i.e.
$$
R\left(Y_0(\xi,\phi_i)\right)=\int_0^{Y_0(\xi,\phi_i)}\frac{H'''\left(x_0+\zeta\right)}{2}\left(Y_0(\xi,\phi_i)-\zeta\right)^2\d \zeta,
$$
since it gives a better idea of the regularity of the reminder. We can express the reminder in such a way since by our hypothesis in Proposition \ref{well posedness at -infty} $H'''$ is locally absolutely continuous. Since $Y_0$ is at least $\mathcal{C}^1$ (see Lemma \ref{dimostrazione condizione H}), then $R(Y_0)$ is $\mathcal{C}^1$ as well.\\
Consider at this point the function
$
R(x)=\int_0^x\frac{H'''(x_0+\zeta)}{2}\left(x-\zeta\right)^2\d \zeta.
$
$R\in\mathcal{C}^{0,1}_{\text{loc}}\cup \mathcal{C}^{1}_{\text{loc}}$, hence for $x,y$ in a compact set $\mathcal{K}$
$|R(x)-R(y)|\leqslant L(\mathcal{K})|x-y|.
$ We claim that
\begin{equation}
\label{lipschitz local}
\lim_{\mathcal{L}(\mathcal{K})\to 0, 0\in\mathcal{K}}L(\mathcal{K})=0.
\end{equation}
where $\mathcal{L}$ is the Lebesgue measure on $\mathbb{R}$.\\
Indeed, for locally differentiable functions $g$ on compact sets $\mathcal{K}$, we have that
$
\left\|g\right\|_{\mathcal{C}^{0,1}_{\mathcal{K}}}\leqslant \displaystyle\sup_{x\in\mathcal{K}} |g'(x)|,
$
which means we can bound the Lipschitz norm in term of the sup of the derivative in the compact set. This means that, if we prove that $R'(x)\to 0$ as $|x|\to 0$ we prove \eqref{lipschitz local}.\\
A computation shows that
$
R'(x)=x\int_0^x H'''(x_0+z)\d z-\int_0^x zH'''(x_0+z)\d z \xrightarrow{x\to 0}0,
$
since $H\in\mathcal{C}^3_{\text{loc}}$.\\
Now, $R\left( Y_0(\xi,\phi_i)\right)=\rho_0(\xi,\phi_i)$, this means we have obtained that
\begin{equation}\label{procedura}
\left|\rho_0\left(\xi,\phi_1\right)-\rho_0\left(\xi,\phi_2\right)\right|\leqslant L(\xi)\cdot\left|Y_0\left(\xi,\phi_1\right)-Y_0\left(\xi,\phi_2\right)\right|,
\end{equation}
with $L(\xi)=o(1)$ as $\xi\to -\infty$.

Moreover, since $Y_0$ is defined via equation \eqref{Y_0}, we can infer that
$$
\left|Y_0\left(\xi,\phi_1\right)-Y_0\left(\xi,\phi_2\right)\right|\leqslant \left|\mathcal{U}_0\left(\phi_1-\phi_2\right)(\xi)\right|+\left|\mathcal{U}_0\left(\rho_0\left(\phi_1\right)-\rho_0\left(\phi_2\right)\right)(\xi)\right|,
$$
obtaining,
$$
\left|\rho_0\left(\xi,\phi_1\right)-\rho_0\left(\xi,\phi_2\right)\right|\leqslant L(\xi)\Bigl(\left|\mathcal{U}_0\left(\phi_1-\phi_2\right)(\xi)\right|+\left|\mathcal{U}_0\left(\rho_0\left(\phi_1\right)-\rho_0\left(\phi_2\right)\right)(\xi)\right|\Bigr).
$$
Hence, considering the term $-a\mathcal{U}_0[\rho_0(\phi)](\xi)+\rho_0(\xi,\phi)$ in \eqref{W}, with these estimates, we can argue that
\begin{multline}\label{Schifezza}
\left| -a\left( \mathcal{U}_0\left(\rho_0\left(\phi_1\right)-\rho_0\left(\phi_2\right)\right)(\xi)\right)+\rho_0\left(\xi,\phi_1\right)-\rho_0\left(\xi,\phi_2\right)\right|\\
\leqslant L(\xi) \left|\mathcal{U}_0\left(\phi_1-\phi_2\right)(\xi)\right|+\left(-a+L(\xi)\right)\left|\mathcal{U}_0\left(\rho_0\left(\phi_1\right)-\rho_0\left(\phi_2\right)\right)(\xi)\right|,
\end{multline}
hence what's left to understand is how to bound the term $\left|\mathcal{U}_0\left(\rho_0\left(\phi_1\right)-\rho_0\left(\phi_2\right)\right)(\xi)\right|$.\\
Repeating the procedure in equation \eqref{procedura} we can argue that, there exists a $0<q<1$ that we can make as small as we want since 
$
q=\displaystyle\sup_{\xi\leqslant \xi_0}\left\lbrace L(\xi)\right\rbrace,
$ 
such that,
\begin{multline*}
\left|\mathcal{U}_0\left(\rho_0\left(\phi_1\right)-\rho_0\left(\phi_2\right)\right)(\xi)\right|\leqslant q\left|\mathcal{U}_0\left(Y_0(\xi,\phi_1)-Y_0(\xi,\phi_2)\right)\right|\\
\overset{\eqref{Y_0}}{=} q\left| \mathcal{U}_0^2\left(\phi_1-\phi_2\right)(\xi)-\mathcal{U}^2_0\left(\rho_0\left(\phi_1\right)-\rho_0\left(\phi_2\right)\right)(\xi)\right|\\
\leqslant\sum_{n=1}^N q^n \left| \mathcal{U}_0^{n+1}\left(\phi_1-\phi_2\right)(\xi)\right|+
q^N \left|\mathcal{U}^{N+1}_0\left(\rho_0\left(\phi_1\right)-\rho_0\left(\phi_2\right)\right)(\xi)\right|.
\end{multline*}
Is easy to check that
$$
\left|\mathcal{U}^{N+1}_0\left(\rho_0\left(\phi_1\right)-\rho_0\left(\phi_2\right)\right)(\xi)\right|\leqslant \left(\frac{1}{-a+2\delta}\right)^{N+1} e^{2(-a+\delta)\xi},
$$
which in turn implies the following
$$
q^N \left|\mathcal{U}^{N+1}_0\left(\rho_0\left(\phi_1\right)-\rho_0\left(\phi_2\right)\right)(\xi)\right|\leqslant\frac{1}{q} \left(\frac{q}{-a+2\delta}\right)^{N+1} e^{2(-a+\delta)\xi}\xrightarrow{N\to\infty}0,
$$
if $q<-a+2\delta$. At this point we can hence say that
\begin{equation}
\label{deltarhozero}
\left|\mathcal{U}_0\left(\rho_0\left(\phi_1\right)-\rho_0\left(\phi_2\right)\right)(\xi)\right|\leqslant \sum_{n=1}^\infty q^n \left| \mathcal{U}_0^{n+1}\left(\phi_1-\phi_2\right)(\xi)\right|,
\end{equation}
which means that we have bounded the term $\left|\mathcal{U}_0\left(\rho_0\left(\phi_1\right)-\rho_0\left(\phi_2\right)\right)(\xi)\right|$ from above with another term which we can evaluate thanks to equation \eqref{stimaphi}.\\
With the estimate in \eqref{stimaphi} we can bound from above the term on the right hand side of \eqref{deltarhozero} with
$$
\sum_{n=1}^\infty q^n \left| \mathcal{U}_0^{n+1}\left(\phi_1-\phi_2\right)(\xi)\right|\leqslant \sum_{n=1}^\infty \frac{1}{q}\left( \frac{q}{-(a+\delta)}\right)^{n+1} e^{-(2a+\delta)\xi}\cdot\|\phi_1-\phi_2\|_{K_{M,\xi_0}(\delta)},
$$
which, considering \eqref{deltarhozero} gives us 
\begin{equation}
\label{estimates rho zero final}
\left|\mathcal{U}_0\left(\rho_0\left(\phi_1\right)-\rho_0\left(\phi_2\right)\right)(\xi)\right|\leqslant C\left(q,a,\delta\right)e^{-(2a+\delta)\xi}\cdot\left\|\phi_1-\phi_2\right\|_{K_{M,\xi_0}(\delta)}.
\end{equation}
At this point we can hence plug the estimates in \eqref{estimates rho zero final} and \eqref{stimaphi} into \eqref{Schifezza}, this gives us the following
\begin{multline}\label{bound1}
\left| -a\left( \mathcal{U}_0\left(\rho_0\left(\phi_1\right)-\rho_0\left(\phi_2\right)\right)(\xi)\right)+\rho_0\left(\xi,\phi_1\right)-\rho_0\left(\xi,\phi_2\right)\right|\leqslant\\
k_1(q,a,\delta)e^{(-3a+2\delta)\xi}\|\phi_1-\phi_2\|_{K_{M,\xi_0}(\delta)}+k_2(q,a,\delta)e^{-(2a+\delta)\xi}\|\phi_1-\phi_2\|_{K_{M,\xi_0}(\delta)}\\
\leqslant k(q,a,\delta,q)e^{-(2a+\delta)\xi}\|\phi_1-\phi_2\|_{K_{M,\xi_0}(\delta)}.
\end{multline}
With the same procedure just performed, we can obtain the following bound
\begin{equation}\label{bound2}
\left| -b\left( \mathcal{U}_+\left(\rho_+\left(\phi_1\right)-\rho_+\left(\phi_2\right)\right)(\xi)\right)+\rho_+\left(\xi,\phi_1\right)-\rho_+\left(\xi,\phi_2\right)\right|
 \leqslant k(a,\delta,\tilde{q})e^{-(2a+\delta)\xi}\|\phi_1-\phi_2\|_{K_{M,\xi_0}(\delta)},
\end{equation}
and
\begin{equation}\label{bound3}
\left|\int_\mathbb{R}\left( -a\left( \mathcal{U}_0\left(\rho\left(\phi_1\right)-\rho\left(\phi_2\right)\right)(\xi,K)\right)+\rho\left(\xi,\phi_1,K\right)-\rho\left(\xi,\phi_2,K\right)\right)\d K\right|
\leqslant \tilde{k}(a,\delta,q)e^{-(2a+\delta)\xi}\|\phi_1-\phi_2\|_{K_{M,\xi_0}(\delta)}.
\end{equation}
At this point, considering the bounds \eqref{bound1}--\eqref{bound3} and the structure of $W$ which is given explicitly in \eqref{W} we can assert that
$$
\left|W(\xi,\phi_1)-W(\xi,\phi_2)\right|\lesssim e^{-(2a+\delta)\xi}\cdot \left\|\phi_1-\phi_2\right\|_{K_{M,\xi_0}(\delta)}.
$$
Concluding the estimate.
 \end{proof}

 \end{document}